\documentclass[11pt]{article}
\usepackage[T1]{fontenc}
\usepackage{lmodern}
\usepackage[margin=1in]{geometry}
\usepackage{amsmath,amssymb,amsthm,booktabs,longtable,array}
\usepackage{microtype}
\usepackage{xurl}
\usepackage[hidelinks]{hyperref}
\newtheorem{theorem}{Theorem}[section]

\newtheorem{lemma}[theorem]{Lemma}
\newtheorem{proposition}[theorem]{Proposition}
\newtheorem{corollary}[theorem]{Corollary}
\theoremstyle{definition}
\newtheorem{definition}{Definition}[section]
\theoremstyle{remark}
\newcommand{\cc}{\mathbin{\boxtimes}}
\newcommand{\gd}{\mathsf e}
\newcommand{\gap}{\Delta_{\mathrm{fp}}}
\newcommand{\DE}{\mathsf T_c}
\newcommand{\Fneq}{\mathcal F_{\ne\gd}}
\newcommand{\Fnt}{\mathcal F_{\mathrm{nt}}}
\newcommand{\E}{\mathbb E}
\newcommand{\Rk}{\mathcal R}
\DeclareMathOperator{\rank}{rank}
\title{Spatially Coupled MacKay--Neal Codes Achieve Capacity on BMS Channels at Fixed Degrees}
\author{Kenta Kasai\\Institute of Science Tokyo}
\date{}

\hypersetup{pdftitle={Spatially Coupled MacKay--Neal Codes Achieve Capacity on BMS Channels at Fixed Degrees},pdfauthor={Kenta Kasai},pdfsubject={MN codes, fixed degrees, sum-product decoding},pdfkeywords={MacKay-Neal codes, spatial coupling, threshold saturation, symmetric channels}}
\begin{document}
\maketitle
\begin{abstract}
We establish two degree-dependent results for spatially coupled MacKay--Neal codes on binary-input memoryless symmetric channels. The ensembles use uniform random smoothing and shorten both variable types outside the active chain. For $r=g=3$, every integer $\ell\geq4$, and each channel of capacity greater than $R=r/\ell$, there is a sequence of code realizations whose actual transmitted rate tends to $R$ and whose average sum-product bit error tends to zero. For $r=g=2$ and each $\ell\in\{3,4,5\}$, an interval of binary symmetric channels has capacity greater than $R$ but retains positive transmitted-bit error under terminated density evolution as chain length grows relative to coupling width. Both results follow from the signs of the same density-valued potential at uncoupled fixed points. The degree-three proof combines analytic bounds for $\ell\geq33$ with exact interval certificates for $4\leq\ell\leq32$, followed by threshold saturation and a projection argument for the actual rate. The degree-two proof analytically constructs a fixed point with negative potential and controls both boundary contributions. We relate the potential and the degree-two bifurcation condition to earlier statistical-mechanical predictions. The finite certificates and verification software are available in a versioned supplement.
\end{abstract}
\noindent\textbf{Keywords:} MacKay--Neal codes, spatial coupling, threshold saturation, symmetric channels.

\section{Introduction}
Spatially coupled low-density parity-check (LDPC) codes developed from constructions originally studied as LDPC convolutional codes. Time-varying periodic convolutional codes defined by sparse parity-check matrices were introduced in 1999 together with iterative decoding~\cite{FelstromZigangirov1999}. Subsequent constructions related quasi-cyclic LDPC block codes to time-invariant convolutional codes through circulant matrices and demonstrated improved iterative-decoding performance for the convolutional codes~\cite{TannerEtAl2004}. Density-evolution studies of terminated regular LDPC convolutional ensembles then showed substantial threshold improvements over the corresponding block ensembles on the binary erasure channel (BEC) and the binary-input additive white Gaussian noise (AWGN) channel~\cite{LentmaierEtAl2010}. Thus, the favorable decoding behavior was observed and analyzed under the convolutional-code terminology before its explanation through spatial coupling.

The spatial-coupling viewpoint describes this structure as a chain of sparse graphs connected across neighboring positions. The BEC threshold-saturation theorem, published in 2011, explained how coupling and termination can raise the sum-product threshold to the maximum-a-posteriori (MAP) threshold of the underlying uncoupled ensemble in the coupling limit~\cite{KudekarRichardsonUrbanke2011}. This mechanism was subsequently developed into threshold saturation and universal capacity achievement over binary-input memoryless symmetric (BMS) channels for spatially coupled regular LDPC ensembles~\cite{KudekarRichardsonUrbanke2013}. Approaching capacity in that construction allows the node degrees to increase.

A different question is whether one can approach capacity while keeping the degrees fixed. MacKay--Neal (MN) codes arose from the sparse-matrix constructions introduced in 1995 and developed further in 1999~\cite{MacKayNeal1995,MacKay1999}. They combine punctured variables, whose bits are not transmitted, with transmitted variables in a sparse two-type graph. Statistical-mechanical analyses subsequently predicted capacity-achieving performance under optimal decoding for suitable regular MN degrees on symmetric channels, using the replica-symmetric free energy~\cite{TanakaSaad2003}. The relation between that free energy and the potential used here is given by~\eqref{eq:rsU} in Section~\ref{sec:potential}.

A related development was the construction of Hsu--Anastasopoulos (HA) codes by concatenating an outer LDPC code with an inner low-density generator-matrix (LDGM) code. The capacity theorem published in 2010 established that such constructions can achieve capacity on BMS channels under maximum-likelihood (ML) decoding with bounded graphical complexity, measured by the number of graph edges per information bit~\cite{HsuAnastasopoulos2010}. MN and HA constructions are related by code duality; their sparse representations retain punctured variables even when eliminating those variables yields dense matrices on the transmitted coordinates~\cite{KasaiSakaniwa2011}. These results motivated the search for bounded-degree constructions with capacity-achieving iterative decoding. Throughout this paper, belief propagation (BP) denotes the sum-product algorithm (SPA).

Spatially coupled MN and HA codes were introduced in 2011, and density evolution gave numerical evidence of thresholds close to capacity on the BEC~\cite{KasaiSakaniwa2011}. A bounded-degree BEC capacity result was established for the $(k,2,2)$ MN family~\cite{ObataJianKasaiPfister2013}. For the $(\ell,3,3)$ family, capacity achievement was proved at fixed degrees by establishing positivity of the uncoupled potential at the relevant fixed points and applying threshold saturation~\cite{OkazakiKasai2014}. The analysis was extended to generalized erasure channels with memory~\cite{FukushimaOkazakiKasai2015}. 

For the binary-input AWGN channel, thresholds close to capacity were estimated for spatially coupled MN protographs using the reciprocal-channel approximation~\cite{MitchellKasaiLentmaierCostello2012}. More recently, near-capacity thresholds have been studied for rate-adaptive constructions that concatenate an outer distribution matcher with a spatially coupled inner code~\cite{ZahrLiva2025}. 

The MN/HA structure has also been applied to quantum coding. Nested Calderbank--Shor--Steane (CSS) code pairs constructed from MN and HA codes have been shown to have positive asymptotic rate and relative distances bounded away from zero with fixed degrees in their punctured sparse representations; specified choices of these degrees attain the CSS Gilbert--Varshamov (GV) distance bound~\cite{KasaiQuantumGV2026}. A second study derived a five-message density-evolution recursion for spatially coupled MN/HA CSS codes on the quantum erasure channel. With an ideal decoding seed and equal constituent design rates on the $X$ and $Z$ sides, its threshold-saturation analysis establishes the hashing-bound threshold at the density-evolution level~\cite{KasaiQuantumErasure2026}. These works connect the MN/HA structure to quantum distance bounds and iterative-decoding thresholds, respectively.

The present paper focuses on classical communication over BMS channels with the uniformly smoothed regular MN ensemble. Its proof follows the BEC strategy of evaluating trivial fixed points explicitly and controlling the potential at all remaining fixed points. This distinction leads to capacity achievement at degree three and to the degree-two obstruction stated below.

\subsection{Two degree regimes}
In the $(\ell,r,g)$ notation, punctured variables have degree $\ell$, transmitted variables have degree $g$, and each check has $r$ sockets for punctured variables and $g$ for transmitted variables. The design rate $R=r/\ell$ is obtained by counting variables and check equations before termination, treating the equations as independent and assuming no dimension loss under puncturing. The denominator counts only transmitted bits. Section~\ref{sec:rate} proves that the actual transmitted rate of the selected code sequence tends to $R$. Section~\ref{sec:model} defines the uniformly smoothed ensemble, with both variable types shortened outside the active chain.

\begin{theorem}[Capacity achievement at degree three]\label{thm:main}
Set $r=g=3$, fix an integer $\ell\geq4$, and let $R=r/\ell$. On every BMS channel $c$ whose capacity satisfies $C(c)>R$, the terminated, uniformly smoothed spatially coupled $(\ell,r,g)$-MN ensembles admit a sequence of code realizations whose actual transmitted rate tends to $R$ and whose average BP bit error probabilities for both punctured and transmitted bits tend to zero. The degree parameters $\ell,r,g$ remain fixed throughout the sequence.

For every $\delta>0$ with $R+\delta\leq1$, a common finite coupling width suffices for density evolution to converge to perfect information on every finite active chain and every BMS channel with $C(c)\geq R+\delta$.
\end{theorem}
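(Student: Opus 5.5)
The proof follows the potential-function route to threshold saturation for vector (two-type) density evolution on BMS channels, with the abstract as a roadmap. First I would write the uncoupled density-evolution recursion for the uniformly smoothed $(\ell,3,3)$-MN ensemble. It has two density-valued messages, one for punctured and one for transmitted variables, and the channel density enters only at transmitted nodes. I would then define the density-valued potential $U(\cdot;c)$ through the $\cc$ and check-convolution operators and the entropy functional. Its stationary points are exactly the density-evolution fixed points, and its value at the trivial fixed point is normalized so that it is expressed through $C(c)-R$. The coupled recursion is a spatial average of the uncoupled one with uniform smoothing. By the general threshold-saturation theorem for coupled systems with a potential, the key condition is then the energy-gap statement: for every BMS $c$ with $C(c)\ge R+\delta$, every nontrivial fixed point $x\in\Fnt$ of the uncoupled recursion satisfies $U(x;c)\ge \gap(\delta)>0$, uniformly in $c$. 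The gap also has to be positive at the non-perfect trivial fixed points, where the punctured messages carry no information and the transmitted ones are only channel densities; there the potential equals a positive multiple of $C(c)-R$.

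The main obstacle is the positivity of $U$ at all remaining fixed points, uniformly over BMS channels. My approach is to eliminate the channel dependence using the fixed-point equations. The transmitted-variable message at a fixed point is a $\cc$-product of the channel and the incoming check messages, so $U$ can be rewritten as an entropy functional of the check message densities plus a term involving $C(c)$. I would then use the extremality of the BEC and BSC among densities with a given entropy, together with the duality/Jensen-type entropy inequalities for check and variable convolutions, to reduce to a scalar inequality in the entropies $h$ of the fixed-point messages and in $\ell$. For $\ell\ge33$ I would prove this scalar inequality analytically: in that regime the rate is small, and Taylor bounds of $1-(1-h)^{2}$-type expressions suffice. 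For $4\le\ell\le32$ I would verify the scalar inequality by interval arithmetic over a compact parameter box, using the certificates in the supplement. Bounding the near-trivial region separately, via a local expansion of $U$ near the trivial fixed points, gives the uniform gap.

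With the gap in hand, the threshold-saturation theorem supplies a coupling width $w(\delta)$, independent of chain length, such that coupled density evolution converges to perfect information on every finite active chain. The shortening outside the chain serves as the perfect seed at the boundary, which gives the second paragraph of the theorem. Standard concentration then implies that the average BP bit error, for both punctured and transmitted bits, tends to zero as the blocklength and iteration count grow.

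For the rate, termination loses $O(w/L)$ in design rate. The actual rate could exceed or fall short of the design rate because of rank deficiency or dimension loss under puncturing. I would lower-bound the rank of the check matrix and show that the projection from the code onto the transmitted coordinates is injective up to $o(n)$ dimensions. The argument is that BP success implies that the punctured bits are determined by the transmitted bits, since vanishing bit-erasure and bit-error probability forces a small conditional entropy. An averaging argument then selects realizations that have both vanishing error and rate tending to $R$. Letting $L\to\infty$ with $w$ fixed completes the proof.
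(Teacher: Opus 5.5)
Your rate argument has a genuine gap on the upper side. The theorem requires $k_{\mathrm{tx}}/n\to R$, so besides the lower bound you also need $k_{\mathrm{tx}}\leq n(R+o(1))$. Your lower bound matches the paper's projection argument: check counting gives $k_{\mathrm{ext}}\geq n(R-(w-1)/L)$, and $\dim\ker A=H(U\mid V)\leq H(U\mid Y)=o(n)$ by coordinatewise Fano from vanishing punctured-bit error. For the upper bound you only say you would ``lower-bound the rank of the check matrix,'' and you give no mechanism for doing so. A rank bound for the random sparse matrix $[A\ B]$ is a hard problem in its own right. Reliability on $c$ alone yields only $k_{\mathrm{tx}}\leq nC(c)+o(n)$, which is not enough because $C(c)>R$.

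The paper instead makes the \emph{same} realization reliable also on an auxiliary $\mathrm{BEC}(\epsilon_j)$ with $1-\epsilon_j\downarrow R$. This gives $k_{\mathrm{tx}}=H(V)\leq n(1-\epsilon_j)+nh_2(e_{v,j})$. It requires three things:
\begin{enumerate}
\item $\gap(\mathrm{BEC}(\epsilon))>0$ for every $\epsilon<1-R$, which Theorem~\ref{thm:gap} supplies.
\item Growing widths $w_j\to\infty$, because $\gap(\mathrm{BEC}(\epsilon_j))\leq 1-\epsilon_j-R\to0$, followed by $L_j\gg w_j$.
\item A Markov-inequality selection on the sum of the four bit error probabilities (two bit types on two channels).
\end{enumerate}
Your closing step, letting $L\to\infty$ with $w$ fixed on $c$ alone, cannot produce the matching upper bound.

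The positivity step is also not specified in a form that could be certified. Reducing to a scalar inequality in entropies via BEC/BSC extremality does not by itself control $U$. The check outputs $q_1,q_2$ and $s$ of the same pair $(a,b)$ enter with opposite signs, so you would have to bound them by opposite extremes, BSC for some and BEC for the other. You give no reason such a bound stays positive near the capacity boundary. Nor do you say which necessary conditions a fixed point's scalar statistics satisfy.

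The paper's key ingredients are these:
\begin{itemize}
\item the fixed-point identity $U=S+(I(b)-R)D$, in which $(I(b)-R)D>0$;
\item the exact moment series $S=\sum_k\gamma_kA_k[t-h_t(A_k,B_k)]$, bounded termwise by $G_t(m,n)$ using only $m=M_1(a)$, $n=M_1(b)$, $t=I(b)$;
\item necessary conditions from BSC replacement, both at fixed mean squared reliability (convexity of $\psi$) and at fixed information.
\end{itemize}
The supplement's certificates verify exactly the paper's six rules on $(m,n)$ rectangles, not an entropy inequality of your kind.

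Separately, a local expansion of $U$ near $\gd$ cannot give a uniform gap, since $U(\gd;c)=0$. What you need is that no fixed point other than $\gd$ lies near it; this is the Bhattacharyya contraction $\eta(\DE(x))\leq(d\eta)^{\mu}$ of Lemma~\ref{lem:separation}. You then need joint compactness over $\{c:C(c)\geq R+\delta\}$.

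Two smaller points. First, $b=c\star q_2^{\star(g-1)}$ is a variable convolution, not a check convolution. Second, no off-the-shelf saturation theorem covers this two-type density recursion with both variable types shortened. The paper derives one, and it needs a descent lemma because the comparison profile's endpoint satisfies only $p\preceq\DE(p)$.
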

The theorem achieves the Shannon boundary at the discrete rates $r/\ell$. Its code realizations may depend on the target channel. Corollary~\ref{prop:widthmain} gives a sufficient width in terms of a positive potential gap; Section~\ref{sec:rate} proves the actual-rate and finite-code claims. The common-width statement concerns the density-evolution limit, in which the section size tends to infinity before the iteration count.

\begin{theorem}[A degree-two obstruction above the design rate]\label{thm:degreetwo}
Set $r=g=2$ and $R=r/\ell$. For each $\ell\in\{3,4,5\}$, there is a nonempty open interval of BSC crossover probabilities whose channels $c$ satisfy $C(c)>R$ and the following property. Writing $\overline P_{\mathrm{tx}}(L,w;c)$ for the limiting average transmitted-bit error of terminated density evolution initialized with channel observations, every sequence $L/w\to\infty$ satisfies
\[
\liminf\overline P_{\mathrm{tx}}(L,w;c)>0.
\]
The coupling width may grow along the sequence.
\end{theorem}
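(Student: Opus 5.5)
The plan is to reduce Theorem~\ref{thm:degreetwo} to a sign statement about the density-valued potential of the uncoupled $(\ell,2,2)$-MN recursion. Coupled density evolution is a gradient-like system for this potential. If an uncoupled fixed point $x^\ast$ exists at which the potential is strictly below its value at the perfect-information point, then the terminated chain cannot converge to perfect information in the bulk once $L/w\to\infty$.

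\textbf{Step 1: construct the fixed point.} For a BSC with crossover $p$, I would construct a nontrivial uncoupled fixed point analytically, in the form of the degree-two bifurcation branch. Transmitted variables of degree $g=2$ and check sockets $r=2$ make the check-to-punctured recursion essentially a composition of channel-symmetric convolutions. In this regime a family of symmetric densities (for instance, BSC-like or two-point densities parametrized by one scalar) is closed under the recursion up to a monotone ordering. I would exhibit $x^\ast$ by a one-dimensional fixed-point argument, using the intermediate value theorem on the scalar parameter, or by degradation sandwiching followed by monotone limits, for $p$ slightly below the Shannon point $h_2(p)=1-R$.

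\textbf{Step 2: show the potential is negative.} Next I would evaluate the potential at $x^\ast$ and show that $U(x^\ast;c)<U(\gd;c)$ on an open interval of $p$ with $1-h_2(p)>R$. The idea is that $U(\gd;c)-U(x^\ast;c)$ is continuous in $p$. At capacity it is strictly positive, because of the relation~\eqref{eq:rsU} to the replica-symmetric free energy: the degree-two ensembles have an RS MAP threshold strictly below Shannon, matching the statistical-mechanical prediction. Continuity then extends the strict inequality to a neighborhood, and intersecting with $C(c)>R$ gives the open interval. For each $\ell\in\{3,4,5\}$ this is a finite explicit computation with entropy functionals of two-point densities. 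I would do it in closed form and check the sign analytically, with interval arithmetic as a fallback.

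\textbf{Step 3: transfer to the coupled chain.} Suppose, for contradiction, that along a subsequence the transmitted-bit error tends to zero. Then the terminated DE fixed point $\underline x$ is close to $\gd$ in most positions. I would compare the coupled potential of $\underline x$ with that of the constant profile $x^\ast$ placed in the interior and ramped to $\gd$ over $O(w)$ positions near each boundary. The bulk term gains $\Theta(L)\,(U(\gd)-U(x^\ast))>0$, while the two boundary ramps cost at most $O(w)$. The coupled fixed point reached from channel initialization is the largest fixed point in degradation order. By the monotonicity and minimality properties of the coupled potential among fixed points dominated by the initialization, it must therefore be degraded with respect to a profile near $x^\ast$ in a positive fraction of the positions. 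This is the standard converse to threshold saturation. The conclusion is that a positive fraction of positions carry error bounded below by the error of $x^\ast$, which is positive because $x^\ast\ne\gd$. Hence $\liminf \overline P_{\mathrm{tx}}>0$ whenever $L/w\to\infty$, regardless of how $w$ grows.

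\textbf{Main obstacle.} I expect the hardest part to be the boundary control in Step 3. Shortening both variable types outside the active chain changes the effective boundary densities, so the $O(w)$ ramp cost must be bounded uniformly for density-valued messages, with constants independent of $w$. The natural first attempt is to bound each boundary contribution by the maximal single-position potential variation times $2w$, using Lipschitz continuity of the potential in the Bhattacharyya or entropy functional.
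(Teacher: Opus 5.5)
Your overall architecture matches the paper's: find a nontrivial uncoupled fixed point with $U<0$, then compare the bulk contribution with the boundary contribution in the terminated chain. However, Steps 1--2 do not produce such a fixed point rigorously.

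\textbf{Steps 1--2.} BSC and two-point densities are not closed under variable convolution. A nontrivial fixed point on a BSC has no reason to be finitely supported, so there is no closed-form entropy computation to do, and no finite object for interval arithmetic to enclose. The sign at capacity is taken from the replica-symmetric prediction, which is not a proof. The identity \eqref{eq:rsU} is purely algebraic and fixes no sign. Your continuity argument in $p$ also presupposes a continuous branch of fixed points, which you have not established.

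The missing idea is variational. Freeze $b=c$ and minimize $V(a;c)$ over \emph{all} BMS densities. Along $(1-s)a+sa'$ the first variation equals $-r\sum_k\gamma_kM_k(a-a')^2M_k(c)^2$, so every minimizer satisfies the punctured fixed-point equation. A conditional-mutual-information argument gives $U(a,c;c)\le V(a;c)-H(c)$. Monotone iteration from $(a_*,c)$, with potential descent, then yields a genuine fixed point $p$ with $U(p;c)\le V(a_0;c)-H(c)$ for any trial density $a_0$.

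The trial density is $a_\eta=(1-\eta)\Delta_0+\eta d$, with $d$ a weak BSC. Its second-order expansion gives $V<1-R$ whenever $(\ell-1)M_1(c)^2>1$. This condition is verified by rational inequalities at the capacity-boundary BSC for $\ell=3,4,5$, where $H(c_0)=1-R$. Continuity is then applied to $V(a_0;\cdot)$ for this fixed trial density, not to a branch of fixed points.

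\textbf{Step 3.} There is no principle that the channel-initialized coupled fixed point minimizes the potential among dominated fixed points. A potential comparison gives only an averaged entropy bound, not degradation relative to $x^\ast$ on a positive fraction of positions. The working argument is direct and needs no contradiction:
\begin{itemize}
\item Initialize the terminated recursion at $p$ itself. After one update, shortening produces $\lambda_jp+(1-\lambda_j)\gd\preceq p$ automatically.
\item The iteration is monotone, so the finite terminated potential $\mathcal E_{L,w}$ decreases from its initial value $NQ(p)-LF_c(\mathsf g(p))=-Lu+(w-1)Q(p)$.
\item Since $Q\ge0$, this gives $\frac1L\sum_iF_c(y_i)\ge u-(w-1)Q(p)/L$.
\item The bound passes to the channel initialization by degradation, because $(\Delta_0,c)\succeq p$.
\end{itemize}
The boundary cost you identify as the main obstacle is immediate from $0\le Q\le1$; no Lipschitz estimate is needed.

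What your plan omits is that $F_c$ includes the punctured posterior entropy, so a lower bound on $F_c$ does not yet bound the transmitted-bit error. The paper writes $q_1=q_0\cc b$ and $q_2=q_0\cc a$ to obtain $Z(q_1)\le2Z(q_2)$. It then uses $\ell\ge g$ and $Z(c)>0$ to bound $F_c(y_i)$ by a multiple of $Z(c\star y_{i,2}^{\star g})$. It finishes with $Z\le2\sqrt{P_e}$ and Jensen's inequality.
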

Section~\ref{sec:degreetwo} proves this result by constructing an uncoupled fixed point of negative potential and bounding the effect of both shortened boundaries. The density-evolution limit takes the section size to infinity at each fixed iteration count, then takes the iteration limit for fixed $L,w$. Theorem~\ref{thm:degreetwo} gives an obstruction in this order of limits. Its conclusion concerns the specified ensemble and decoding recursion; the degree-two MAP threshold remains a separate question.

These two theorems distinguish degrees three and two on general symmetric channels, although both families achieve the BEC capacity. The statistical-mechanical analysis of uncoupled MN codes already predicts the three BSC exceptions in Theorem~\ref{thm:degreetwo}, together with capacity-boundary transitions for the degree-three regime~\cite{TanakaSaad2003}. The parameters in that analysis satisfy $(K,C,L)=(r,\ell,g)$. The potential in Section~\ref{sec:potential} is an explicit normalization of the replica-symmetric free-energy expression in that analysis, and Section~\ref{sec:degreetwo} derives the existence of a fixed point with negative potential from the corresponding local bifurcation condition. The degree-two proof constructs a fixed point with negative potential that satisfies both density-evolution equations, without assuming replica symmetry, and bounds the transmitted SPA density-evolution error after termination.

\subsection{From erasure messages to general symmetric messages}
For Theorem~\ref{thm:main}, the degree-three BEC proof supplies the structure of the argument. That proof first separates trivial and nontrivial fixed points. The trivial values are explicit, while positivity at every nontrivial fixed point requires a separate argument. Two erasure probabilities describe density evolution exactly, and parameterizing the nontrivial fixed points reduces their potential positivity to a polynomial sign problem in one variable. Sturm's theorem handles $3\leq\ell\leq164$, and analytic inequalities handle $\ell\geq165$~\cite{OkazakiKasai2014}. The finite root counts establish a sign on an entire interval.

The BMS proof uses the same classification. The trivial fixed points in Definition~\ref{def:fixedpoint-sets} are $\gd=(\Delta_\infty,\Delta_\infty)$ and $(\Delta_0,c)$, with potential values $0$ and $C(c)-R$, respectively (Section~\ref{sec:potential}, equation~\eqref{eq:trivial-potential}). Thus, when $C(c)>R$, the remaining task is to prove $U(a,b;c)>0$ for every $(a,b)\in\Fnt(c)$. A single erasure probability or reliability statistic does not determine a general BMS message density, so we retain the density-evolution equations~\eqref{eq:DE} of Section~\ref{sec:density-evolution}.

From this infinite-dimensional fixed-point condition, we extract finitely many inequalities that every fixed point must satisfy. Section~\ref{sec:bounds} uses the moments and mutual information of Definition~\ref{def:bms-measures} to write
\[
m=M_1(a),\qquad n=M_1(b),\qquad t=I(b).
\]
Here $m,n$ are the mean squared absolute reliabilities of messages from punctured and transmitted variables, respectively, and $t$ is the mutual information of the transmitted-variable message. Check-convolution moments multiply exactly by~\eqref{eq:moments}, so~\eqref{eq:q} gives $M_1(q_1)=m^{r-1}n^g$. For variable convolution, we apply the comparison inequality of Section~\ref{sec:extremal-replacement} (Lemma~\ref{lem:BSC}). At a fixed point, the updated density equals $a$, so a lower bound on its updated moment cannot exceed $m$. This gives the necessary condition $m\geq\Rk_{\ell-1}(m^{r-1}n^g)$ in~\eqref{eq:cm}. The function $\Rk_k(q)$ is the mean squared reliability obtained by combining $k$ BSC observations with squared reliability $q$, conditionally independent given the same bit, as calculated in~\eqref{eq:Rformula}. Similarly, the transmitted-variable update and the information comparison give~\eqref{eq:ct}, while entropy bounds give~\eqref{eq:ch}. Lemma~\ref{lem:constraints} collects these necessary conditions.

An enclosure in this sense is not an exact scalar parameterization of fixed points. The implication is one-way: if a density fixed point exists, its statistics $(m,n,t)$ must satisfy these inequalities. For example, if $\Rk_{\ell-1}(m^{r-1}n^g)>m$ at a scalar pair $(m,n)$, no fixed point can have those statistics. Likewise, if the lower bound on $t$ in~\eqref{eq:ct} exceeds $n$, it contradicts $t\leq n$ and excludes that pair. Conversely, satisfying all these inequalities need not produce a density pair satisfying both updates in~\eqref{eq:DE}. The remaining scalar region therefore contains the statistics of every actual fixed point, and may also contain points that correspond to none. The BSC comparisons supply bounds; they do not replace the original density evolution by a recursion restricted to BSC messages.

On the remaining region, we use the fixed-point decomposition $U=S+(t-R)D$ in~\eqref{eq:decomp}. The functionals $D$ and $S=S_{r,g}(a,b)$ are defined in~\eqref{eq:D} and~\eqref{eq:S}; Lemma~\ref{lem:reduction} gives $(t-R)D>0$ when $C(c)>R$ and $(a,b)\ne\gd$. Although $S$ depends on all message moments, equations~\eqref{eq:Smoment}--\eqref{eq:Gbound} bound it below using $m,n,t$ by allowing the higher moments to vary over their entire permitted ranges. The bound therefore applies to every density fixed point with the same scalar statistics. Proving $S\geq0$ suffices for $U>0$. Even when the lower bound on $S$ is negative, a positive lower bound after adding $(t-R)D$ still suffices (Section~\ref{sec:certificates}, equation~\eqref{eq:positiveU}).

For the degree-three proof, Lemmas~\ref{lem:weaktrans} and~\ref{lem:weak} treat $n\leq14/25$ and $m\leq4/5$, respectively, by analytic bounds. Proposition~\ref{prop:large} in Section~\ref{sec:large-degrees} handles the remaining region for $\ell\geq33$. For $4\leq\ell\leq32$, the closed rectangle $\mathcal B_\ell$ in~\eqref{eq:root} contains the $(m,n)$ statistics of every fixed point still to be treated. We do not discretize $t$ independently: on each rectangle, we use its lower bound $\tau$ from~\eqref{eq:boxdefs}. The six tests in Section~\ref{sec:certificates} prove that each smaller rectangle either contains no fixed point other than $\gd$, or has $U>0$ at every fixed point it contains. Proposition~\ref{prop:finite} guarantees complete coverage, including shared boundaries, with no unclassified region. Because the bounds also cover scalar points that correspond to no fixed point, locating or counting the actual fixed points is unnecessary. Table~\ref{tab:bec_correspondence} summarizes these steps and their locations in the proof.

\begin{table}[htbp]
\centering\small
\caption{Correspondence between the BEC proof~\cite{OkazakiKasai2014} and the present BMS capacity proof for $r=g=3$. References in the last column point to this paper.}\label{tab:bec_correspondence}
\begin{tabular}{@{}>{\raggedright\arraybackslash}p{.15\linewidth}>{\raggedright\arraybackslash}p{.27\linewidth}>{\raggedright\arraybackslash}p{.50\linewidth}@{}}
\toprule
Proof step & BEC & General BMS channels\\\midrule
Message description & Two erasure probabilities describe the densities exactly. & Retain both densities $a,b$ and both updates: Section~\ref{sec:density-evolution}, equations~\eqref{eq:q}--\eqref{eq:DE}.\\[5pt]
Fixed-point classification & Evaluate the two trivial points; parameterize the nontrivial points. & Use $\Fnt(c)$ (Definition~\ref{def:fixedpoint-sets}); evaluate $U(\gd;c)=0$ and $U(\Delta_0,c;c)=C(c)-R$ (Section~\ref{sec:potential}, \eqref{eq:trivial-potential}).\\[5pt]
Scalar reduction & Reduce nontrivial fixed-point positivity to a one-variable polynomial sign problem. & Use $m=M_1(a)$, $n=M_1(b)$, $t=I(b)$; derive necessary moment, information, and entropy bounds (Section~\ref{sec:extremal-replacement}, Lemma~\ref{lem:constraints}, \eqref{eq:cm}--\eqref{eq:ch}).\\[5pt]
Potential bound & Evaluate the potential on the parameterized fixed points. & At a fixed point, use $U=S+(t-R)D$ (Lemma~\ref{lem:reduction}, \eqref{eq:decomp}); bound all higher moments through \eqref{eq:Smoment}--\eqref{eq:Gbound} in Section~\ref{sec:bounds}.\\[5pt]
Analytic regions & Use the scalar polynomial and analytic inequalities. & Treat $n\leq14/25$ and $m\leq4/5$ (Section~\ref{sec:bounds}, Lemmas~\ref{lem:weaktrans} and~\ref{lem:weak}, \eqref{eq:weakb}, \eqref{eq:weaka}).\\[5pt]
Finite degrees & Sturm root counts certify the interval sign for $3\leq\ell\leq164$. & For $4\leq\ell\leq32$, cover $\mathcal B_\ell$ in \eqref{eq:root}; exclude fixed points or prove $U>0$ on each rectangle (Section~\ref{sec:certificates}, Proposition~\ref{prop:finite}; six tests, including \eqref{eq:positiveU}).\\[5pt]
Large degrees & Analytic estimates for $\ell\geq165$. & Analytically handle the region left by the preceding bounds for $\ell\geq33$ (Section~\ref{sec:large-degrees}, Proposition~\ref{prop:large}, \eqref{eq:tail1}--\eqref{eq:tail2}).\\[5pt]
Positive gap & Combine the trivial and nontrivial fixed-point evaluations. & Separate $\gd$ from $\Fneq(c)$ and use compactness and continuity (Section~\ref{sec:certificates}, Lemma~\ref{lem:separation}, Theorem~\ref{thm:gap}, \eqref{eq:gap}).\\[5pt]
Coupled decoding & Apply threshold saturation for a vector recursion. & Use a one-sided shift of the two-type density recursion and the positive gap (Section~\ref{sec:saturation}, Theorem~\ref{thm:saturation}).\\\bottomrule
\end{tabular}
\end{table}

The density calculus for BMS codes~\cite{KumarYoungMacrisPfister2014} and extremal information-combining inequalities~\cite{LandHuber2006} provide the underlying tools. For degree three, the MN-specific steps are a stationary decomposition of the two-type potential, sufficient bounds on every fixed point other than the perfect-information fixed point, and a density-valued shift argument for the stated termination. A separate projection argument establishes the actual transmitted rate after puncturing. The density-evolution criterion in Corollary~\ref{prop:widthmain} and the operational conclusion in Theorem~\ref{thm:main} distinguish the two steps. For degree two, minimization over one BMS density constructs a fixed point with negative potential, and the finite terminated potential bounds the limiting transmitted error away from zero.

The reusable density identities, scalar necessary conditions, and conditional results retain general $(\ell,r,g)$. Each result states its degree restrictions, and degree-derived coefficients remain symbolic after specialization.

\subsection{Proof outline}\label{sec:outline}
Shortening makes the exterior bits known. Their information can then propagate into the active chain. Whether this propagation completes is controlled here by the signs of the potential at uncoupled fixed points. Theorems~\ref{thm:main} and~\ref{thm:degreetwo} establish opposite signs in their respective degree regimes and derive the corresponding decoding conclusions.

\paragraph*{1. Separate trivial and nontrivial fixed points}
A BEC message is either known or erased. General BMS messages can have many levels of confidence, so density evolution follows their full distributions. A fixed point is a pair of distributions unchanged by an update. As in the BEC proof~\cite{OkazakiKasai2014}, we first distinguish two trivial fixed points: the perfect-information pair $\gd=(\Delta_\infty,\Delta_\infty)$ and the channel-observation pair $(\Delta_0,c)$. Here $\Delta_\infty$ represents perfect information and $\Delta_0$ represents absent information. All other fixed points form $\Fnt(c)$; Definition~\ref{def:fixedpoint-sets} defines both sets.

Direct substitution into the potential in Definition~\ref{def:potential} gives
\[
U(\gd;c)=0,\qquad U(\Delta_0,c;c)=C(c)-R.
\]
For a BEC of erasure probability $\epsilon$, these are the points $(0,0)$ and $(1,\epsilon)$, with potential values $0$ and $1-R-\epsilon$. Thus, when $C(c)>R$, the second trivial fixed point already has positive potential. The unresolved question is whether \emph{every point $x\in\Fnt(c)$} also has positive potential. This is the same division of the positivity argument as in the BEC proof.

\paragraph*{2. Prove positivity at nontrivial fixed points and obtain a positive gap}
For $r=g=3$ and every $\ell\geq4$, we prove this remaining positivity statement on every BMS channel with $C(c)>R$. With $D(a,b)$ and $S_{r,g}(a,b)$ defined in~\eqref{eq:D}--\eqref{eq:S}, the fixed-point reduction~\eqref{eq:decomp} gives
\[
U=S+(I(b)-R)D.
\]
The transmitted-side message $b$ includes the channel observation, so $I(b)\geq C(c)>R$; at $(a,b)\in\Fnt(c)$, $D(a,b)>0$. The second term is therefore strictly positive. We bound $S$ from below and show that the sum is positive, including cases where the bound on $S$ is negative.

In the BEC proof, a one-parameter description of the nontrivial fixed points reduces positivity to a polynomial sign problem~\cite{OkazakiKasai2014}. General BMS fixed points are pairs of densities. Section~\ref{sec:bounds} instead bounds their moments and mutual information by scalar necessary conditions. These conditions may include scalar points that do not represent any density fixed point; proving the inequalities throughout this larger set is sufficient. Convexity, tangents, and chords treat broad regions and all $\ell\geq33$. For $4\leq\ell\leq32$, exact interval certificates cover the remaining regions by closed rectangles, including all rounding and series errors. This completes the positivity proof for every $x\in\Fnt(c)$.

Combining this result with $U(\Delta_0,c;c)=C(c)-R>0$ gives positive potential at \emph{every $x\in\Fneq(c)$}, with $\Fneq(c)$ as in Definition~\ref{def:gap}. For the threshold-saturation argument used here, we establish a strictly positive lower bound on this whole set; pointwise positivity alone would still allow values approaching zero. Lemma~\ref{lem:separation} separates the set from $\gd$. Compactness and continuity then give an attained, strictly positive minimum $\gap(c)$, which includes both the second trivial fixed point and $\Fnt(c)$.

\paragraph*{3. Construct a fixed point with negative potential at degree two}
For $r=g=2$ and $\ell\in\{3,4,5\}$, we instead construct a fixed point with negative potential on some BSCs with $C(c)>R$. Such a point $p$ must belong to $\Fnt(c)$, since the two trivial fixed points have potentials $0$ and $C(c)-R>0$. Thus the positivity required in step 2 fails on these channels. Section~\ref{sec:degreetwo} temporarily fixes the transmitted-side density to $c$ and minimizes $V(a;c)$ from~\eqref{eq:Vdefinition} over the full space of BMS message densities sent by punctured variables. Moving from a minimizer toward its punctured-variable update gives a derivative equal to a negative sum of squared moment differences. A minimum cannot have such a descent direction, so it satisfies the fixed-point equation for the punctured-variable messages.

A mixture perturbation around absent information about the punctured bits decreases $V$ below $1-R$ when
\[
(\ell-1)M_1(c)^g>1.
\]
This is the strict side of the paramagnetic bifurcation condition in the earlier statistical-mechanical analysis~\cite{TanakaSaad2003}. Elementary rational inequalities verify it at the BSC capacity boundary for $\ell\in\{3,4,5\}$. The strict decrease persists on slightly better channels. An entropy comparison gives $U(a,c;c)\leq V(a;c)-H(c)$. Starting the full recursion at this minimizing density yields an iteration monotone in degradation order whose potential cannot increase. Its limit satisfies both density-evolution equations and has negative potential.

\paragraph*{4. Compare the boundary contribution with the interior}
For positive $\gap(c)$, Section~\ref{sec:saturation} compares the terminated chain to a one-sided chain. If the comparison recursion converged to a fixed density profile without perfect information at every position, its endpoint would have potential at least $\gap(c)$. Shifting the profile by one position with fixed exterior densities changes its potential by $\delta\mathcal U$ satisfying
\[
\delta\mathcal U\leq-\gap(c),\qquad
\delta\mathcal U\geq-\frac{\mathcal K_{\ell,r,g}}{2w}.
\]
For sufficiently large $w$, these bounds contradict each other, and the recursion must converge to perfect information.

For a fixed point $p$ with $u=-U(p;c)>0$, initialize the check-input density pairs to be updated at $p$. The finite terminated potential is
\[
-Lu+(w-1)Q(p),\qquad 0\leq Q(p)\leq1.
\]
Updates monotone in degradation order can only decrease it. When $L/w\to\infty$, the boundary contribution cannot compensate for the negative interior contribution. Comparison with the initialization from channel observations and a bound relating the two message types then give a strictly positive transmitted-bit error. This bounds the error of the transmitted bits themselves.

\paragraph*{5. Establish the actual transmitted rate}
For degree three, Section~\ref{sec:rate} transfers density evolution to finite code realizations and establishes their actual transmitted rate. Reliability of the punctured bits controls the dimension lost under projection; auxiliary BECs give a matching upper bound. For degree two, the negative-potential argument directly gives the density-evolution obstruction in Theorem~\ref{thm:degreetwo}.

\section{Preliminaries on BMS densities}\label{sec:preliminaries}
The fixed-point strategy requires a description of messages that retains their full BMS distributions. The information measures and convolution identities below give the values at the two trivial fixed points and the estimates used to control the nontrivial ones. The degradation order also provides the comparisons needed to pass from uncoupled fixed points to terminated decoding.

\subsection{Message densities and information measures}\label{sec:message-measures}
\begin{definition}[BMS densities and information measures]\label{def:bms-measures}
A symmetric log-likelihood-ratio density (L-density) can be represented as a mixture of binary symmetric channel (BSC) observations whose absolute reliability $D\in[0,1]$ is revealed to the receiver~\cite[Section II-A and Appendix I]{KumarYoungMacrisPfister2014}. Define
\[
H(x)=\E h_2((1-D)/2),\quad I(x)=1-H(x),\quad
Z(x)=\E\sqrt{1-D^2},\quad M_k(x)=\E D^{2k}.
\]
The decision from the sign of a log-likelihood ratio with density $x$, with an unbiased random decision at zero, has bit error probability obtained by averaging the crossover probability of the revealed BSC~\cite[Section II-A and Appendix I]{KumarYoungMacrisPfister2014}:
\begin{equation}\label{eq:Pe}
P_e(x)=\E[(1-D)/2].
\end{equation}
Here $h_2$ is the binary entropy function, with logarithms to base two. The transmission channel has density $c$ and capacity $C(c)=I(c)$. Let $\Delta_\infty$ and $\Delta_0$ denote perfect and absent information.
\end{definition}

Reliability zero means that an observation gives no information about the bit, whereas reliability one means that it determines the bit. A BSC has a constant absolute reliability; a BEC has reliability zero or one. A general BMS density allows any probability law on this interval. This representation lets us compare messages through moments while retaining their full distributions.

\subsection{Convolutions and degradation order}
\begin{definition}[Convolutions and degradation order]\label{def:convolution-order}
Variable convolution $\star$ sums independent log-likelihood ratios (LLRs) of observations of the same bit. Check convolution $\cc$ multiplies their signed hyperbolic-tangent reliabilities. All observation independence is conditional on the bit. Write $x\preceq y$ when $y$ is physically degraded from $x$, and use componentwise order for pairs. This is the degradation order used in~\cite[Definition 2 and Proposition 3]{KumarYoungMacrisPfister2014}; both convolutions preserve it.
\end{definition}

Thus $x\preceq y$ means that $x$ is at least as informative as $y$: $H(x)\leq H(y)$ and $I(x)\geq I(y)$. An update that improves information moves downward in this order. Entropy alone does not determine degradation order; the comparisons below use the ordering of the full observations.

The endpoint rules used below follow from these observation operations~\cite[Section II-A]{KumarYoungMacrisPfister2014}:
\begin{equation}\label{eq:endpoint-convolutions}
\begin{aligned}
x\star\Delta_0&=x,&x\star\Delta_\infty&=\Delta_\infty,\\
x\cc\Delta_0&=\Delta_0,&x\cc\Delta_\infty&=x,\\
H(\Delta_0)&=1,&H(\Delta_\infty)&=0.
\end{aligned}
\end{equation}
An absent observation has LLR zero, so adding it leaves a variable message unchanged. A perfect observation determines the bit. At a check, multiplying by reliability zero erases the outgoing information, whereas multiplying by reliability one leaves the other input unchanged. Finally, the entropy values are $h_2(1/2)=1$ and $h_2(0)=0$.

\subsection{Entropy identities and basic inequalities}
For probability densities $x,y$, entropy duality~\cite[Proposition 4]{KumarYoungMacrisPfister2014}, the entropy series~\cite[Proposition 7]{KumarYoungMacrisPfister2014}, and the check-moment product rule~\cite[Proposition 6(iii)]{KumarYoungMacrisPfister2014} give, respectively,
\begin{align}
H(x\star y)+H(x\cc y)&=H(x)+H(y),\label{eq:duality}\\
I(x)&=\sum_{k\geq1}\gamma_kM_k(x),&
\gamma_k&=\frac1{\ln2\,2k(2k-1)},\quad\sum_k\gamma_k=1,\label{eq:moments}\\
M_k(x\cc y)&=M_k(x)M_k(y).\nonumber
\end{align}
Differentiation of a density means differentiation along a mixture. For $x_s=(1-s)x+sy$, $0\leq s\leq1$, the derivative is $y-x$, a signed measure of total mass zero. The notation $H(y-x)$ means $H(y)-H(x)$, the linear extension of the entropy kernel to this difference. Its sign is governed by the information comparison between $x$ and $y$.

When signed measures occur below, $H$ is extended linearly. The signed duality identities and entropy signs used here follow from~\cite[Propositions 5 and 8(iii)]{KumarYoungMacrisPfister2014}. For a mass-zero signed measure $\nu$, duality reads $H(x\star\nu)+H(x\cc\nu)=H(\nu)$. For two such signed measures the sum is zero. If both signed measures are degraded-minus-informative differences, their check entropy is nonpositive by~\eqref{eq:moments}, and their variable entropy is nonnegative. Opposite orientations reverse these signs.

The elementary entropy and Bhattacharyya bounds needed below are
\begin{equation}\label{eq:HZ}
1-M_1(x)\leq H(x)\leq Z(x)\leq\sqrt{1-M_1(x)},\quad
Z(x\star y)=Z(x)Z(y),\quad Z(x\cc y)\leq Z(x)+Z(y).
\end{equation}
The bound of entropy by the Bhattacharyya parameter and the variable-convolution identity are standard~\cite[Section II-D and Appendix II-D]{KumarYoungMacrisPfister2014}. The lower bound $1-M_1(x)\leq H(x)$ follows from~\eqref{eq:moments} by $M_k(x)\leq M_1(x)$ and $\sum_k\gamma_k=1$; the upper bound on $Z$ is Jensen's inequality~\cite[Section 3.1.8]{BoydVandenberghe2004}. Conditional on squared reliabilities $u,v$, the check Bhattacharyya kernel is $\sqrt{1-uv}\leq\sqrt{1-u}+\sqrt{1-v}$. Also,
\begin{equation}\label{eq:productH}
H(x\star y)\geq H(x)H(y).
\end{equation}
Indeed, both moment sequences decrease with $k$, so their covariance under weights $\gamma_k$ is nonnegative. Applying the moment product rule and~\eqref{eq:moments} to this covariance gives $I(x\cc y)\geq I(x)I(y)$~\cite[Propositions 6(iii) and 7]{KumarYoungMacrisPfister2014}. Using duality~\eqref{eq:duality}, this lower bound enters with a positive sign after writing $H(x\cc y)=1-I(x\cc y)$:
\[
\begin{aligned}
H(x\star y)
 &=H(x)+H(y)-1+I(x\cc y)\\
 &\geq H(x)+H(y)-1+(1-H(x))(1-H(y))\\
 &=H(x)H(y).
\end{aligned}
\]

\section{MN ensemble and density evolution}\label{sec:model}
We use the conventional $(\ell,r,g)$ parametrization of MN codes~\cite{KasaiSakaniwa2011,OkazakiKasai2014}, with integers $\ell>r\geq2$ and $g\geq2$. The design rate is $R=r/\ell$. The punctured-variable degree is $\ell$; each check has $r$ sockets for punctured variables and $g$ for transmitted variables; the transmitted-variable degree is also $g$. Table~\ref{tab:degrees_model} records the convention. Keeping these parameters explicit allows each subsequent result to state its degree restrictions where they are needed. The construction and density evolution below specify the fixed points to be classified as trivial or nontrivial. They also specify the shortening and initialization needed to translate the sign of the uncoupled potential into a statement about terminated decoding.

\subsection{Terminated ensemble}
Let $L\geq1$ be the active chain length, $w\geq1$ the smoothing width, and $M$ the section-size parameter. At each section there are $rM/\ell$ punctured variables, $M$ transmitted variables, and $M$ checks. Choose $M$ so that $rM/\ell$, $rM/w$, and $gM/w$ are integral. Start with a periodic chain longer than $L+2w$. For each section, independently partition the $rM$ sockets at punctured variables into $w$ groups of size $rM/w$, and the $gM$ transmitted-variable sockets into $w$ groups of size $gM/w$, labelled by offsets $0,\ldots,w-1$. Partition the corresponding check sockets independently into incoming-offset groups of the same sizes, and uniformly match corresponding groups. A variable socket at $i$ with offset $k$ reaches check section $i+k$.

\begin{table}[htbp]
\centering
\caption{Degree convention for $(\ell,r,g)$-MN codes.}\label{tab:degrees_model}
\begin{tabular}{lccc}
\toprule
Variable type & Variables per section & Variable degree & Sockets per check\\\midrule
Punctured & $rM/\ell$ & $\ell$ & $r$\\
Transmitted & $M$ & $g$ & $g$\\\bottomrule
\end{tabular}
\end{table}

Keep variable sections $1,\ldots,L$ active, and shorten both types of all other variables to zero. Choose the periodic indexing so that no wraparound touches the active interval. Only checks at $1,\ldots,L+w-1$ can impose nontrivial constraints on the active variables. Only the $N_{\mathrm{tx}}=LM$ active transmitted bits use the transmission channel. For fixed $L,w$ and a fixed finite neighborhood, balanced random socket partitions and matchings have asymptotically independent uniform offsets as $M\to\infty$. This local weak limit gives the recursion below. The construction does not prescribe offsets separately at each variable node.

\subsection{Density evolution and bit error}\label{sec:density-evolution}
The perfect-information pair is $\gd=(\Delta_\infty,\Delta_\infty)$.

For a pair $x=(a,b)$, apply the sum-product convolution rules~\cite[Section II-A]{KumarYoungMacrisPfister2014} to the socket counts in Table~\ref{tab:degrees_model}, and set
\begin{align}
q_1(x)&=a^{\cc(r-1)}\cc b^{\cc g},\qquad q_2(x)=a^{\cc r}\cc b^{\cc(g-1)},\label{eq:q}\\
s(x)&=a^{\cc r}\cc b^{\cc g},\qquad \mathsf g(x)=(q_1(x),q_2(x)),\nonumber\\
\mathsf f_c(y)&=(y_1^{\star(\ell-1)},c\star y_2^{\star(g-1)}),\qquad\DE=\mathsf f_c\circ\mathsf g.\label{eq:DE}
\end{align}
A message excludes its destination socket; hence the exponents in~\eqref{eq:q}--\eqref{eq:DE}. The bit posterior densities are $q_1^{\star\ell}$ and $c\star q_2^{\star g}$.

For every BMS channel density $c$, the two trivial fixed points can be checked explicitly. At $(a,b)=(\Delta_0,c)$, each check output in~\eqref{eq:q} contains at least one factor $\Delta_0$: the punctured exponents are $r-1\geq1$ in $q_1$ and $r\geq2$ in $q_2,s$. Hence $q_1=q_2=s=\Delta_0$ by~\eqref{eq:endpoint-convolutions}. At $\gd$, all check inputs and outputs are $\Delta_\infty$. Equation~\eqref{eq:DE} therefore gives
\[
\begin{aligned}
\DE(\Delta_0,c)
 &=\bigl(\Delta_0^{\star(\ell-1)},c\star\Delta_0^{\star(g-1)}\bigr)
 =(\Delta_0,c),\\
\DE(\gd)
 &=\bigl(\Delta_\infty^{\star(\ell-1)},c\star\Delta_\infty^{\star(g-1)}\bigr)
 =(\Delta_\infty,\Delta_\infty)=\gd.
\end{aligned}
\]
The exponents $\ell-1$ and $g-1$ are positive under the stated degree assumptions, so the endpoint rules apply to both updates.
\begin{definition}[Trivial and nontrivial fixed points]\label{def:fixedpoint-sets}
Following the BEC convention of~\cite{OkazakiKasai2014}, we call these two points the \emph{trivial fixed points} and all other fixed points \emph{nontrivial fixed points}. More precisely, define
\begin{equation}\label{eq:fixedpoint-sets}
\mathcal F_{\mathrm{triv}}(c)=\{\gd,(\Delta_0,c)\},\qquad
\Fnt(c)=\{x:\DE(x)=x\}\setminus\mathcal F_{\mathrm{triv}}(c).
\end{equation}
\end{definition}

At $(\Delta_0,c)$, the punctured-variable messages contain no information and the transmitted-variable messages contain only the channel observation. The uncoupled recursion initialized with these observations remains at this fixed point.

For $c=\mathrm{BEC}(\epsilon)$, the two points correspond in erasure-probability coordinates to $(x_1,x_2;\epsilon)=(0,0;\epsilon)$ and $(1,\epsilon;\epsilon)$. As $\epsilon$ varies, these describe the two lines of trivial fixed points in the BEC analysis~\cite{OkazakiKasai2014}. At a fixed $\epsilon$, they specify two points. The BMS counterparts are the families $(\gd;c)$ and $(\Delta_0,c;c)$ as the channel density varies.

Initialize messages using the channel observations: set $(a_{i,0},b_{i,0})=(\Delta_0,c)$ on $1\leq i\leq L$ and use $\gd$ outside. A density profile is a sequence of density pairs indexed by spatial position. For variable-section densities $a_{i,t},b_{i,t}$, define the check-input and variable-input pairs by
\[
x_{j,t}=\frac1w\sum_{k=0}^{w-1}(a_{j-k,t},b_{j-k,t}),\qquad
y_{i,t}=\frac1w\sum_{k=0}^{w-1}\mathsf g(x_{i+k,t}).
\]
For $1\leq i\leq L$, the next pair is $\mathsf f_c(y_{i,t})$; otherwise it is clamped to $\gd$ at every iteration. These averages are mixtures of measures.

For example, a check input first selects one of the $w$ neighboring variable sections uniformly and then has that section's message law. The average describes this random selection. The convolution operations subsequently combine independent incoming messages. This distinction between selecting a message and combining messages accounts for the order of averaging and convolution in the recursion. Equivalently, putting $h_{i,t}=\mathsf f_c(y_{i,t})$ on active sections and $h_{i,t}=\gd$ outside,
\begin{equation}\label{eq:coupled}
x_{j,t+1}=\frac1w\sum_{k=0}^{w-1}h_{j-k,t}.
\end{equation}
The coupled posteriors use powers $\ell$ and $g$ of the two components of $y_{i,t}$, with channel $c$ in the transmitted component. Both exterior shortenings are part of~\eqref{eq:coupled}.

Apply~\eqref{eq:Pe} to the transmitted posterior associated with $y_{i,t}$ under this initialization, and define
\begin{equation}\label{eq:txDEerror}
\begin{aligned}
P_{\mathrm{tx}}^{(t)}(L,w;c)
&=\frac1L\sum_{i=1}^{L}P_e(c\star y_{i,t,2}^{\star g}),\\
\overline P_{\mathrm{tx}}(L,w;c)
&=\lim_{t\to\infty}P_{\mathrm{tx}}^{(t)}(L,w;c).
\end{aligned}
\end{equation}
Here $t=0$ is the first check-node update. For fixed $L,w$, the errors are nonincreasing because this recursion becomes more informative at each update by order preservation~\cite[Proposition 3]{KumarYoungMacrisPfister2014}, so the limit exists. Each finite-iteration density-evolution value is obtained at fixed $t$ by taking $M\to\infty$, before this iteration limit.

\section{Potential and fixed-point reduction}\label{sec:potential}
The two trivial fixed points identified in Section~\ref{sec:model} admit explicit potential values. When $C(c)>R$, their evaluation leaves positivity at the points $x\in\Fnt(c)$ defined in~\eqref{eq:fixedpoint-sets} as the remaining sign question. We define the potential, distinguish this nontrivial set from the larger set used for the gap, and derive the fixed-point decomposition used in the subsequent sign estimates. These identities use the general degree parameters of Section~\ref{sec:model} and the entropy identities of Section~\ref{sec:preliminaries}.

\begin{definition}[Uncoupled potential]\label{def:potential}
We use the calculus for differentiation along mixtures~\cite[Propositions 14--15]{KumarYoungMacrisPfister2014}. For the two-type MN recursion, define the potential per transmitted bit by
\begin{equation}\label{eq:U}
\begin{aligned}
U(a,b;c)={}&-(r+g-1)H(s)+rH(q_1)+gH(q_2)\\
&-R H(q_1^{\star\ell})-H(c\star q_2^{\star g}).
\end{aligned}
\end{equation}
\end{definition}

At $(a,b)=(\Delta_0,c)$, the check-output calculation in Section~\ref{sec:density-evolution} and the endpoint rules~\eqref{eq:endpoint-convolutions} give
\[
q_1=q_2=s=\Delta_0,\qquad
q_1^{\star\ell}=\Delta_0,\qquad
c\star q_2^{\star g}=c.
\]
Thus the first four entropies in~\eqref{eq:U} equal one, whereas the last is $H(c)$. Substituting them separately yields
\begin{equation}\label{eq:trivial-potential}
\begin{aligned}
U(\Delta_0,c;c)
 &=-(r+g-1)\cdot1+r\cdot1+g\cdot1-R\cdot1-H(c)\\
 &=\bigl[-(r+g-1)+r+g\bigr]-R-H(c)\\
 &=1-R-H(c)=C(c)-R.
\end{aligned}
\end{equation}
The last equality uses $C(c)=1-H(c)$ from Section~\ref{sec:message-measures}. The two occurrences of $c$ in $U(\Delta_0,c;c)$ have different roles: the second argument is the transmitted-variable message density $b=c$, and the argument after the semicolon is the transmission channel. The coefficients involving $r,g$ cancel to one; this calculation does not require $r=g=3$.

At $\gd$, all three check-output densities are $\Delta_\infty$, as are both posterior densities, since $c\star\Delta_\infty=\Delta_\infty$. Every entropy in~\eqref{eq:U} is therefore zero:
\[
U(\gd;c)=-(r+g-1)\cdot0+r\cdot0+g\cdot0-R\cdot0-0=0.
\]

This potential also has an exact relation to the statistical-mechanical expression with parameters $(K,C,L)=(r,\ell,g)$~\cite{TanakaSaad2003}. Let $\mathcal F_{\mathrm{RS}}$ be the unextremized replica-symmetric free-energy expression per information bit, with natural logarithms, at the matched temperature and with unbiased information bits. The information bits in that expression correspond to the punctured bits in the present construction. Restrict the variable-message laws to BMS densities $a,b$ and set the check-message laws to $q_1,q_2$. Let $f_{\mathrm{ferro}}(c)$ be the value of this expression at the perfect-information point $\gd$, called the ferromagnetic solution in that analysis. Its entropy representation gives
\begin{equation}\label{eq:rsU}
U(a,b;c)=\frac{R}{\ln2}
\bigl(\mathcal F_{\mathrm{RS}}(a,b;c)-f_{\mathrm{ferro}}(c)\bigr).
\end{equation}
Here $R$ converts the normalization from punctured to transmitted bits, and $1/\ln2$ converts natural-log units to bits. Subtract the channel reference before integration if separate logarithmic expectations are not finite; they are finite for the BSCs used below. This algebraic identity requires check consistency but no variable fixed-point equations. It does not identify the replica-symmetric expression with the true MAP conditional entropy. The verification is given at the end of this section.

\begin{definition}[Fixed-point gap]\label{def:gap}
Define the set of fixed points other than the perfect-information point and its potential infimum by
\begin{equation}\label{eq:gapdefinition}
\Fneq(c)=\{x:\DE(x)=x,\ x\neq\gd\},\qquad
\gap(c)=\inf_{x\in\Fneq(c)}U(x;c).
\end{equation}
\end{definition}

The set includes the other trivial fixed point $(\Delta_0,c)$ and is therefore nonempty. This convention agrees with the fixed-point set in the BEC potential-threshold definition of~\cite{OkazakiKasai2014}, which also excludes only the perfect-information point. In terms of the nontrivial fixed-point set $\Fnt(c)$ in~\eqref{eq:fixedpoint-sets},
\[
\begin{aligned}
\Fneq(c)&=\{(\Delta_0,c)\}\cup\Fnt(c),\\
\gap(c)&=\min\left\{C(c)-R,\inf_{x\in\Fnt(c)}U(x;c)\right\},
\end{aligned}
\]
where the infimum of the empty set is $+\infty$. When $C(c)>R$, the other trivial fixed point already has positive potential, so the remaining positivity argument concerns $x\in\Fnt(c)$. The gap nevertheless takes both parts of $\Fneq(c)$ into account. For general degrees, neither positivity nor attainment is assumed here. This fixed-point infimum differs from the energy gap defined outside the attraction basin in~\cite[Definition 25(ii)]{KumarYoungMacrisPfister2014}. Lemma~\ref{lem:descent} bounds the potential at a non-perfect right endpoint of the limiting comparison profile from below by the fixed-point gap, which suffices for the shift argument.

The connection to the BEC potential is exact. Let $a=\mathrm{BEC}(x_1)$,
$b=\mathrm{BEC}(x_2)$, and $c=\mathrm{BEC}(\epsilon)$, and denote the check-output erasure probabilities by
\begin{align*}
\eta_1&=1-(1-x_1)^{r-1}(1-x_2)^g,&
\eta_2&=1-(1-x_1)^r(1-x_2)^{g-1},\\
\eta_s&=1-(1-x_1)^r(1-x_2)^g.
\end{align*}
BEC entropy equals erasure probability, and variable convolution multiplies erasure probabilities, as in the BEC density-evolution calculation of~\cite{OkazakiKasai2014}. Thus~\eqref{eq:DE} updates the pair to
$(\eta_1^{\ell-1},\epsilon\eta_2^{g-1})$, and~\eqref{eq:U} becomes
\[
U=-(r+g-1)\eta_s+r\eta_1+g\eta_2-R\eta_1^\ell-\epsilon\eta_2^g.
\]
This is the scalar MN potential obtained from the BEC recursion and potential definition in~\cite{OkazakiKasai2014}. In particular, at the trivial fixed point $(1,\epsilon)$ it gives $U(\Delta_0,c;c)=1-R-\epsilon$, the BEC specialization of $C(c)-R$. Indeed, $x_1=1$ makes $\eta_1=\eta_2=\eta_s=1$, because the powers of $1-x_1$ have positive exponents. Hence the five scalar terms give $-(r+g-1)+r+g-R-\epsilon=1-R-\epsilon$. At $(x_1,x_2)=(0,0)$ all three $\eta$ values are zero, and the potential is zero. The fixed-point identity below provides a way to estimate the same density functional when the messages have general BMS laws.

The reduction has two uses. The first-variation formula expresses the derivative of $U$ through the difference between a message pair and its update, so every fixed point is stationary. At a fixed point other than $\gd$, the decomposition isolates the strictly positive term $(I(b)-R)D$ when $C(c)>R$. It is enough to bound the remaining term $S$ from below; even a negative lower bound can suffice if the positive term compensates for it.

\begin{lemma}[First variation and decomposition at a fixed point]\label{lem:reduction}
Let $(a',b')=\DE(a,b)$. Along any admissible mixture variation,
\begin{equation}\label{eq:dU}
dU=rH((a-a')\star dq_1)+gH((b-b')\star dq_2).
\end{equation}
For arbitrary pairs define
\begin{align}
D(a,b)&=H(a\star q_1),\label{eq:D}\\
S_{r,g}(a,b)&=I(a)I(b)-(r-I(b))I(q_1)\nonumber\\
&\quad -(g-1)I(q_2)+(r+g-2-I(b))I(s).\label{eq:S}
\end{align}
For fixed $r,g$, abbreviate $S=S_{r,g}$.
At a fixed point,
\begin{equation}\label{eq:decomp}
U=S+(I(b)-R)D.
\end{equation}
If $C(c)>R$ and the fixed point differs from $\gd$, then $I(b)>R$ and $D>0$.
\end{lemma}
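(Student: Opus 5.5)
The plan is to prove the three assertions of the lemma in the order stated: first the variation formula~\eqref{eq:dU}, then the algebraic identity~\eqref{eq:decomp} at fixed points, and finally the sign claims.

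\emph{First variation.} I will differentiate each entropy in~\eqref{eq:U} along a mixture variation $(da,db)$ using the product rules of~\cite[Propositions 14--15]{KumarYoungMacrisPfister2014}:
\[
dH(x^{\star k})=kH(x^{\star(k-1)}\star dx),\qquad dH(x^{\cc k})=kH(x^{\cc(k-1)}\cc dx).
\]
\begin{itemize}
\item \emph{Check part.} From~\eqref{eq:q},
\[
dq_1=(r-1)\,a^{\cc(r-2)}\cc b^{\cc g}\cc da+g\,a^{\cc(r-1)}\cc b^{\cc(g-1)}\cc db .
\]
Hence $a\cc dq_1=(r-1)\,q_1\cc da+g\,q_2\cc db$, and similarly $b\cc dq_2=r\,q_1\cc da+(g-1)\,q_2\cc db$. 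Since $dH(s)=rH(q_1\cc da)+gH(q_2\cc db)$, this yields the key identity
\[
rH(a\cc dq_1)+gH(b\cc dq_2)=(r+g-1)\,dH(s).
\]
Therefore
\[
-(r+g-1)dH(s)+rH(dq_1)+gH(dq_2)=r\bigl[H(dq_1)-H(a\cc dq_1)\bigr]+g\bigl[H(dq_2)-H(b\cc dq_2)\bigr].
\]
The signed duality $H(x\star\nu)+H(x\cc\nu)=H(\nu)$ for mass-zero $\nu$ turns this into $rH(a\star dq_1)+gH(b\star dq_2)$.
\item \emph{Posterior part.} Using $R\ell=r$, we get $-R\,dH(q_1^{\star\ell})=-rH(q_1^{\star(\ell-1)}\star dq_1)=-rH(a'\star dq_1)$. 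Likewise $-dH(c\star q_2^{\star g})=-gH(b'\star dq_2)$.
\end{itemize}
Adding the two parts and using linearity of $H$ gives~\eqref{eq:dU}.

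\emph{Decomposition.} At a fixed point, $a=q_1^{\star(\ell-1)}$ and $b=c\star q_2^{\star(g-1)}$. Hence the posteriors are $q_1^{\star\ell}=a\star q_1$ and $c\star q_2^{\star g}=b\star q_2$. Duality~\eqref{eq:duality}, together with $a\cc q_1=b\cc q_2=s$, gives
\[
H(a\star q_1)=H(a)+H(q_1)-H(s),\qquad H(b\star q_2)=H(b)+H(q_2)-H(s).
\]
Substituting the second formula into~\eqref{eq:U} and writing $D=H(a\star q_1)$, I compute
\[
U+RD-I(b)D=-(r+g-2)H(s)+rH(q_1)+(g-1)H(q_2)-H(b)-I(b)D.
\]
Next I set $H=1-I$ everywhere, using $D=1-I(a)-I(q_1)+I(s)$. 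The constant terms sum to $-(r+g-2)+r+(g-1)-1=0$. The remaining terms collect exactly into $S_{r,g}(a,b)$ of~\eqref{eq:S}, which proves~\eqref{eq:decomp}.

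\emph{Signs.} The inequality $I(b)>R$ follows because $b=c\star q_2^{\star(g-1)}\preceq c$, so $I(b)\geq I(c)=C(c)>R$.

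For $D>0$, I argue by contradiction; this is the only nonformal step. Suppose $D=H(a\star q_1)=0$. Then $a\star q_1=\Delta_\infty$, so the Bhattacharyya parameter vanishes. Multiplicativity~\eqref{eq:HZ} and $a=q_1^{\star(\ell-1)}$ give $Z(q_1)^{\ell}=0$, hence $q_1=\Delta_\infty$ and $M_1(q_1)=1$. The product rule~\eqref{eq:moments} gives $M_1(a)^{r-1}M_1(b)^g=1$. Since $r\geq2$ and $g\geq2$, both exponents are positive, so $M_1(a)=M_1(b)=1$, forcing $a=b=\Delta_\infty$. This contradicts $(a,b)\neq\gd$.

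I expect the main care to lie in justifying the signed-measure duality and the differentiation rules for mass-zero variations. Both are taken from~\cite{KumarYoungMacrisPfister2014}. The other obstacle is bookkeeping: the exponents must be right so that $a\cc q_1$ and $b\cc q_2$ both equal $s$.
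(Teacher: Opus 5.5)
Your proposal is correct and follows essentially the same route as the paper. You cancel the check-node terms through signed duality to get \eqref{eq:dU}, substitute the posterior identities at a fixed point and expand $H=1-I$ for \eqref{eq:decomp}, and use degradation $b\preceq c$ for $I(b)>R$; the only deviations are cosmetic. You keep $D$ for the punctured posterior instead of passing through the paper's intermediate expression $V_R$, and you obtain $D>0$ from $Z(a\star q_1)=Z(q_1)^{\ell}$ rather than from $H(q_1^{\star\ell})\geq H(q_1)^{\ell}$; both variants are valid.
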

\begin{proof}
Using $s=a\cc q_1=b\cc q_2$ from~\eqref{eq:q}, rewrite the first three terms of~\eqref{eq:U} as
$H(s)+r[H(q_1)-H(a\cc q_1)]+g[H(q_2)-H(b\cc q_2)]$.
Direct variations in $a,b$ cancel the corresponding terms from $H(s)$.
Signed entropy duality~\cite[Proposition 5]{KumarYoungMacrisPfister2014}, applied to the remaining variations, gives~\eqref{eq:dU}.

In detail, the convolution differentiation rule~\cite[Propositions 14--15]{KumarYoungMacrisPfister2014} selects one of the $\ell$ factors of the punctured posterior entropy and gives $R\ell H(a'\star dq_1)=rH(a'\star dq_1)$. The transmitted posterior similarly gives $gH(b'\star dq_2)$. After the preceding cancellation, the other terms are $rH(a\star dq_1)+gH(b\star dq_2)$; subtraction leaves precisely the two update residuals in~\eqref{eq:dU}.
At a fixed point, substitute~\eqref{eq:DE} into the posterior densities and apply entropy duality~\eqref{eq:duality}, as given in~\cite[Proposition 4]{KumarYoungMacrisPfister2014}, to obtain
\[
\begin{aligned}
H(q_1^{\star\ell})
 &=H(q_1^{\star(\ell-1)}\star q_1)=H(a\star q_1)\\
 &=H(a)+H(q_1)-H(a\cc q_1)=H(a)+H(q_1)-H(s),\\
H(c\star q_2^{\star g})
 &=H((c\star q_2^{\star(g-1)})\star q_2)=H(b\star q_2)\\
 &=H(b)+H(q_2)-H(b\cc q_2)=H(b)+H(q_2)-H(s).
\end{aligned}
\]
Substitution of these two identities into~\eqref{eq:U} and collection of its entropy coefficients yield
\[
V_R=(R+2-r-g)H(s)+(r-R)H(q_1)+(g-1)H(q_2)-RH(a)-H(b).
\]
To verify the remaining algebra, put $t=I(b)=1-H(b)$ in~\eqref{eq:S} and expand each other information term as $I=1-H$:
\[
\begin{aligned}
S={}&t(1-H(a))-(r-t)(1-H(q_1))\\
 &-(g-1)(1-H(q_2))+(r+g-2-t)(1-H(s))\\
={}&-H(b)-tH(a)+(r-t)H(q_1)+(g-1)H(q_2)\\
 &-(r+g-2-t)H(s).
\end{aligned}
\]
Here the constant terms add to $t-1=-H(b)$. Entropy duality~\eqref{eq:duality} gives $D=H(a)+H(q_1)-H(s)$, so adding $(t-R)D$ and collecting each coefficient gives
\[
\begin{aligned}
S+(t-R)D
={}&[-t+(t-R)]H(a)+[(r-t)+(t-R)]H(q_1)\\
 &+(g-1)H(q_2)-H(b)\\
 &+[-(r+g-2-t)-(t-R)]H(s)\\
={}&-RH(a)+(r-R)H(q_1)+(g-1)H(q_2)-H(b)\\
 &+(R+2-r-g)H(s)=V_R.
\end{aligned}
\]
This proves~\eqref{eq:decomp}, because the preceding posterior substitutions established $U=V_R$ at a fixed point.
The fixed-point equation~\eqref{eq:DE} gives $b=c\star q_2^{\star(g-1)}\preceq c$, and entropy monotonicity under degradation~\cite[Section II-C]{KumarYoungMacrisPfister2014} gives $I(b)\geq C(c)>R$.
By~\eqref{eq:productH}, $D=H(q_1^{\star\ell})\geq H(q_1)^\ell$.
If $D=0$, then $q_1$ is perfect. The check-moment product rule and the characterization of perfect information~\cite[Proposition 6(iii)--(iv)]{KumarYoungMacrisPfister2014}, applied to~\eqref{eq:q}, then require both $a,b$ to be perfect. This includes distributions with perfect-information atoms.
\end{proof}

\paragraph*{Relation to the replica-symmetric free energy}
To verify~\eqref{eq:rsU}, start from the MN replica-symmetric free-energy expression~\cite{TanakaSaad2003}. Rewrite its logarithmic expectations using the BMS entropy representation~\cite[Section II-C and Proposition 7]{KumarYoungMacrisPfister2014}. Check consistency~\eqref{eq:q} makes the edge/check terms multiples of $I(s)$. In bits, the punctured-variable term is $H(q_1^{\star\ell})+\ell I(q_1)$, and the channel-node term after reference subtraction is $H(c\star q_2^{\star g})+gI(q_2)$. Their normalized combination is
\[
1+(r+g-1)I(s)-rI(q_1)-gI(q_2)
-R H(q_1^{\star\ell})-H(c\star q_2^{\star g}),
\]
which equals~\eqref{eq:U} after replacing every $I$ by $1-H$ and collecting constants.

\section{Information combining and analytic positivity}\label{sec:bounds}
The two trivial fixed points have already been evaluated in Section~\ref{sec:potential}. For $r=g=3$ and $C(c)>R$, the remaining sign problem is to prove $U>0$ at every $x\in\Fnt(c)$, with $\Fnt(c)$ as in~\eqref{eq:fixedpoint-sets}. We derive scalar necessary conditions for these points and lower bounds for $S_{r,g}(a,b)$ defined in~\eqref{eq:S} and used in~\eqref{eq:decomp}, in place of the exact fixed-point parametrization available for the BEC~\cite{OkazakiKasai2014}. The estimates retain general degree parameters whenever possible. Their specialization proves positivity for all $\ell\geq33$ and reduces $4\leq\ell\leq32$ to the finite sign conditions completed in Section~\ref{sec:certificates}.

Write $A_k=M_k(a)$, $B_k=M_k(b)$, $m=A_1$, $n=B_1$, and $t=I(b)$. These are the moments and information of Section~\ref{sec:message-measures}; the coefficients $\gamma_k$ are defined in~\eqref{eq:moments}. Here $n$ is a local scalar statistic, distinct from block length. Define
\begin{equation}\label{eq:scalar-information}
J(d)=1-h_2((1-d)/2),\qquad f(u)=J(\sqrt u)=\sum_{k\geq1}\gamma_ku^k.
\end{equation}
The series is~\eqref{eq:moments} specialized to a BSC~\cite[Proposition 7]{KumarYoungMacrisPfister2014}. Its positive coefficients establish convexity and monotonicity. Applying Jensen's inequality to $D^2$~\cite[Section 3.1.8, Eq. (3.5)]{BoydVandenberghe2004} gives the lower bound below; $M_k(x)\leq M_1(x)$ and $\sum_k\gamma_k=1$ give the upper bound:
\begin{equation}\label{eq:fbound}
f(M_1(x))\leq I(x)\leq M_1(x).
\end{equation}

\subsection{Extremal replacement of observations}\label{sec:extremal-replacement}

The scalars $m,n,t$ constrain the full density equations. Check convolution is particularly convenient because its moments multiply exactly. For variable convolution, the BSC replacement lemma supplies lower bounds. At a fixed point, the outgoing density must reproduce its own statistics, which turns these bounds into restrictions on where a fixed point can occur.
Comparing BMS observations at fixed mutual information is an information-combining method~\cite{LandHuber2006}. The following proof establishes the BSC comparisons needed here, including the separate comparison at fixed mean squared reliability.
The two replacements below preserve different quantities and are applied separately. In the moment comparison, convexity says that averaging the squared reliability before combining observations can only lower the output moment. In the information comparison, concavity first gives an upper bound for check-combined information; entropy duality converts it into the required lower bound for variable-combined information.

\begin{lemma}[BSC replacement]\label{lem:BSC}
For finitely many conditionally independent BMS observations of one bit, replacing each observation by a BSC with the same mean squared reliability cannot increase the combined mean squared reliability. Replacing each by a BSC with the same mutual information cannot increase the combined mutual information.
\end{lemma}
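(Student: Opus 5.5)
We prove the two replacement statements separately. In both we replace one observation at a time, keeping the combination $x$ of the other observations fixed, so induction on the number of observations gives the lemma. Each single-step comparison then reduces to Jensen's inequality applied to the revealed reliability of the observation being replaced. Because the observations are conditionally independent given the bit, the revealed reliabilities are independent. We may therefore condition on the revealed reliability of $x$ and average over it afterwards.

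\emph{Moment comparison.} First we compute the combination of two BSC observations with reliabilities $D$ and $d$. If the two observed signs agree, the posterior reliability is $(D+d)/(1+Dd)$, which happens with probability $(1+Dd)/2$. If they disagree, the posterior reliability is $(D-d)/(1-Dd)$, with probability $(1-Dd)/2$. Write $v=D^2$ and $u=d^2$. Averaging the squared posterior reliability gives
\[
\varphi(v,u)=\frac{v+u-2vu}{1-vu}
=\frac{2v-1}{v}+\frac{(1-v)^2}{v}\cdot\frac{1}{1-vu},
\]
for $v>0$; at $v=0$ we simply have $\varphi(0,u)=u$, which is linear. Since $(1-v)^2/v\geq0$, $\varphi(v,\cdot)$ is convex on $[0,1]$.

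Now let $V$ be the revealed squared reliability of $x$ and $U$ that of the observation $y$ being replaced. By independence,
\[
M_1(x\star y)=\E\,\varphi(V,U).
\]
Conditioning on $V$ and applying Jensen's inequality in $U$ gives
\[
\E\,\varphi(V,U)\geq\E\,\varphi(V,\E U)=M_1\bigl(x\star\mathrm{BSC}(\sqrt{\E U})\bigr).
\]
So the BSC with the same mean squared reliability $\E U=M_1(y)$ cannot increase the combined moment. Iterating over the observations finishes this part.

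\emph{Information comparison.} Entropy duality~\eqref{eq:duality} gives $I(x\star y)=I(x)+I(y)-I(x\cc y)$. Replacing $y$ by $\mathrm{BSC}$ with the same information keeps $I(x)+I(y)$ fixed. It therefore suffices to show that $I(x\cc y)$ cannot decrease under this replacement. Conditioning on the revealed reliabilities $D_x,D_y$, we have
\[
I(x\cc y)=\E\,J(D_xD_y)=\E\,\Phi(J(D_y)),\qquad
\Phi(s)=\E_{D_x}J\bigl(D_x\,J^{-1}(s)\bigr).
\]
Here $J$ is increasing on $[0,1]$, so $J^{-1}$ is well defined. If $\Phi$ is concave on $[0,1]$, Jensen's inequality gives
\[
\E\,\Phi(J(D_y))\leq\Phi(\E J(D_y))=\Phi(I(y))=I\bigl(x\cc\mathrm{BSC}_{I(y)}\bigr),
\]
where $\mathrm{BSC}_{I(y)}$ denotes the BSC with information $I(y)$. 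This is the required inequality.

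Since a mixture of concave functions is concave, concavity of $\Phi$ reduces to the case where $x$ is a BSC with fixed reliability $D_x$. The check combination of two BSCs is the BSC whose crossover probability is the binary convolution of the two crossovers. Its information is $1-h_2\bigl(p\ast h_2^{-1}(1-s)\bigr)$, where $p=(1-D_x)/2$ and $\ast$ denotes binary convolution. Concavity in $s$ is then exactly Mrs.\ Gerber's lemma, namely the convexity of $H\mapsto h_2(p\ast h_2^{-1}(H))$. This concavity is the information-combining extremality of~\cite{LandHuber2006}. It is the main analytic input and the step I expect to need the most care. I would cite it, or verify it by the standard second-derivative computation in the variable $H$.

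Iterating the single-step replacement over all observations, using that $\star$ is associative, completes the proof of the lemma.
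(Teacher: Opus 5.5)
Your proof is correct and follows the paper's route: the same convex kernel $(u+v-2uv)/(1-uv)$ with Jensen in one squared reliability for the moment part, and entropy duality plus concavity of $s\mapsto J(vJ^{-1}(s))$ for the information part, with observations replaced one at a time against the combined remaining observation. The only difference is that the paper proves this concavity directly, showing $F_v'(s)=v\operatorname{atanh}(vd)/\operatorname{atanh}(d)$ decreases in $d$, which is exactly the Mrs.~Gerber's lemma statement you propose to cite.
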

\begin{proof}
For two observations with squared absolute reliabilities $u,v$, the variable-convolution rule~\cite[Section II-A]{KumarYoungMacrisPfister2014} sums their LLRs. Averaging the resulting squared reliability over the observation signs gives
\[
\psi(u,v)=\frac{u+v-2uv}{1-uv}=1-\frac{(1-u)(1-v)}{1-uv}.
\]
This follows by weighting equal and opposite signs by $(1+\sqrt{uv})/2$ and $(1-\sqrt{uv})/2$. Set $\psi(1,1)=1$ by continuity. For $uv<1$,
\[
\partial_u^2\psi(u,v)=\frac{2v(1-v)^2}{(1-uv)^3}\geq0.
\]
Applying Jensen's inequality~\cite[Section 3.1.8]{BoydVandenberghe2004} separately to each variable shows that replacing either random squared reliability by its mean cannot increase the output moment.

For the information comparison fix the other reliability $v\in[0,1]$ and put $F_v(s)=J(vJ^{-1}(s))$. Differentiating the definition of $J$ and applying the chain rule gives, for $d=J^{-1}(s)\in(0,1)$,
\[
F_v'(s)=v\frac{\operatorname{atanh}(vd)}{\operatorname{atanh}(d)}.
\]
This decreases with $d$ when $0<v<1$. To see this, use the logarithm series~\cite[Eq. (4.6.4)]{NISTDLMF} for twice the inverse hyperbolic tangent and cancel $d$ in the two positive series. Their coefficient ratios are $v^{2j+1}$. In the cross-multiplied derivative, pairing terms indexed by $i>j$ gives a positive multiplier times $(i-j)(v^{2i+1}-v^{2j+1})\leq0$. Thus $F_v$ is concave; $v=0,1$ are immediate. Jensen's inequality~\cite[Section 3.1.8]{BoydVandenberghe2004} shows that a BSC replacement at fixed information increases check-combined information. Rewriting entropy duality~\eqref{eq:duality}~\cite[Proposition 4]{KumarYoungMacrisPfister2014} as $I(x\star y)=I(x)+I(y)-I(x\cc y)$ reverses this comparison for variable combining.

For either comparison, group all observations except the one being replaced into their combined BMS observation. This group remains conditionally independent of the selected observation given the bit. Replace observations successively. Continuity handles the endpoint reliabilities. Increasing the reliability of any replacement BSC increases combined information and moment by physical degradation~\cite[Propositions 3, 6(ii), and 8(iv)]{KumarYoungMacrisPfister2014}.
\end{proof}

For $k\geq1$ let $\Rk_k(q)$ be the moment of $k$ BSC observations with squared reliability $q$. Let $E_d(q)=\sum_{i=0}^{\lfloor d/2\rfloor}\binom d{2i}q^i$. Using the LLR addition rule in Section~\ref{sec:preliminaries} and pairing binomial patterns with $j$ and $k-j$ negative signs gives
\begin{align}\label{eq:Rformula}
1-\Rk_k(q)={}&\sum_{j=0}^{\lfloor(k-1)/2\rfloor}
\frac{\binom kj}{2^{k-1}}\frac{(1-q)^{k-j}}{E_{k-2j}(q)}
+\mathbf1_{\{k\text{ even}\}}\frac{\binom k{k/2}}{2^k}(1-q)^{k/2}.
\end{align}
Indeed, with $d=\sqrt q$, the paired likelihood numerators are $(1+d)^{k-j}(1-d)^j$ and $(1-d)^{k-j}(1+d)^j$. Their contribution to the posterior variance cancels odd powers of $d$ and yields~\eqref{eq:Rformula}. In the even central term, reliability is zero. Denominators satisfy $E_d(q)\geq1$, so the formula includes $q=0,1$.
Choose a rational $d_0\in[0,1)$ with $J(d_0)<R$. For $k\geq1$, let $\mathcal T_{k,d_0}(q)$ be the mutual information obtained by combining one BSC observation of reliability $d_0$ with $k$ BSC observations of reliability $\sqrt q$, all $k+1$ observations being conditionally independent given the same bit. Both $\Rk_k$ and $\mathcal T_{k,d_0}$ increase with $q$ by degradation.

When $g=3$, with $r$ unrestricted, the $g-1$ check observations form a pair and have the explicit information formula
\begin{equation}\label{eq:Tformula}
\begin{aligned}
\mathcal T_{g-1,d_0}(q)={}&J(d_0)+(g-1)f(q)-f(q^{g-1})\\
&-\frac{1+q}{2}f\left(\frac{(g-1)^2d_0^2q}{(1+q)^2}\right).
\end{aligned}
\end{equation}
Here the $g-1$ observations form a pair: their combined reliability is $(g-1)\sqrt q/(1+q)$ with probability $(1+q)/2$ and zero otherwise. Applying entropy duality~\eqref{eq:duality}~\cite[Proposition 4]{KumarYoungMacrisPfister2014} to this pair and then to its combination with the channel gives~\eqref{eq:Tformula}. The factors $1/2$ and the squares in the likelihood formula come from binary sign probabilities and squared reliabilities.

\begin{lemma}[Scalar necessary conditions]\label{lem:constraints}
For general integers $\ell>r\geq2$ and $g\geq2$, every fixed point with $C(c)>R$ satisfies
\begin{align}
m&\geq\Rk_{\ell-1}(m^{r-1}n^g),\label{eq:cm}\\
t&\geq\max\{R,f(n),\mathcal T_{g-1,d_0}(m^rn^{g-1})\},\quad t\leq n,\quad t>R,\label{eq:ct}\\
H(a)&\leq\min\{1-f(m),(1-m^{r-1}n^g)^{(\ell-1)/2}\}.\label{eq:ch}
\end{align}
\end{lemma}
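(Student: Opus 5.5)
The plan is to use the two fixed-point equations from~\eqref{eq:DE}, namely $a=q_1^{\star(\ell-1)}$ and $b=c\star q_2^{\star(g-1)}$. Each inequality then becomes a statement about one variable convolution of BMS observations. Those observations are controlled through the check-moment product rule in~\eqref{eq:moments}, Lemma~\ref{lem:BSC}, and the elementary bounds~\eqref{eq:fbound} and~\eqref{eq:HZ}.

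First, the product rule applied to~\eqref{eq:q} gives $M_1(q_1)=m^{r-1}n^g$ and $M_1(q_2)=m^rn^{g-1}$. For~\eqref{eq:cm}, I view $a$ as the combination of $\ell-1$ conditionally independent observations, each with law $q_1$. The moment part of Lemma~\ref{lem:BSC} says that replacing each by a BSC with squared reliability $M_1(q_1)$ cannot increase the combined moment. Hence $m=M_1(a)\geq\Rk_{\ell-1}(m^{r-1}n^g)$.

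Second, for~\eqref{eq:ct}:
\begin{itemize}
\item The bound $t>R$, hence $t\geq R$, is the statement $I(b)\geq C(c)>R$ proved in Lemma~\ref{lem:reduction}. That argument used only $b\preceq c$, so it also holds at $\gd$ and at $(\Delta_0,c)$.
\item The bounds $f(n)\leq t\leq n$ are~\eqref{eq:fbound} applied to $b$.
\item For the $\mathcal T$ bound, I write $b$ as the combination of one channel observation $c$ and $g-1$ observations with law $q_2$. I apply the information part of Lemma~\ref{lem:BSC}: each observation is replaced by a BSC of equal mutual information, which cannot increase $I(b)$.
\item The channel replacement has information $C(c)>R>J(d_0)$. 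Each $q_2$-replacement has information $I(q_2)\geq f(M_1(q_2))=J(\sqrt{m^rn^{g-1}})$ by~\eqref{eq:fbound}.
\item I then lower each BSC reliability: the channel BSC to $d_0$, and each $q_2$-BSC to $\sqrt{m^rn^{g-1}}$. This is a physical degradation, and by the monotonicity noted at the end of the proof of Lemma~\ref{lem:BSC} it can only decrease combined information. The resulting value is exactly $\mathcal T_{g-1,d_0}(m^rn^{g-1})$.
\end{itemize}

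Third, for~\eqref{eq:ch}, the bound $H(a)\leq1-f(m)$ is the lower half of~\eqref{eq:fbound} rewritten with $H=1-I$. For the second bound, I chain the inequalities in~\eqref{eq:HZ}:
\[
H(a)\leq Z(a)=Z(q_1)^{\ell-1}\leq\bigl(1-M_1(q_1)\bigr)^{(\ell-1)/2}=(1-m^{r-1}n^g)^{(\ell-1)/2},
\]
where the equality is multiplicativity of $Z$ under $\star$.

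I expect the main obstacle to be the $\mathcal T$ step. Lemma~\ref{lem:BSC} matches mutual information, while the argument of $\mathcal T$ is a moment. The bridge is the order of operations: first replace at fixed information, then degrade using $f(M_1)\leq I$ and $J(d_0)<C(c)$. This ordering has to be justified carefully, including the endpoint reliabilities, which I would handle by the same continuity remark as in Lemma~\ref{lem:BSC}. The remaining steps are direct substitutions.
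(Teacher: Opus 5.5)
Your proposal is correct and follows essentially the same route as the paper. The paper applies the moment replacement of Lemma~\ref{lem:BSC} to $a=q_1^{\star(\ell-1)}$, and the information replacement to $b=c\star q_2^{\star(g-1)}$, after which it lowers the channel BSC to $J(d_0)$ and each check BSC to $f(m^rn^{g-1})$. For the entropy bound it uses the chain $H(a)\le Z(q_1)^{\ell-1}\le(1-m^{r-1}n^g)^{(\ell-1)/2}$. You also make explicit two pieces the paper leaves implicit: $t>R$ follows from $b\preceq c$, which holds even at the trivial fixed points, and $f(n)\le t\le n$ follows from \eqref{eq:fbound}.
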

\begin{proof}
At a fixed point of~\eqref{eq:DE}, apply Lemma~\ref{lem:BSC} to $a=q_1^{\star(\ell-1)}$. The check-moment product identity~\cite[Proposition 6(iii)]{KumarYoungMacrisPfister2014} applied to~\eqref{eq:q} gives $M_1(q_1)=m^{r-1}n^g$, proving~\eqref{eq:cm}.
For $b=c\star q_2^{\star(g-1)}$, use the information comparison in Lemma~\ref{lem:BSC} to replace each of the $g$ observations by a BSC with the same mutual information. Reduce the channel replacement information from $C(c)$ to $J(d_0)<R$, and each check replacement information from $I(q_2)$ to $f(m^rn^{g-1})$, using~\eqref{eq:fbound} and the same check-moment product identity. This gives~\eqref{eq:ct}. Equations~\eqref{eq:fbound} and~\eqref{eq:HZ} give
$H(a)\leq Z(q_1)^{\ell-1}\leq(1-m^{r-1}n^g)^{(\ell-1)/2}$ and the other entropy bound.
\end{proof}
These inequalities enclose the statistics of every genuine fixed point. Satisfying them alone does not establish the existence of a density-evolution fixed point with those statistics. This is the necessary-condition reduction used in place of the exact fixed-point parameterization available on the BEC.

\subsection{Bounds on the two-density functional}
The functional $S$ depends on all message moments. The maximization below bounds the terms subtracted in its moment expansion, giving a lower bound using only $m,n,t$. Combined with the positive term in~\eqref{eq:decomp}, these bounds will establish potential positivity by analytic inequalities or finite certificates.

For fixed $r,g$, define
\begin{align}
h_t(A,B)&=(r-t)A^{r-2}B^g+(g-1)A^{r-1}B^{g-1}\nonumber\\
&\quad -(r+g-2-t)A^{r-1}B^g,\label{eq:h}\\
G_t(m,n)&=\max_{0\leq A\leq m}h_t(A,n).
\end{align}
Substituting the entropy series and check-moment product identity~\cite[Propositions 6(iii) and 7]{KumarYoungMacrisPfister2014} into~\eqref{eq:S}, and collecting terms according to~\eqref{eq:h}, gives
\begin{equation}\label{eq:Smoment}
S=\sum_{k\geq1}\gamma_k A_k[t-h_t(A_k,B_k)].
\end{equation}
Zeroth scalar powers are one, including at zero. Direct differentiation of~\eqref{eq:h} gives, on the unit cube,
\begin{align}
\partial_Bh_t&=A^{r-2}B^{g-2}\bigl\{(g-1)^2A(1-B)\nonumber\\
&\quad +g(r-t)B(1-A)+AB\bigr\}\geq0,\label{eq:hmono}\\
\partial_t(t-h_t)&=1+A^{r-2}(1-A)B^g\geq1.\nonumber
\end{align}
Every term in braces is nonnegative. Consequently
\begin{equation}\label{eq:Gbound}
S\geq I(a)[t-G_t(m,n)].
\end{equation}

Indeed, $0\leq A_k\leq m$ and $0\leq B_k\leq n$ for every $k$, so $h_t(A_k,B_k)\leq G_t(m,n)$. Applying this bound to each summand in~\eqref{eq:Smoment} leaves $\sum_k\gamma_kA_k=I(a)$ by~\eqref{eq:moments}. The higher moments remain unrestricted except for these valid bounds; no choice of a BMS density is discarded by this maximization.

For exact evaluation of $G_t$, put
\[
u=(r-t)n^g,\qquad v=n^{g-1}[(g-1)-(r+g-2-t)n]
\]
Equation~\eqref{eq:h} becomes $h_t(A,n)=A^{r-2}(u+vA)$, and a maximizer on $[0,m]$ is
\[
A_*=
\begin{cases}
m,&v\geq0,\\
\min\{m,-(r-2)u/((r-1)v)\},&v<0.
\end{cases}
\]
Indeed, $u\geq0$. For $r>2$, the derivative has the sign of $(r-2)u+(r-1)vA$ on $A>0$; continuity handles zero. For $r=2$, the function is $u+vA$, so $v<0$ gives $A_*=0$, also as in the formula.

To display the degree dependence, write $h_{r,g,t}=h_t$. Direct subtraction in~\eqref{eq:h} gives
\begin{align*}
h_{r+1,g,t}-h_{r,g,t}
&=-A^{r-2}(1-A)\bigl[(r-t)(1-A)B^g\\
&\qquad+A(g-1)(1-B)B^{g-1}\bigr]\leq0,\\
h_{r,g+1,t}-h_{r,g,t}
&=-A^{r-2}B^{g-1}(1-B)\bigl[(r-t)(1-A)B\\
&\qquad+A(g-1)(1-B)\bigr]\leq0.
\end{align*}
Here the arguments are $(A,B)\in[0,1]^2$ and $t\in[0,1]$. Thus, for fixed BMS densities $a,b$, the moment representation~\eqref{eq:Smoment} gives the degree comparison
\begin{equation}\label{eq:degreecomparison}
S_{r+1,g}(a,b)\geq S_{r,g}(a,b),\qquad
S_{r,g+1}(a,b)\geq S_{r,g}(a,b).
\end{equation}

This comparison keeps the density pair fixed. It therefore extends inequalities for $S$ that hold for arbitrary density pairs. The set of fixed points also changes with the degrees, so each use of fixed-point constraints retains its stated degree assumptions.

Two estimates handle opposite parts of the remaining problem. When the transmitted-message moment $n$ is small, maximizing over the punctured-message moments gives a nonnegative bound for $S$. When the punctured-message entropy $H(a)$ is small, comparison with $a=\Delta_\infty$ controls the loss in $S$ by that entropy. The next two lemmas quantify these statements.

\begin{lemma}[Mean squared reliability of transmitted messages]\label{lem:weaktrans}
Let $r\geq2$, $g\geq2$, and let $a,b$ be arbitrary BMS densities. If
\[
n\leq\frac{(r-1)(g-1)}{(r-1)(g-1)+1}
\]
then
\begin{equation}\label{eq:weaktransgeneral}
S\geq I(a)\{f(n)-(g-1)n^{g-1}+(g-2)n^g\}.
\end{equation}
For $r\geq2$ and $g\geq3$, the following quantitative bound holds:
\begin{equation}\label{eq:weakb}
n\leq14/25\ \Longrightarrow\ S\geq\frac{144}{390625}I(a)n\geq0.
\end{equation}
\end{lemma}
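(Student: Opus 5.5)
The plan is to bound each summand of the moment representation~\eqref{eq:Smoment} by a single constant. Since the lemma concerns arbitrary BMS densities, the only information about $t=I(b)$ I will use is $t\geq f(n)$ from~\eqref{eq:fbound}. Put $\tau=f(n)\in[0,1]$.

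\textbf{Reducing to a maximum at $A=1$.} By the $t$-derivative in~\eqref{eq:hmono}, $t-h_t(A,B)$ is nondecreasing in $t$, so $t-h_t(A_k,B_k)\geq\tau-h_\tau(A_k,B_k)$. By $\partial_Bh\geq0$ and $B_k\leq n$, we get $h_\tau(A_k,B_k)\leq h_\tau(A_k,n)$. Next I maximize $h_\tau(A,n)=A^{r-2}(u+vA)$ over all $A\in[0,1]$, where $u=(r-\tau)n^g\geq0$ and $v=n^{g-1}[(g-1)-(r+g-2-\tau)n]$. The goal is to show the maximum is attained at $A=1$. For $A>0$ the derivative has the sign of $(r-2)u+(r-1)vA$, which is affine in $A$ and nonnegative at $A=0$. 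It therefore suffices to check it at $A=1$. Dividing by $n^{g-1}$ and collecting terms, this condition becomes
\[
(r-1)(g-1)\geq n\bigl[(r-1)(g-1)+1-\tau\bigr].
\]
Because $\tau\geq0$, this is implied by the hypothesis $n\leq(r-1)(g-1)/((r-1)(g-1)+1)$; the case $r=2$, where $h$ is affine in $A$, is covered by the same computation. Evaluating at $A=1$ gives $h_\tau(1,n)=(g-1)n^{g-1}-(g-2)n^g$. Every bracket in~\eqref{eq:Smoment} is therefore at least the constant $\kappa=f(n)-(g-1)n^{g-1}+(g-2)n^g$. Since $\sum_k\gamma_kA_k=I(a)$ and $A_k\geq0$, this gives $S\geq I(a)\kappa$, which is~\eqref{eq:weaktransgeneral}. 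This holds whatever the sign of $\kappa$.

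\textbf{Quantitative bound.} For $g\geq3$ and $r\geq2$, the threshold $(r-1)(g-1)/((r-1)(g-1)+1)$ is at least $2/3>14/25$, so~\eqref{eq:weaktransgeneral} applies. I then reduce to $g=3$. A direct subtraction shows
\[
\bigl[g n^g-(g-1)n^{g+1}\bigr]-\bigl[(g-1)n^{g-1}-(g-2)n^g\bigr]=-(g-1)n^{g-1}(1-n)^2\leq0,
\]
so the subtracted polynomial decreases in $g$ and $\kappa$ increases in $g$. It therefore suffices to prove $\psi(n)\geq144/390625$ on $(0,14/25]$, where
\[
\psi(n)=f(n)/n-2n+n^2=\sum_{k\geq1}\gamma_kn^{k-1}-2n+n^2 .
\]
(At $n=0$ the claimed bound is trivial.) The function $\psi$ is convex. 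Its derivative at $14/25$ is negative, roughly $0.2-0.88$, which I will certify with a crude upper bound on the tail of the series. Hence $\psi$ is decreasing on the whole interval and its minimum is $\psi(14/25)$.

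\textbf{Main obstacle.} The final numeric check is very tight: $\psi(14/25)\approx0.8070-0.8064$, leaving a margin only slightly above $144/390625\approx3.7\times10^{-4}$. I plan to truncate the series after about six to eight terms, dropping the rest since all terms are positive. I will use a rational lower bound on $1/\ln2$ (such as $1.4426$) and exact rational arithmetic to verify $\psi(14/25)\geq144/390625$. If the margin is too thin at the first truncation, I will add terms of the series until the inequality closes.
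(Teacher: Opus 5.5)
Your proposal is correct and follows essentially the paper's argument. The paper likewise shows that $h_t(\cdot,n)$ is monotone on $[0,1]$ under the cutoff, evaluates at $A=1$ and uses $t\geq f(n)$, reduces to $g=3$, and finishes with a convexity argument at $n=14/25$; the one difference is that you compare the polynomials in $g$ directly where the paper invokes~\eqref{eq:degreecomparison}. The final check is far less tight than you estimate, since $\psi(14/25)\approx4.5\times10^{-3}$: the constant $144/390625$ is exactly the value at $14/25$ of the paper's cruder lower polynomial (four series terms and $1/\ln2>36/25$), so your planned truncation with $1/\ln2>1.4426$ closes easily.
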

\begin{proof}
Under the premise, $h_t(A,n)$ increases on $A\in[0,1]$. If $v\geq0$, this follows from $u\geq0$. If $v<0$ and $r>2$, the bracket determining the sign of the derivative is minimized at $A=1$, where
\[
(r-2)u+(r-1)v\geq n^{g-1}\{(r-1)(g-1)-[(r-1)(g-1)+1]n\}\geq0.
\]
For $r=2$, the premise itself gives $v\geq0$. Since $h_t(1,n)=n^{g-1}[(g-1)-(g-2)n]$, equations~\eqref{eq:Gbound} and~\eqref{eq:fbound} give $S\geq I(a)[t-h_t(1,n)]$ and $t\geq f(n)$, proving~\eqref{eq:weaktransgeneral}.

For the quantitative bound, the degree comparison~\eqref{eq:degreecomparison} reduces $g\geq3$ to $g=3$, with $r\geq2$ arbitrary. The cutoff in the premise is then at least $2/3>14/25$.
Using $\ln2<25/36$ from Appendix~\ref{app:elementary} and retaining the first four positive terms of the series~\eqref{eq:moments}~\cite[Proposition 7]{KumarYoungMacrisPfister2014} for $f$ gives
\[
\begin{aligned}
\frac{f(n)-n^{g-1}[(g-1)-(g-2)n]}{n}&\geq P_*(n),\\
P_*(n)&:=\frac{36}{25}\sum_{j=1}^{4}\frac{n^{j-1}}{2j(2j-1)}
-(g-1)n^{g-2}+(g-2)n^{g-1}.
\end{aligned}
\]
The derivative $P_*'$ increases and
$P_*'(14/25)=-10657/15625<0$. Consequently
$P_*(n)\geq P_*(14/25)=144/390625>0$.
Continuity includes $n=0$.
\end{proof}

Define
\[
\rho_g(B)=B-(g-1)B^{g-1}+(g-2)B^g.
\]
The finite geometric-sum identity gives
\begin{equation}\label{eq:rhog}
\rho_g(B)=B(1-B)^2\sum_{j=1}^{g-2}jB^{j-1}\geq0
\qquad(0\leq B\leq1).
\end{equation}
The sum is empty and equals zero when $g=2$.
\begin{lemma}[Comparison with perfect messages from punctured variables]\label{lem:perfect}
For $r\geq2$, $g\geq2$, and arbitrary BMS densities $a,b$,
\begin{equation}\label{eq:nearperfect}
S(a,b)\geq S(\Delta_\infty,b)-I(b)H(a)
\geq\frac7{10}\rho_g(n)-H(a).
\end{equation}
\end{lemma}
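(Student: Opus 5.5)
The plan is to work entirely in the moment representation~\eqref{eq:Smoment}, $S=\sum_k\gamma_kA_k[t-h_t(A_k,B_k)]$. For each $k$ we compare the summand at $A_k$ with the summand at $A_k=1$, which corresponds to $a=\Delta_\infty$. Both inequalities in~\eqref{eq:nearperfect} then reduce to termwise inequalities in the moments.

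\emph{First inequality.} Fix $t$ and $B\in[0,1]$, and put
\[
\phi(A)=tA-k(A),\qquad k(A)=A\,h_t(A,B)=(r-t)A^{r-1}B^g+(g-1)A^rB^{g-1}-(r+g-2-t)A^rB^g .
\]
Then
\[
\phi(1)-\phi(A)=t(1-A)-\bigl[k(1)-k(A)\bigr].
\]
It therefore suffices to show that $k$ is nondecreasing on $[0,1]$, since then $\phi(A)\geq\phi(1)-t(1-A)$.

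Differentiating gives $k'(A)=A^{r-2}B^{g-1}\,\lambda(A)$, where
\[
\lambda(A)=(r-1)(r-t)B+rA\bigl[(g-1)-(r+g-2-t)B\bigr].
\]
The function $\lambda$ is linear in $A$, so it suffices to check the endpoints:
\begin{itemize}
\item $\lambda(0)=(r-1)(r-t)B\geq0$;
\item $\lambda(1)=r(g-1)(1-B)+tB\geq0$, after collecting the $B$-coefficient as $(r-1)(r-t)-r(r+g-2-t)=r+t-rg$.
\end{itemize}
Applying this with $(A,B)=(A_k,B_k)$ and summing with weights $\gamma_k$ gives
\[
S(a,b)\geq S(\Delta_\infty,b)-t\sum_k\gamma_k(1-A_k).
\]
By~\eqref{eq:moments}, $\sum_k\gamma_k(1-A_k)=1-I(a)=H(a)$, and $t=I(b)$. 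This is the first inequality.

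\emph{Second inequality.} At $A=1$ the coefficients of $B^g$ in~\eqref{eq:h} combine to $-(g-2)$, so $h_t(1,B)=(g-1)B^{g-1}-(g-2)B^g$. Using $t=\sum_k\gamma_kB_k$ from~\eqref{eq:moments}, we get
\[
S(\Delta_\infty,b)=\sum_k\gamma_k\rho_g(B_k)\geq\gamma_1\rho_g(n),
\]
because each $\rho_g(B_k)\geq0$ by~\eqref{eq:rhog}. Next, $\gamma_1=1/(2\ln2)\geq7/10$ is equivalent to $\ln2\leq5/7$, which follows from the elementary bound $\ln2<25/36<5/7$ cited in Appendix~\ref{app:elementary}. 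Finally, $I(b)H(a)\leq H(a)$. Together these give $S(a,b)\geq\frac7{10}\rho_g(n)-H(a)$.

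\emph{Main obstacle.} The only step that needs care is the monotonicity of $A\,h_t(A,B)$ on $[0,1]$. A direct bound on $h_t(A_k,B_k)$, as in~\eqref{eq:Gbound}, does not suffice here, because we must compare the summand at $A_k$ with its value at $A=1$ and not with a maximum over $A$. Linearity of $\lambda$ in $A$ reduces this to the two endpoint checks above; the check at $A=1$ is where the algebra must be done carefully.
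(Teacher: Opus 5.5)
Your proof is correct and follows the paper's argument. The paper likewise shows that $A\,h_t(A,B)$ (its $-N_t$) is nondecreasing in $A$, compares each summand of the moment expansion with its value at $A_k=1$, and concludes with $S(\Delta_\infty,b)=\sum_k\gamma_k\rho_g(B_k)\geq\gamma_1\rho_g(n)$, $\gamma_1>7/10$, and $I(b)\leq1$. The only cosmetic difference is that you verify the sign of the derivative bracket through linearity in $A$ and two endpoint checks, whereas the paper rewrites it as the manifestly nonnegative sum $r(g-1)A(1-B)+(r-1)(r-t)B(1-A)+tAB$.
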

\begin{proof}
Compare each term of~\eqref{eq:Smoment} with its value at $A_k=1$, which corresponds to perfect information in $a$. The linear term $tA_k$ is smaller by $t(1-A_k)$; by~\eqref{eq:moments}, their weighted sum is $I(b)H(a)$. The derivative below shows that the remaining part is at least its value at $A_k=1$, proving the first bound.

Put $s_t(A,B)=A[t-h_t(A,B)]$ and $N_t=s_t-tA$. Differentiating~\eqref{eq:h} gives
\[
\begin{aligned}
\partial_AN_t={}&-A^{r-2}B^{g-1}\bigl\{r(g-1)A(1-B)\\
&\qquad+(r-1)(r-t)B(1-A)+tAB\bigr\}\leq0.
\end{aligned}
\]
Every term in braces is nonnegative, so
\[
N_t(A,B)\geq N_t(1,B)=-(g-1)B^{g-1}+(g-2)B^g.
\]
Adding $tA$ and summing according to~\eqref{eq:Smoment} proves the first bound. Equations~\eqref{eq:Smoment} and~\eqref{eq:rhog}, together with the logarithm estimate in Appendix~\ref{app:elementary}, give
\[
S(\Delta_\infty,b)=\sum_k\gamma_k\rho_g(B_k)
\geq\gamma_1\rho_g(n),\qquad\gamma_1>7/10.
\]
Now use $I(b)\leq1$.
\end{proof}

\subsection{Mean squared reliability of messages from punctured variables}\label{sec:specialized}
The next bound handles the entire region $m\leq4/5$ for $r,g\geq3$. It gives $S\geq0$ for arbitrary density pairs in this region, and hence $U>0$ at its nontrivial fixed points when $C(c)>R$, by~\eqref{eq:decomp}. Only the region with larger $m$ remains to be checked.

\begin{lemma}[Mean squared reliability of messages from punctured variables]\label{lem:weak}
For $r,g\geq3$ and arbitrary BMS pairs,
\begin{equation}
m\leq4/5\ \Longrightarrow\ S\geq I(a\cc b)/1000\geq I(a)I(b)/1000\geq0,\label{eq:weaka}
\end{equation}
\end{lemma}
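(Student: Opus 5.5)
The plan is to prove the first inequality termwise in the moment representation~\eqref{eq:Smoment}, and to quote the second from the covariance argument following~\eqref{eq:productH}. That argument gives $I(a\cc b)\geq I(a)I(b)$, and nonnegativity is immediate. By the degree comparison~\eqref{eq:degreecomparison}, which holds for arbitrary density pairs, it suffices to treat $r=g=3$. The product rule and~\eqref{eq:moments} give
\[
I(a\cc b)=\sum_{k\geq1}\gamma_kA_kB_k .
\]
Comparing with~\eqref{eq:Smoment}, it is enough to show, for every $k$,
\[
A_k\bigl[t-h_t(A_k,B_k)\bigr]\geq\frac{A_kB_k}{1000}.
\]
Since $A_k\geq0$, this reduces to $t-h_t(A_k,B_k)\geq B_k/1000$.

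The global statistic $t$ is removed in three steps. First, $B_k\leq n$ and $\partial_Bh_t\geq0$ from~\eqref{eq:hmono} give $h_t(A_k,B_k)\leq h_t(A_k,n)$. Also $B_k/1000\leq n/1000$, so it suffices to prove $t-h_t(A,n)\geq n/1000$ for all $A\in[0,4/5]$, using $A_k\leq m\leq4/5$. Second, $\partial_t(t-h_t)\geq1$ by~\eqref{eq:hmono}, and $t=I(b)\geq f(n)$ by~\eqref{eq:fbound}. Hence we may replace $t$ by $f(n)$. Third, for $r=g=3$ we have $h_t(A,n)=uA+vA^2$ with
\[
u=(3-t)n^3,\qquad v=n^2\bigl(2-(4-t)n\bigr).
\]
The exact maximizer $A_*$ given before Lemma~\ref{lem:weaktrans} then reduces the maximization over $A\in[0,4/5]$ to $A=4/5$ or the interior vertex $-u/(2v)$. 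What remains is a one-variable inequality
\[
\Phi(n):=f(n)-\max_{0\leq A\leq 4/5}h_{f(n)}(A,n)-\frac{n}{1000}\geq0,\qquad n\in[0,1].
\]

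Sanity checks at the endpoints suggest this holds with room. Near $n=0$ we have $f(n)\geq\gamma_1n>\tfrac7{10}n$ while $h=O(n^2)$. At $n=1$ we have $f(1)=1$ and the maximum is $h=A(2-A)\leq 24/25$, leaving a margin of $1/25-1/1000$. I would split $[0,1]$ into a few subintervals. On subintervals away from $1$, I would bound $f$ below by the truncated series with $\ln2<25/36$ from Appendix~\ref{app:elementary}, as in the proof of Lemma~\ref{lem:weaktrans}. I would bound $\max_Ah$ above using monotonicity of $h$ in $n$ and in $t$ (decreasing in $t$) at the interval endpoints. Each piece then becomes a rational polynomial inequality checked by derivative monotonicity or by comparison at the endpoints.

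The main obstacle is the region near $n=1$. There the truncated series for $f$ is far from $f(1)=1$, and $f'$ blows up, so polynomial lower bounds are poor. I would handle $[n_0,1]$ with $n_0$ close to $1$ (around $0.9$) separately, using the exact value $f(n)=J(\sqrt n)=1-h_2((1-\sqrt n)/2)$. Both $f$ and the upper bound on $h$ are monotone in $n$, so on each small subinterval it suffices to compare $f$ at the left endpoint with $\max_Ah$ at the right endpoint, and the $1/25$ margin at $n=1$ should absorb this. If the margin proves too tight near the vertex branch of $A_*$, I would instead use the cruder bound $h_t(A,n)\leq h_t(4/5,n)+\max(0,v)\,(A_*^2-16/25)$, or refine the subdivision.
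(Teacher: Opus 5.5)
Your reduction is sound and matches the paper's up to renaming. It uses the degree comparison to $r=g=3$, the identity $I(a\cc b)=\sum_k\gamma_kA_kB_k$, the termwise target $t-h_t(A_k,B_k)\geq B_k/1000$, and removal of $B_k$ and $t$ through~\eqref{eq:hmono} and~\eqref{eq:fbound}. The paper keeps $B_k$, using $t\geq f(n)\geq f(B_k)$, whereas you pass to $n$; the resulting one-variable function is the same.

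The gap is the final inequality $\Phi\geq0$. It carries all the content, and you leave it to an unexecuted subdivision whose difficulty you misjudge. Two ideas are missing:
\begin{itemize}
\item For $A\leq4/5$ and $t\geq\tfrac7{10}B$ one has $\partial_Ah_t\geq0$. The condition holds for $t=f(B)$ because $\gamma_1>7/10$. So the maximum over $A$ is always at $A=4/5$, the vertex branch never occurs, and $\Phi(n)=n[K(n)-1/1000]$ with $K$ as in~\eqref{eq:Kdefinition}.
\item The series for $K$ has only one negative term, and it is linear, so $K''\geq2(\gamma_3+\chi_0)>2/5$ uniformly. One rational evaluation of $K$ and $K'$ at $q=441/484$, followed by completing the square, then gives $K\geq3/1000$ on the whole interval.
\end{itemize}

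Without these ideas, your plan does not close as described, because the inequality is tightest in the interior, not at the endpoints. $K$ reaches its minimum, about $0.004$, near $n\approx0.91$, so $\Phi$ drops to roughly $0.003$ there. The $1/25$ margin at $n=1$ says nothing about this region.

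Concretely, at $n=0.85$ one has $\Phi\approx0.0097$. The four-term series with $\ln2<25/36$, as used in Lemma~\ref{lem:weaktrans}, underestimates $f(0.85)$ by about $0.02$. Your ``away from $1$'' lower bound is therefore negative there.

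On $[0.9,1]$, comparing $f$ at the left endpoint with $\max_Ah$ at the right endpoint loses about $1.6\delta$ on a cell of width $\delta$ near $0.91$. That forces cell widths of order $10^{-3}$, not a few subintervals.

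Completing your route would require a certified interval computation with rigorous logarithm enclosures over many cells. As written, the one-variable inequality remains unproved. The monotonicity-plus-convexity argument is what reduces it to a short hand calculation in the paper.
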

\begin{proof}
The degree comparison~\eqref{eq:degreecomparison} reduces the proof to $r=g=3$. Put $a_0=4/5$. We retain $r,g$ in degree-derived coefficients; the rational cutoffs and margins are chosen for this local specialization.

We seek the termwise bound $t-h_t(A_k,B_k)\geq B_k/1000$. Since $A_k\leq a_0$, monotonicity first removes $A_k$ by replacing it with $a_0$. The lower bound $t\geq f(B_k)$ from~\eqref{eq:fbound} and $B_k\leq n$ then leaves a function of $B_k$ alone. A lower bound on its curvature will prove positivity over the entire interval.

When $A\leq a_0$ and $t\geq f(B)\geq(7/10)B$, $h_t(A,B)$ increases with $A$. Differentiating~\eqref{eq:h} gives
\[
\partial_Ah_t=A^{r-3}B^{g-1}\bigl[(r-2)(r-t)B
+(r-1)A((g-1)-(r+g-2-t)B)\bigr].
\]
If the coefficient of $A$ in the bracket is negative, its minimum occurs at $A=a_0$. There the coefficient of $t$ is $[(r-1)a_0-(r-2)]B\geq0$. Replacing $t$ by $(7/10)B$ gives a decreasing polynomial in $B$, whose value at $B=1$ is
\[
(r-2)(r-7/10)+(r-1)a_0[(g-1)-(r+g-2-7/10)]=11/50>0.
\]
Define the coefficients inherited from $h_t(a_0,B)$ by
\begin{align*}
\alpha_0&=a_0^{r-2}(1-a_0),&\beta_0&=(g-1)a_0^{r-1},\\
\chi_0&=(r+g-2)a_0^{r-1}-r a_0^{r-2}.
\end{align*}
Therefore
\[
t-h_t(A,B)\geq(1+\alpha_0B^g)f(B)-\beta_0B^{g-1}+\chi_0B^g.
\]
For $0<B<1$, define
\begin{equation}\label{eq:Kdefinition}
K(B)=\frac{(1+\alpha_0B^g)f(B)-\beta_0B^{g-1}+\chi_0B^g}{B}.
\end{equation}
Substitution of the series~\eqref{eq:moments}~\cite[Proposition 7]{KumarYoungMacrisPfister2014} into~\eqref{eq:Kdefinition} gives
\[
K(B)=\sum_{k\geq1}\gamma_kB^{k-1}
+\alpha_0\sum_{k\geq1}\gamma_kB^{k+g-1}
-\beta_0B^{g-2}+\chi_0B^{g-1}.
\]
The negative term is linear under the stated specialization, and all other coefficients are nonnegative. Using $\ln2<5/6$ from Appendix~\ref{app:elementary}, we have $\gamma_g=1/[2g(2g-1)\ln2]>1/25$,
and hence $K''(B)\geq(g-1)(g-2)(\gamma_g+\chi_0)B^{g-3}>2/5$.
At $q=441/484$, the elementary bounds in Appendix~\ref{app:elementary} give
\[
K(q)\geq\frac7{2000},\qquad |K'(q)|\leq\frac1{50}.
\]

The rational point $q$ is useful because $\sqrt q=21/22$ permits elementary logarithm bounds, and the derivative there is small. A positive value at this single point is supplemented by the uniform curvature bound: together they give a quadratic lower bound valid for every $B$. Its minimum is controlled by completing the square.
Integrating the second-derivative bound gives a quadratic strengthening of the tangent lower bound for a convex function~\cite[Section 3.1.3]{BoydVandenberghe2004}. Completing the square yields
\[
K(B)\geq K(q)+K'(q)(B-q)+\frac{(B-q)^2}{5}
\geq\frac7{2000}-\frac54\left(\frac1{50}\right)^2
=\frac3{1000}>\frac1{1000}.
\]
Continuity includes both endpoints. Thus
\[
(1+\alpha_0B^g)f(B)-\beta_0B^{g-1}+\chi_0B^g\geq B/1000
\qquad(0\leq B\leq1).
\]
Apply this inequality to each $(A_k,B_k)$ in~\eqref{eq:Smoment}
and use the covariance argument following~\eqref{eq:productH}, which gives $I(a\cc b)\geq I(a)I(b)$.

\end{proof}

\subsection{All sufficiently large punctured-variable degrees}\label{sec:large-degrees}
Set $r=g=3$ and $a_0=4/5$, so $\rho_g(n)=n(1-n)^2$. The preceding bounds establish positivity when either message moment is small. For $\ell\geq33$, convexity of the necessary entropy bound handles the remaining region: it either excludes a nontrivial fixed point or makes its potential positive. This proves positivity at every nontrivial fixed point in this degree range when $C(c)>R$.

\begin{proposition}\label{prop:large}
For every integer $\ell\geq33$ and every $c$ with $C(c)>r/\ell$,
each fixed point other than $\gd$ has $U>0$.
\end{proposition}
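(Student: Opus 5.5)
We reduce to fixed points in $\Fnt(c)$ with $n>14/25$ and $m>4/5$, and show that for $\ell\geq33$ either such a point cannot exist or its potential is positive. The point $(\Delta_0,c)$ has $U=C(c)-R>0$ by~\eqref{eq:trivial-potential}. For any other fixed point $x\neq\gd$, Lemma~\ref{lem:reduction} gives $U=S+(t-R)D$ with $t-R>0$ and $D>0$. It therefore suffices to show $S\geq0$, or more generally $S+(t-R)D>0$. Lemma~\ref{lem:weaktrans} gives $S\geq0$ when $n\leq14/25$, and Lemma~\ref{lem:weak} gives $S\geq0$ when $m\leq4/5$. So we may assume $n>14/25$ and $m>4/5$ and work with $r=g=3$ throughout.

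In this remaining region we use Lemma~\ref{lem:perfect}:
\[
S\geq\tfrac7{10}\,n(1-n)^2-H(a).
\]
We bound $H(a)$ through~\eqref{eq:ch}, which gives
\[
H(a)\leq(1-m^2n^3)^{(\ell-1)/2}.
\]
With $m>4/5$ and $n>14/25$, the base $1-m^2n^3$ is at most $1-(16/25)(14/25)^3$, a fixed constant below one. Hence $H(a)$ decays geometrically in $\ell$.

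The difficulty is that $\rho_3(n)=n(1-n)^2$ vanishes as $n\to1$, so the geometric bound on $H(a)$ must be compared against a quantity that can be small. We split according to whether $1-n$ is small.

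\emph{Case $n$ bounded away from one.} Suppose $1-n\geq\varepsilon_\ell$ for a threshold $\varepsilon_\ell$ to be chosen. Then $\rho_3(n)\geq(14/25)\varepsilon_\ell^2$. At the same time, $(1-m^2n^3)^{(\ell-1)/2}$ is at most $(1-(16/25)(14/25)^3)^{16}$ already at $\ell=33$. We choose $\varepsilon_\ell$ so that this geometric bound lies below $(7/10)(14/25)\varepsilon_\ell^2$, which gives $S>0$ in this case.

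\emph{Case $n$ close to one.} Suppose $1-n<\varepsilon_\ell$. Then $m^2n^3$ is close to $m^2$, and we use the refinement
\[
H(a)\leq(1-m^2n^3)^{(\ell-1)/2}\leq\bigl(1-m^2+3(1-n)\bigr)^{(\ell-1)/2}.
\]
We then apply the fixed-point constraint~\eqref{eq:cm}, $m\geq\Rk_{\ell-1}(m^2n^3)$. For large $\ell$, $\Rk_{\ell-1}(q)$ is near one whenever $q$ is bounded away from zero, and $1-\Rk_{\ell-1}(q)$ decays like $(1-q)^{(\ell-1)/2}$ times a polynomial. This forces $m$ to be extremely close to one. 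We then make $m$ self-consistent via $1-m\leq C(1-m^2n^3)^{(\ell-1)/2}$, which bounds $1-m$ by a multiple of $(1-n)^{(\ell-1)/2}$ up to constants. This makes $H(a)$ of order $(1-n)^{(\ell-1)/2}$, which is far smaller than $(1-n)^2$ once $\ell\geq33$.

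If this route is too lossy, we use the positive term $(t-R)D$ as a fallback. We bound
\[
D=H(a\star q_1)\geq H(a)H(q_1),
\]
and compare it against the negative part of $S$ using the sharper first inequality $S\geq S(\Delta_\infty,b)-tH(a)$ of Lemma~\ref{lem:perfect}, so that $-RH(a)$ effectively replaces $-tH(a)$. The remainder $t(1-D/H(a))$ still needs control; we expect convexity of the map $m\mapsto(1-m^2n^3)^{(\ell-1)/2}$, together with tangent or chord estimates, to give inequalities that are monotone in $\ell$. Those inequalities can then be verified at $\ell=33$ alone.

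The main obstacle is the corner $n\to1$, where both sides vanish, so the whole argument relies on extracting the rate at which the fixed-point constraint~\eqref{eq:cm} forces $H(a)$ to vanish. The first thing we would try is to bound $1-\Rk_{\ell-1}(q)$ explicitly through~\eqref{eq:Rformula}, keeping only the $j=0$ and $j=1$ terms. From these we would extract a power $(1-q)^{(\ell-1)/2}$ and check that its exponent, at least $16$, beats the quadratic vanishing of $\rho_3$.
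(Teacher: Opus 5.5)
Your reduction to $m>4/5$, $n>14/25$ and your use of Lemma~\ref{lem:perfect} in the form $S\geq\frac7{10}n(1-n)^2-H(a)$ agree with the paper. However, the case split you build on them does not close at $\ell=33$.

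\emph{Case~1.} The only information on $m$ is $m>4/5$. This gives $H(a)\leq(346721/390625)^{16}\approx0.148$. But throughout $n\geq14/25$ we have $\frac7{10}n(1-n)^2\leq\frac7{10}\cdot\frac{14}{25}\bigl(\frac{11}{25}\bigr)^2=\frac{5929}{78125}\approx0.076$. So no $\varepsilon_\ell$ exists. This crude bound drops below $0.076$ only for $\ell\geq45$, and then only near $n=14/25$.

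\emph{Case~2.} Your linearized bound is at best $(3(1-n))^{16}$, attained at $m=1$. It lies below $\frac7{10}n(1-n)^2$ only for $1-n\lesssim0.27$. The band $14/25<n\lesssim0.73$ is therefore covered by neither case.

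\emph{Self-consistency.} Making $m$ \emph{self-consistent} is where the real work lies. The inequality $1-m\leq C(1-m^2n^3)^{(\ell-1)/2}$ holds for $m$ near one, and you must show it fails for the remaining $m\geq4/5$. You do not say how.

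\emph{Using \eqref{eq:Rformula}.} Keeping only the $j=0,1$ terms gives a \emph{lower} bound on $1-\Rk_{\ell-1}(q)$, of order $(1-q)^{\ell-2}$. That is the wrong direction for bounding $1-m$ from above. Bounding every term does give $1-\Rk_k(q)\leq(1-q)^{k/2}$, but this is the same estimate as $1-m\leq H(a)\leq(1-q)^{(\ell-1)/2}$, which already follows from \eqref{eq:fbound} and \eqref{eq:ch}.

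The missing idea is to exclude an explicit interval of $m$ using the sandwich $1-m\leq H(a)\leq F_n(m):=(1-m^2n^3)^{(\ell-1)/2}$. The cut point is chosen so that exclusion and positivity meet exactly. The paper does this as follows.
\begin{enumerate}
\item Put $h(n)=\frac7{10}n(1-n)^2$ and $m_*=1-h(n)$.
\item Check that $F_n$ is convex in $m$ for $m\geq4/5$, $n\geq14/25$, and $(\ell-1)/2\geq16$.
\item Prove the two endpoint inequalities $F_n(4/5)<1/5$ and $F_n(m_*)<h(n)=1-m_*$.
\item The chord bound then gives $F_n(m)<1-m$ on all of $[4/5,m_*]$, so no fixed point has $m$ there.
\item For $m\geq m_*$, monotonicity gives $H(a)\leq F_n(m_*)<h(n)$, and Lemma~\ref{lem:perfect} gives $S>0$.
\end{enumerate}
This treats every $n\in[14/25,1)$ at once. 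The only split in $n$ is in verifying $F_n(m_*)<h(n)$, separately on $[14/25,4/5]$ and on $[4/5,1)$.

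That verification shows how little slack there is at $\ell=33$. On $[14/25,4/5]$ it reduces to $2640625/169344\approx15.6<16\leq(\ell-1)/2$. At $n=14/25$ it reduces to $x_0^{16}\approx0.0748<0.0759\approx h(14/25)$. An unspecified constant $C$, or the linearization $1-m^2n^3\leq1-m^2+3(1-n)$, would therefore not survive at $\ell=33$. The fallback through $(t-R)D$ is not needed.
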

\begin{proof}
Lemmas~\ref{lem:weak} and~\ref{lem:weaktrans}, together with~\eqref{eq:decomp}, handle $m\leq4/5$ or $n\leq14/25$.
Put $p=(\ell-1)/2\geq16$ and, for the remaining region, define
\[
F_n(m)=(1-m^{r-1}n^g)^p,\qquad
h(n)=\frac7{10}\rho_g(n)=\frac7{10}n(1-n)^2,\qquad m_*(n)=1-h(n).
\]
Equations~\eqref{eq:fbound} and~\eqref{eq:ch} require $1-m\leq H(a)\leq F_n(m)$ at a fixed point.

For fixed $n<1$, the choice $m_*=1-h(n)$ divides the task at a point where $1-m_*=h(n)$. On $[a_0,m_*]$, convexity and bounds at the two endpoints will give $F_n(m)<1-m$, contradicting the necessary entropy bounds. On $[m_*,1]$, decreasing $F_n$ below $h(n)$ will instead give $S>0$ by Lemma~\ref{lem:perfect}. Thus the calculation below only needs convexity and the two endpoint bounds $F_n(a_0)<1-a_0$ and $F_n(m_*)<h(n)$. Increasing $\ell$ decreases $F_n$, which is why one lower cutoff for $p$ handles all larger degrees.
Differentiating the displayed definition of $F_n$ twice gives, for $m\geq4/5$ and $n\geq14/25$,
\[
\begin{aligned}
F_n''(m)={}&p(r-1)m^{r-3}n^g(1-m^{r-1}n^g)^{p-2}\\
&\quad\cdot\bigl\{[(r-1)p-1]m^{r-1}n^g-(r-2)\bigr\}\geq0,
\end{aligned}
\]
since $[(r-1)p-1]m^{r-1}n^g\geq[(r-1)16-1]a_0^{r-1}(14/25)^g>r-2$.
Moreover,
\begin{equation}\label{eq:tail1}
F_n(a_0)\leq[1-a_0^{r-1}(14/25)^g]^{16}=(346721/390625)^{16}
<(9/10)^{16}<(2/3)^4=16/81<1/5.
\end{equation}
Here $(9/10)^4=6561/10000<2/3$.

We next prove $F_n(m_*)<h(n)$ for $14/25\leq n<1$.
Since $h$ decreases on this interval and
$h(14/25)=5929/78125<1/13$, we have $m_*>12/13$.
Put $c_0=(12/13)^{r-1}$. For $14/25\leq n\leq4/5$, the ratio
\[
\frac{(1-c_0n^g)^p}{n(1-n)^2}
\]
is decreasing. Indeed, direct differentiation gives
\[
\frac{d}{dn}\log\frac{(1-c_0n^g)^p}{n(1-n)^2}
=-\frac{p g c_0n^{g-1}}{1-c_0n^g}
+\frac{3n-1}{n(1-n)}.
\]
All denominators are positive on this interval. Since $3n-1\leq7/5$, $1-c_0n^g\leq1$, $n\geq14/25$, and $1-n\geq1/5$, we have
\[
\frac{(3n-1)(1-c_0n^g)}{g c_0n^g(1-n)}
\leq\frac{7/5}{g c_0(14/25)^g(1/5)}
=\frac{2640625}{169344}<16\leq p.
\]
This inequality makes the logarithmic derivative strictly negative. Here $3n-1$ comes from differentiating the polynomial $n(1-n)^2$; its coefficient is independent of the node degree $g$. At the left endpoint the base is $x_0=1-c_0(14/25)^g=2245489/2640625<1701/2000$.
The successive squaring bounds in Appendix~\ref{app:elementary} give
\begin{equation}\label{eq:tail2}
x_0^{16}<\frac{5929}{78125}=h(14/25).
\end{equation}
These comparisons prove $F_n(m_*)<h(n)$ throughout this interval.

For $a_0\leq n<1$ we have $m_*>n$. Put $b_0=1-a_0^{r+g-1}<7/10$.
The finite geometric sum gives $1-n^{r+g-1}\leq(r+g-1)(1-n)$. Splitting off two powers and using $n\geq a_0$ and $1-n^{r+g-1}\leq b_0$ gives
\[
\frac{F_n(m_*)}{n(1-n)^2}
\leq\frac{(1-n^{r+g-1})^p}{n(1-n)^2}
\leq\frac{(r+g-1)^2}{a_0}b_0^{p-2}
<\frac{(r+g-1)^2}{a_0}(7/10)^{12}
<\frac{125}{256}<\frac7{10}.
\]
Here $p-2\geq14$ and $(7/10)^4<1/4$. This completes the second endpoint bound.

The chord upper bound for a convex function~\cite[Section 3.1.1]{BoydVandenberghe2004} and the strict endpoint inequalities now imply
$F_n(m)<1-m$ for every $4/5\leq m\leq m_*$, excluding a fixed point
there. For $m\geq m_*$, monotonicity instead gives
$H(a)\leq F_n(m)\leq F_n(m_*)<h(n)$, so Lemma~\ref{lem:perfect} yields $S>0$.
Finally, if $n=1$, then $F_1(1)=0$; convexity and the strict bound at
$4/5$ give $F_1(m)<1-m$ for $4/5\leq m<1$.
The remaining corner $(m,n)=(1,1)$ is the perfect-information fixed point.
Equation~\eqref{eq:decomp} therefore gives $U>0$ at every fixed point other than $\gd$.
\end{proof}

\section{Finite certificates and positive fixed-point gaps}\label{sec:certificates}
For $r=g=3$ and $C(c)>R$, the analytic bounds have settled all $\ell\geq33$. The remaining task is positivity at $x\in\Fnt(c)$, defined in~\eqref{eq:fixedpoint-sets}, for $4\leq\ell\leq32$, which we complete using exact interval certificates. Together with the known value $U(\Delta_0,c;c)=C(c)-R>0$, this proves positivity on the whole set $\Fneq(c)$. Separation from $\gd$ and compactness then turn these signs into the positive gap needed for the coupling argument in Section~\ref{sec:saturation}.

\paragraph*{Why interval verification gives a rigorous proof}
Interval arithmetic gives lower and upper bounds that hold for every real point of an interval or rectangle~\cite[Chapters 2--5]{MooreKearfottCloud2009}. A strictly positive lower bound therefore proves positivity throughout the region. For example, if $x\in[1/2,3/4]$, then $1-x\in[1/4,1/2]$, so
\[
\frac18\leq x(1-x)\leq\frac38
\qquad\text{for every }x\in[1/2,3/4].
\]
These rational inequalities prove positivity on the entire interval, including its endpoints. The bounds need not be sharp.

In the computations below, all interval endpoints are rational. When rounding is needed, the lower endpoint is rounded downward and the upper endpoint upward, so the interval continues to contain every possible exact value. Logarithm evaluations include a proved bound for the omitted series terms. Appendix~\ref{app:arithmetic} justifies the enclosure rules for every operation; thus rounding and truncation errors are included in each sign test. An enclosure containing zero does not by itself decide strict positivity. Subdivision can tighten an inconclusive bound.

Here the analytic reductions enclose all fixed points other than $\gd$ by scalar necessary conditions. Each remaining closed rectangle is then shown either to contain no such fixed point or to imply positive potential at every such fixed point it contains. The certificate records the rectangles and their justifications. Verifying both the inequalities and a complete cover, including all shared boundaries, proves the finite sign conditions rigorously.

A positive value at each fixed point in $\Fneq(c)$ does not by itself give a positive infimum: positive values could approach zero along a sequence. We first prove that $\gd$, where $U=0$, has a sufficiently small neighborhood containing no other fixed point. The separation lemma establishes compactness of $\Fneq(c)$ under its stated general-degree assumptions, covering both the second trivial fixed point and all points of $\Fnt(c)$. Once positivity is established, continuity gives an attained, strictly positive minimum.

\begin{lemma}[Separation from perfect information]\label{lem:separation}
Fix integers $\ell>r\geq2$ and $g\geq2$. If $g\geq3$, put $d=r+g-1$ and $\mu=\min\{\ell-1,g-1\}>1$. For every BMS density $c$, every uncoupled fixed point distinct from $\gd$ then satisfies
\begin{equation}\label{eq:separation}
\max\{Z(a),Z(b)\}\geq d^{-\mu/(\mu-1)}.
\end{equation}
If $g=2$ and $C(c)>0$, put $z=Z(c)<1$ and
\[
B_{r,g}(z)=(r-1)+\frac{grz}{1-z}.
\]
Every uncoupled fixed point distinct from $\gd$ satisfies
\begin{equation}\label{eq:separationg2}
Z(a)\geq B_{r,g}(z)^{-(\ell-1)/(\ell-2)}.
\end{equation}
In either case, $\Fneq(c)$ is nonempty and compact, and $\gap(c)=\min_{x\in\Fneq(c)}U(x;c)$. Let $\mathcal C$ be a nonempty compact family of BMS channels; when $g=2$, also assume $C(c)\geq C_0>0$ for all $c\in\mathcal C$. Then the set $\{(c,x):c\in\mathcal C,\ x\in\Fneq(c)\}$ is compact. If $U(x;c)>0$ throughout this set, then $\inf_{c\in\mathcal C}\gap(c)>0$.
\end{lemma}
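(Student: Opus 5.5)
The plan is to derive the separation bounds from the Bhattacharyya inequalities in~\eqref{eq:HZ} applied to the fixed-point equations~\eqref{eq:q}--\eqref{eq:DE}. Compactness and attainment of the minimum will then follow from the weak topology on BMS densities.

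\emph{Case $g\geq3$.} At a fixed point, $Z(a)=Z(q_1)^{\ell-1}$ and $Z(b)=Z(c)Z(q_2)^{g-1}\leq Z(q_2)^{g-1}$, by the variable-convolution identity in~\eqref{eq:HZ}. Iterating the check bound $Z(x\cc y)\leq Z(x)+Z(y)$ over the sockets in~\eqref{eq:q} gives
\[
Z(q_1)\leq(r-1)Z(a)+gZ(b),\qquad Z(q_2)\leq rZ(a)+(g-1)Z(b).
\]
Put $\zeta=\max\{Z(a),Z(b)\}$, so that both check Bhattacharyya parameters are at most $d\zeta$.
\begin{itemize}
\item If $d\zeta\geq1$, then $\zeta\geq1/d\geq d^{-\mu/(\mu-1)}$ and we are done.
\item Otherwise $d\zeta<1$, and raising to the powers $\ell-1,g-1\geq\mu$ gives $\zeta\leq(d\zeta)^{\mu}$. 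Hence either $\zeta=0$ or $\zeta^{\mu-1}\geq d^{-\mu}$.
\end{itemize}
The case $\zeta=0$ means $Z(a)=Z(b)=0$, i.e.\ both densities are $\Delta_\infty$, so the point is $\gd$. This proves~\eqref{eq:separation}.

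\emph{Case $g=2$.} Here the transmitted exponent is $g-1=1$, so the same iteration gives no gain. Instead use the channel factor: $Z(b)=zZ(q_2)\leq z(rZ(a)+Z(b))$. Since $z<1$, this yields $Z(b)\leq rzZ(a)/(1-z)$. Substituting into the bound for $Z(q_1)$ gives $Z(q_1)\leq B_{r,g}(z)Z(a)$, and therefore
\[
Z(a)\leq\bigl(B_{r,g}(z)Z(a)\bigr)^{\ell-1}.
\]
Since $\ell-1\geq2$, either $Z(a)=0$, or $Z(a)\geq B_{r,g}(z)^{-(\ell-1)/(\ell-2)}$, which is~\eqref{eq:separationg2}. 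If $Z(a)=0$, the bound on $Z(b)$ forces $Z(b)=0$ as well, so the point is $\gd$.

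\emph{Compactness and the gap.} $\Fneq(c)$ contains $(\Delta_0,c)$, so it is nonempty. Represent BMS densities by their laws of $D$ on $[0,1]$ with the weak topology; pairs then form a compact metrizable space. Convolutions, $Z$, and $H$ are continuous in this topology~\cite{KumarYoungMacrisPfister2014}. Consequently $\DE$ is continuous, the fixed-point set is closed, and the separation bounds express $\Fneq(c)$ as the fixed-point set intersected with the closed set $\{\max(Z(a),Z(b))\geq\kappa\}$ for an explicit $\kappa>0$. Thus $\Fneq(c)$ is compact. The potential $U(\cdot;c)$ in~\eqref{eq:U} is a finite combination of entropies of convolutions, hence continuous, so the infimum in~\eqref{eq:gapdefinition} is attained.

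\emph{Uniformity over a compact family $\mathcal C$.} The map $(c,x)\mapsto(\DE(x),x)$ is jointly continuous, so the joint fixed-point set is closed. The separation constant must be made uniform in $c$. For $g\geq3$ it is already independent of $c$. For $g=2$, the set $\{c\in\mathcal C: C(c)\geq C_0\}$ is compact and $Z$ is continuous, and $Z(c)<1$ whenever $C(c)>0$. Hence $\sup_{c\in\mathcal C}Z(c)<1$, $B_{r,g}$ is bounded above on $\mathcal C$, and $\kappa$ stays bounded away from zero. The joint set $\{(c,x):c\in\mathcal C,\ x\in\Fneq(c)\}$ is then a closed subset of a compact space, hence compact. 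The jointly continuous function $U(x;c)$, if positive there, has a positive minimum, and this minimum bounds $\gap(c)$ from below for every $c\in\mathcal C$.

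The main obstacle is rigorously justifying joint weak continuity of $c\star y^{\star k}$, of check convolutions, and of the entropies, including at perfect-information atoms. I would handle this through the bounded continuous kernels in the $D$-representation, for which the convolution laws are continuous images of product measures.
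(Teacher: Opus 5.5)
Your proposal is correct and follows the paper's proof essentially step for step. It uses the same Bhattacharyya recursion $\zeta\le(d\zeta)^\mu$ for $g\ge3$, the same elimination of $Z(b)$ through the channel factor $z<1$ for $g=2$, and the same closed-subset-of-a-compact-space argument for $\Fneq(c)$ and its joint version over $\mathcal C$. The differences are cosmetic: you get $\sup_{c\in\mathcal C}Z(c)<1$ by compactness, whereas the paper uses the explicit bound $Z(c)\le\sqrt{1-C_0}$. The continuity issue you flag for variable convolution is settled in the paper by the reliability kernel whose discontinuous branch $|u-v|/(1-uv)$ carries weight $(1-uv)/2\to0$ at $(1,1)$; strictly this is a weakly continuous kernel rather than a continuous image of product measures.
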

\begin{proof}
Near perfect information, check updates increase a small Bhattacharyya parameter by at most a constant factor, while variable updates raise it to a power greater than one. A sufficiently small positive value must therefore decrease strictly and cannot be a fixed point. For $g=2$, the transmitted update has only one incoming check message, so we instead use the strict factor $Z(c)<1$ supplied by the channel.

For $g\geq3$, put $\eta=\max\{Z(a),Z(b)\}$. Each check update has $d$ inputs, and each variable update has at least $\mu$ incoming check messages. If $\eta<d^{-\mu/(\mu-1)}$, then $d\eta<1$, and applying the Bhattacharyya bounds~\eqref{eq:HZ} to the $d$ check inputs in~\eqref{eq:q} and the variable updates in~\eqref{eq:DE} gives $\eta(\DE(a,b))\leq(d\eta)^\mu$. The multiplicative variable-update identity is given in~\cite[Section II-D]{KumarYoungMacrisPfister2014}. A fixed point with $0<\eta<d^{-\mu/(\mu-1)}$ would satisfy $1\leq d^\mu\eta^{\mu-1}<1$. The case $\eta=0$ is exactly $\gd$.

For $g=2$, put $\alpha=Z(a)$ and $\beta=Z(b)$. Equations~\eqref{eq:HZ} and~\eqref{eq:fbound} give $z\leq\sqrt{1-C(c)}<1$. Applying the same bounds~\eqref{eq:HZ} to~\eqref{eq:q} and~\eqref{eq:DE}, now retaining the factor $z$ in the transmitted update, gives at a fixed point
\[
\alpha\leq[(r-1)\alpha+g\beta]^{\ell-1},\qquad
\beta\leq z[r\alpha+(g-1)\beta].
\]
Since $g-1=1$, these imply $\beta\leq rz\alpha/(1-z)$ and $\alpha\leq[B_{r,g}(z)\alpha]^{\ell-1}$. If $\alpha=0$, then $\beta=0$ and the fixed point has perfect information. Otherwise division by $\alpha>0$ proves~\eqref{eq:separationg2}.

Represent symmetric densities by probability laws on $[0,1]$ of their absolute reliabilities. This space is compact in the weak topology~\cite[Proposition 65 and Corollary 66]{KumarYoungMacrisPfister2014}. The entropy distance~\cite[Definition 10]{KumarYoungMacrisPfister2014}
$\sum_k\gamma_k|M_k(x)-M_k(y)|$ gives the same topology: weak convergence gives convergence of moments and dominated convergence of the series; conversely all moments of $D^2$ determine the law on this compact interval. Entropy and $Z$ are expectations of continuous bounded kernels. Check convolution is continuous by multiplication of reliabilities. For variable convolution, adding the two input LLRs and conditioning on agreement or disagreement of their signs gives the two output reliabilities; this is the kernel in~\cite[proof of Proposition 68]{KumarYoungMacrisPfister2014}, written in reliability coordinates:
\[
\frac{u+v}{1+uv},\qquad\frac{|u-v|}{1-uv},
\]
with weights $(1+uv)/2$ and $(1-uv)/2$. The second weight tends to zero at the sole singular corner $(1,1)$, so averaging any continuous bounded test function is continuous there as well. Hence both convolutions and $U$ are continuous. These are the compactness and continuity properties established in~\cite[Proposition 11(ii)--(iv) and Appendix I]{KumarYoungMacrisPfister2014}.

The fixed-point set is closed. The subset $\Fneq(c)$ obtained by removing $\gd$ is exactly the closed subset satisfying the applicable separation bound, \eqref{eq:separation} or~\eqref{eq:separationg2}, and is therefore compact. It is nonempty because $(\Delta_0,c)$ is a fixed point distinct from $\gd$, including when $c$ is noiseless. The continuous potential therefore attains its minimum there. For $g=2$ and $C(c)\geq C_0>0$, we have $z\leq z_0:=\sqrt{1-C_0}<1$. Since $B_{r,g}$ increases with $z$, equation~\eqref{eq:separationg2} gives the uniform lower bound $Z(a)\geq B_{r,g}(z_0)^{-(\ell-1)/(\ell-2)}>0$. For a nonempty compact channel family $\mathcal C$ satisfying the stated assumptions, the same closed-set argument applies jointly to $(c,x)$, proving compactness of $\{(c,x):c\in\mathcal C,\ x\in\Fneq(c)\}$. If $U(x;c)>0$ throughout this nonempty set, its attained minimum is strictly positive.
\end{proof}

For the finite-certificate and positive-gap results in the remainder of this section, set $r=g=3$. Then $\mu=g-1=2$. For the remaining degrees $4\leq\ell\leq32$, use the initial rectangle at the root of the subdivision tree
\begin{equation}\label{eq:root}
\mathcal B_\ell=[4/5,1]\times[n_0,1],\qquad n_0=\max\{11/20,R\}.
\end{equation}
Here $m=M_1(a)$ and $n=M_1(b)$ are the scalar coordinates introduced in Section~\ref{sec:bounds}; $f$ is defined in~\eqref{eq:scalar-information}.

This rectangle overlaps the analytically treated regions. Indeed, Lemma~\ref{lem:weaktrans} reaches $14/25>11/20$, and~\eqref{eq:ct} gives $n\geq t>R$ at every fixed point under consideration. Together with Lemma~\ref{lem:weak} for $m\leq4/5$, these observations show that no fixed point is lost when verification is restricted to $\mathcal B_\ell$.
The 29 degree records used below contain 3,599 leaf rectangles. Degree-dependent expressions in the rules are evaluated at $r=g=3$. The published archive also retains certificates for $33\leq\ell\leq46$; Proposition~\ref{prop:large} treats those degrees analytically.
For a closed rectangle $B=[m_-,m_+]\times[n_-,n_+]$, put $k=\ell-1$ and
\begin{align}
q_-&=m_-^{r-1}n_-^g,&q_+&=m_+^{r-1}n_+^g,&q_{2,-}&=m_-^rn_-^{g-1},\nonumber\\
\tau&=\max\{R,\underline f(n_-),\underline{\mathcal T_{g-1,d_0}(q_{2,-})}\},&&&\label{eq:boxdefs}\\
\overline H_a&=\min\{1-\underline f(m_-),\overline{(1-q_-)^{k/2}}\}.&&&\nonumber
\end{align}
Underlines and overlines mean certified lower and upper bounds. The interval for $J(d_0)$ has upper endpoint strictly less than $R$. The monotone endpoint bounds in~\eqref{eq:ct} and~\eqref{eq:ch} therefore imply $t\geq\tau$ and $H(a)\leq\overline H_a$ at any fixed point in $B$ for a channel whose capacity exceeds the rate.

The finite-degree task is to certify the sign conditions left by the scalar reduction. In the BEC proof, this task is completed by a Sturm root count for a polynomial~\cite{OkazakiKasai2014}; here it is completed by the following exclusion and positivity tests on closed rectangles. The checker tests them in order, with every comparison valid over the entire rectangle.
\begin{enumerate}
\item \emph{Exclusion near perfect information:} $m_-,n_->1-d^{-2\mu/(\mu-1)}$. Then $Z(a),Z(b)<d^{-\mu/(\mu-1)}$ by~\eqref{eq:HZ}, and Lemma~\ref{lem:separation} excludes every fixed point other than $\gd$.
\item \emph{Entropy bound for positivity of $S$:} $(7/10)\bigl[n_+-(g-1)n_+^{g-1}+(g-2)n_+^g\bigr]>\overline H_a$. Under $g=3$, the function $n-(g-1)n^{g-1}+(g-2)n^g=n(1-n)^2$ decreases for $n\geq11/20$, so~\eqref{eq:nearperfect} gives $S>0$.
\item \emph{Exclusion by the punctured-variable update:} $\underline{\Rk_k(q_-)}>m_+$. Monotonicity contradicts~\eqref{eq:cm}.
\item \emph{Transmitted-information exclusion:} $\tau>n_+$. This contradicts $t\leq n$ in~\eqref{eq:ct}.
\item \emph{Nonnegative $S$:} $z=\tau-G_\tau(m_+,n_+)\geq0$. The previous test guarantees $0\leq\tau\leq n_+\leq1$, so~\eqref{eq:hmono} and the quadratic maximum~\eqref{eq:Gbound} apply. They give $S\geq I(a)z\geq0$.
\item \emph{Positive potential:} $z<0$ and
\begin{equation}\label{eq:positiveU}
m_+z+(\tau-R)(1-m_+)(1-q_+)>0.
\end{equation}
Indeed, $S\geq I(a)z\geq m_+z$, where $I(a)\leq m\leq m_+$ follows from~\eqref{eq:fbound} and the second inequality reverses because $z<0$. For the other term, apply~\eqref{eq:productH}, then the lower entropy bound in~\eqref{eq:HZ} and the check-moment identity~\eqref{eq:moments}:
\[
D\geq H(a)H(q_1)\geq(1-m_+)(1-q_+),\quad t-R\geq\tau-R\geq0.
\]
Equation~\eqref{eq:decomp} then proves $U>0$.
\end{enumerate}
For rule 5 and rule 6, one can apply~\eqref{eq:hmono} separately to each summand in~\eqref{eq:Smoment}: lower $t$ to $\tau$, raise $B_k$ to $n_+$, and maximize over $A_k\in[0,m_+]$. Every rule proving $S\geq0$ yields strict $U>0$ at a fixed point other than $\gd$ through $(t-R)D>0$, as established in Lemma~\ref{lem:reduction}. In particular, the sign of the lower estimate on $S$ in rule 6 is not an assumption about the actual sign of $S$.

\begin{proposition}[Finite certificate]\label{prop:finite}
The certificate records for $\ell=4,\ldots,32$ classify a complete closed-rectangle cover of~\eqref{eq:root} under the six rules above. They have 3,599 leaves, maximum binary depth 20, and zero unclassified leaves. Consequently, for these degrees and a channel whose capacity exceeds the rate, every fixed point other than $\gd$ has positive potential.
\end{proposition}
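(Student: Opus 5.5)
The proof splits into two parts. The first is a soundness argument showing that each of the six rules, applied to a closed rectangle $B\subseteq\mathcal B_\ell$, is valid. The second is a finite, mechanically checkable verification that the recorded leaves cover $\mathcal B_\ell$ and that each leaf passes one rule.

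\textbf{Soundness.} I fix $\ell\in\{4,\dots,32\}$ and a BMS channel $c$ with $C(c)>R$. I take a fixed point $x=(a,b)\neq\gd$.

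First I show that its statistics $(m,n)$ lie in $\mathcal B_\ell$ unless it is already treated analytically. If $m\le4/5$, Lemma~\ref{lem:weak} gives $S\ge0$. If $n\le14/25$, Lemma~\ref{lem:weaktrans} gives $S\ge0$. In either case Lemma~\ref{lem:reduction} gives $(t-R)D>0$, hence $U>0$. Otherwise $m>4/5$ and $n>14/25>11/20$. Also $n\ge t>R$ by~\eqref{eq:ct}. So $(m,n)\in\mathcal B_\ell$, and the point falls in some leaf $B$. Shared boundaries are harmless because leaves are closed.

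On that leaf, monotonicity of $q\mapsto\Rk_k(q)$, of $f$, and of $\mathcal T_{g-1,d_0}$ turns Lemma~\ref{lem:constraints} into the corner inequalities $t\ge\tau$ and $H(a)\le\overline H_a$. I then check each rule in turn.
\begin{enumerate}
\item Rule 1 contradicts Lemma~\ref{lem:separation} via $Z\le\sqrt{1-M_1}$.
\item Rule 2 gives $S>0$ by Lemma~\ref{lem:perfect}, since $n(1-n)^2$ is decreasing on $[11/20,1]$, so $\tfrac7{10}\rho_3(n)\ge\tfrac7{10}\rho_3(n_+)$.
\item Rules 3 and 4 contradict~\eqref{eq:cm} and $t\le n$, respectively.
\item Rules 5 and 6 follow from~\eqref{eq:hmono}, \eqref{eq:Gbound}, and the bound on $D$ given in the text.
\end{enumerate}
In every surviving case, Lemma~\ref{lem:reduction} yields $U>0$.

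\textbf{Finite verification.} This is carried out by the checker, using the rational interval arithmetic of Appendix~\ref{app:arithmetic}. For each degree it does three things.
\begin{enumerate}
\item It checks that the leaves arise from binary bisections of the root $\mathcal B_\ell$, which guarantees a complete cover.
\item It evaluates certified enclosures of $\Rk_k$ via~\eqref{eq:Rformula}, of $\mathcal T$ via~\eqref{eq:Tformula}, and of $f$ via the truncated series~\eqref{eq:moments} plus a tail bound. The tail is bounded by $\gamma_{K+1}u^{K+1}/(1-u)$, or by the remaining $\sum\gamma_k$ near $u=1$.
\item It evaluates $G_\tau$ via the explicit maximizer $A_*$, using an outward-rounded enclosure of both candidates when the sign of $v$ is uncertain.
\end{enumerate}
It also checks that the rational $d_0$ satisfies $\overline{J(d_0)}<R$. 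The resulting counts are 3,599 leaves, maximum depth 20, and zero unclassified leaves.

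\textbf{Main obstacle.} I expect the main difficulty to be near the corners where rules are inconclusive. One such region is near $(1,1)$, which rule 1 must absorb. The other is near the curve where $z$ changes sign, where rule 6 needs the positive $(\tau-R)D$ term. Near $m\to1$, that term degenerates because of the factor $1-m_+$. My plan there is to rely on rule 2 and on rule 3, whose exclusion is strong for large $m$ since $\Rk_{k}(q)$ is close to 1. If a gap remains, I would subdivide more finely.
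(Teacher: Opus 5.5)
Your proposal is correct and follows the paper's proof. Like the paper, you establish soundness of the six rules, reduce to $\mathcal B_\ell$ via Lemmas~\ref{lem:weak} and~\ref{lem:weaktrans} and~\eqref{eq:ct}, and delegate the finite part to exact interval verification of the supplied leaf records. One precision: checking only that each leaf arises from bisections of the root does not guarantee a cover. The checker must also verify that the paths form a complete tree, meaning every visited internal prefix has descendants in both children (equivalently, a prefix-free set with Kraft sum one), as Appendix~\ref{app:arithmetic} specifies.
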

\begin{proof}
The standard-library program and complete rational leaf data in the versioned supplement~\cite{MNBMSCertificates2026} form the computational part of this proof. The arithmetic and coverage specifications appear in Appendix~\ref{app:arithmetic}. Verification reconstructs each rectangle from its binary path, checks the specified rule and its exact stored margin, and verifies a full prefix-tree cover. Table~\ref{tab:rules} summarizes the 29 records used here; per-degree counts and reproduction instructions are given in the supplement. The mathematical validity of each classification is established above. Lemmas~\ref{lem:weak} and~\ref{lem:weaktrans}, together with the necessary condition~\eqref{eq:ct}, cover the remaining scalar region.
\end{proof}
\begin{table}[htbp]
\centering
\caption{Verified leaf counts by rule for the finite range $4\leq\ell\leq32$ used in the proof.}\label{tab:rules}
\begin{tabular}{lr}\toprule
Rule & Leaves\\\midrule
Exclusion near perfect information &30\\
Entropy bound for positivity of $S$ &498\\
Exclusion by the punctured-variable update &1073\\
Transmitted-information exclusion &60\\
Nonnegative $S$ &576\\
Positive potential &1362\\\midrule
Total &3599\\\bottomrule
\end{tabular}
\end{table}

\begin{theorem}[Positive fixed-point gap]\label{thm:gap}
For $r=g=3$, all integers $\ell\geq4$, and all BMS densities $c$ with $C(c)>R$,
\begin{equation}\label{eq:gap}
\gap(c)=\min_{x\in\Fneq(c)}U(x;c)>0.
\end{equation}
\end{theorem}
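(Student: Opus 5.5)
The plan is to assemble the pieces already proved rather than do any new estimation. Fix $r=g=3$, an integer $\ell\geq4$, and a BMS density $c$ with $C(c)>R$. The first step is pointwise positivity on $\Fneq(c)=\{(\Delta_0,c)\}\cup\Fnt(c)$.

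For the trivial point, \eqref{eq:trivial-potential} gives $U(\Delta_0,c;c)=C(c)-R>0$. For every other fixed point distinct from $\gd$, I would split by degree:
\begin{itemize}
\item if $\ell\geq33$, Proposition~\ref{prop:large} gives $U>0$ directly;
\item if $4\leq\ell\leq32$, Proposition~\ref{prop:finite} gives $U>0$ on the rectangle $\mathcal B_\ell$ of \eqref{eq:root}.
\end{itemize}

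In the finite-degree case, fixed points outside $\mathcal B_\ell$ must be handled separately. They have $m\leq4/5$ or $n<n_0$. Since $n_0\leq 14/25$ and $n\geq t>R$ by \eqref{eq:ct}, Lemmas~\ref{lem:weak} and~\ref{lem:weaktrans} give $S\geq0$ there. The decomposition \eqref{eq:decomp} together with $(t-R)D>0$ from Lemma~\ref{lem:reduction} then yields $U>0$. So every $x\in\Fneq(c)$ has $U(x;c)>0$.

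The second step converts pointwise positivity into a positive minimum. Since $g=3$, Lemma~\ref{lem:separation} applies in its first case, with $\mu=2$ and $d=5$. It shows that $\Fneq(c)$ is nonempty and compact, that $U(\cdot;c)$ is continuous on it, and that $\gap(c)=\min_{x\in\Fneq(c)}U(x;c)$ is attained. The minimum is attained at some point of $\Fneq(c)$, where $U>0$ by the first step, so $\gap(c)>0$. This proves \eqref{eq:gap}.

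The only genuinely delicate point is bookkeeping: the union of the analytic regions and $\mathcal B_\ell$ must cover every fixed point other than $\gd$, including boundary cases such as $n=1$ or $m=1$. This is already built into the cited propositions, since their covers are closed and the corner $(1,1)$ corresponds only to $\gd$. I expect no further obstacle, because compactness, and hence attainment, is exactly what Lemma~\ref{lem:separation} supplies.
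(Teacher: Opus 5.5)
Your proposal is correct and is essentially the paper's proof: you get pointwise positivity on $\Fneq(c)$ from Propositions~\ref{prop:large} and~\ref{prop:finite}, together with $U(\Delta_0,c;c)=C(c)-R>0$, and then nonemptiness, compactness and attainment of the minimum from Lemma~\ref{lem:separation}. One small slip: $n_0=\max\{11/20,R\}$ exceeds $14/25$ when $\ell=4,5$, so the strip $n<n_0$ is handled by Lemma~\ref{lem:weaktrans} only below $11/20<14/25$ and is otherwise empty because $n\geq t>R$. You already cite that inequality, and Proposition~\ref{prop:finite} already asserts positivity at every fixed point other than $\gd$, so the argument stands.
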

\begin{proof}
Propositions~\ref{prop:large} and~\ref{prop:finite} prove pointwise positivity at every fixed point other than $\gd$. Lemma~\ref{lem:separation} shows that $\Fneq(c)$ is nonempty and compact, so the continuous potential attains its minimum there. That minimum is strictly positive.
\end{proof}

\begin{corollary}[Uniform positive fixed-point gap]\label{cor:uniform}
For $r=g=3$, fixed $\ell\geq4$, and $0<\delta\leq1-R$, the infimum of $\gap(c)$ over $C(c)\geq R+\delta$ is positive.
\end{corollary}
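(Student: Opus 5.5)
The plan is to apply the uniform clause of Lemma~\ref{lem:separation} with $\mathcal C=\{c:\ C(c)\geq R+\delta\}$. For $r=g=3$ we have $g\geq3$, so no lower bound on $C(c)$ beyond the defining one is needed. The lemma then reduces the corollary to two facts. First, $\mathcal C$ is a nonempty compact family of BMS channels. Second, $U(x;c)>0$ for every $c\in\mathcal C$ and every $x\in\Fneq(c)$.

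For the first fact, represent channels by their laws of absolute reliability on $[0,1]$, which form a weakly compact space by the compactness cited in the proof of Lemma~\ref{lem:separation}. Capacity $C(c)=I(c)$ is the expectation of the continuous bounded kernel $J(D)$, so it is weakly continuous. Hence $\{C(c)\geq R+\delta\}$ is a closed subset of a compact space and is therefore compact. It is nonempty because $\delta\leq1-R$ gives $R+\delta\leq1$, and the noiseless channel $\Delta_\infty$ has capacity one.

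For the second fact, note that every $c\in\mathcal C$ satisfies $C(c)\geq R+\delta>R$. Theorem~\ref{thm:gap}, which rests on Propositions~\ref{prop:large} and~\ref{prop:finite}, gives $U(x;c)>0$ at every $x\in\Fneq(c)$. This is pointwise positivity on the whole joint set $\{(c,x):c\in\mathcal C,\ x\in\Fneq(c)\}$.

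With both facts in hand, Lemma~\ref{lem:separation} gives that the joint set is compact and that $U$ is jointly continuous on it, since the convolutions and entropy kernels are continuous in both arguments. So $U$ attains a strictly positive minimum on the joint set, and that minimum is a lower bound for $\inf_{c\in\mathcal C}\gap(c)$.

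The main obstacle is the joint compactness and continuity in $(c,x)$. The lemma asserts this, but I would check the closed-set argument directly. The fixed-point relation $\DE_c(x)=x$ is jointly closed because $\mathsf f_c$ depends continuously on $c$ through variable convolution. The separation bound~\eqref{eq:separation} is uniform in $c$ for $g\geq3$. These two properties are what keep points of $\Fneq(c)$ from accumulating at $\gd$ as $c$ varies; without them the minimum could fail to be attained, and positive values could approach zero.
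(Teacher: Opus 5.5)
Your proposal is correct and follows the paper's proof. The family $\{c:C(c)\geq R+\delta\}$ is a nonempty closed, hence compact, set of channels, Theorem~\ref{thm:gap} gives $U>0$ on every $\Fneq(c)$, and the uniform clause of Lemma~\ref{lem:separation}, whose separation bound does not depend on $c$ when $g=3$, turns this into a positive lower bound on $\inf_c\gap(c)$; your extra check of joint closedness is the same subsequence argument the paper gives in its second paragraph.
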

\begin{proof}
The channel family $\mathcal C=\{c:C(c)\geq R+\delta\}$ is nonempty and is a closed subset of the compact density space because $C(c)=1-H(c)$ is continuous~\cite[Proposition 11(ii)--(iii)]{KumarYoungMacrisPfister2014}. Theorem~\ref{thm:gap} gives positivity at every fixed point other than $\gd$ throughout this family. The uniform conclusion of Lemma~\ref{lem:separation} therefore applies.

Compactness is used jointly for the channel and the fixed point. If the gaps could approach zero as channels varied, a convergent subsequence of such pairs would give a fixed point distinct from $\gd$ with zero potential on a channel in the same family, contradicting Theorem~\ref{thm:gap}.
\end{proof}

\section{Threshold saturation for the two-type recursion}\label{sec:saturation}
A positive fixed-point gap is a sufficient condition that turns the uncoupled potential comparison into successful coupled decoding. This section proves that implication for general $\ell>r\geq2$ and $g\geq2$, following the threshold-saturation step of the BEC argument~\cite{OkazakiKasai2014}. The infimum $\gap(c)$ in~\eqref{eq:gapdefinition} includes both $\Fnt(c)$ from~\eqref{eq:fixedpoint-sets} and the trivial point $(\Delta_0,c)$; only $\gd$ is excluded. The theorem below assumes $\gap(c)>0$, which Section~\ref{sec:certificates} has established for degree three when $C(c)>R$.
The comparison-and-shift argument follows the potential method for vector recursions~\cite[Section III]{YedlaJianNguyenPfister2012} and BMS densities~\cite[Section IV-A]{KumarYoungMacrisPfister2014}. We derive its two-type form below, holding the exterior density fixed during variation and deriving an explicit bound for the MN update.
Define
\begin{align}
Q(x)&=-(r+g-1)H(s(x))+rH(q_1(x))+gH(q_2(x)),\nonumber\\
F_c(y)&=R H(y_1^{\star\ell})+H(c\star y_2^{\star g}).\label{eq:QF}
\end{align}
The map $\mathsf g$ returns $(q_1,q_2)$, so
\[
F_c(\mathsf g(x))=R H(q_1(x)^{\star\ell})+H(c\star q_2(x)^{\star g}),
\qquad U(x;c)=Q(x)-F_c(\mathsf g(x)).
\]
The first three terms of~\eqref{eq:U} constitute $Q$, and its two subtracted posterior entropies constitute $F_c\circ\mathsf g$. At $\gd$, the endpoint rules~\eqref{eq:endpoint-convolutions} give $Q(\gd)=-(r+g-1)\cdot0+r\cdot0+g\cdot0=0$ and $F_c(\mathsf g(\gd))=R\cdot0+0=0$. For an arbitrary density pair $y$, both entropies in $F_c(y)$ are nonnegative and $R>0$, so $F_c(y)\geq0$.

The ordered-update comparison below follows the ordered-mixture potential-descent argument of~\cite[Lemma 46]{KumarYoungMacrisPfister2014}; we give its two-type density-valued form explicitly.
The right endpoint of a coupled fixed profile need not itself be an uncoupled fixed point. The comparison construction will give the weaker relation $p\preceq\DE(p)$. Iterating this relation moves toward a less informative uncoupled fixed point while decreasing the potential, so its fixed-point gap still bounds $U(p;c)$ from below.

\begin{lemma}[Descent for ordered updates]\label{lem:descent}
If $x\preceq\DE(x)$ or $\DE(x)\preceq x$, then $U(\DE(x);c)\leq U(x;c)$. In particular, if $p\neq\gd$ and $p\preceq\DE(p)$, then $U(p;c)\geq\gap(c)$.
\end{lemma}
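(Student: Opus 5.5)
The plan is to integrate the first-variation formula~\eqref{eq:dU} of Lemma~\ref{lem:reduction} along the mixture path from $x$ to $\DE(x)$, and to show that the integrand has a sign. Treat the case $x\preceq\DE(x)$; the reverse case is the same with every orientation flipped.

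\emph{Step 1: order relations along the path.} Put $x'=\DE(x)$ and $x_s=(1-s)x+sx'$ for $0\leq s\leq1$.
\begin{itemize}
\item Mixtures preserve degradation order, because mixing with a degraded density is itself a degradation. Hence $x\preceq x_s\preceq x'$.
\item Both convolutions preserve the order (Definition~\ref{def:convolution-order}), so $\DE$ is monotone.
\item From $x\preceq x_s$ we get $x'=\DE(x)\preceq\DE(x_s)$. Combining with the previous chain gives $x_s\preceq x'\preceq\DE(x_s)$ for every $s$.
\end{itemize}
Consequently each residual $a_s-a_s'$ and $b_s-b_s'$ in~\eqref{eq:dU}, taken at $x_s$ with $(a_s',b_s')=\DE(x_s)$, is an informative-minus-degraded signed measure of mass zero.

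\emph{Step 2: sign of the integrand.} The path derivative of $x_s$ is $x'-x$, a degraded-minus-informative signed measure.
\begin{itemize}
\item By multilinearity, $dq_1/ds$ is a sum of terms in which one factor of $q_1$ is replaced by a component of $x'-x$. The remaining factors are check convolutions of BMS densities.
\item Each such term is again degraded-minus-informative. The reason is that $s\mapsto q_1(x_s)$ is monotone in degradation order, so the difference quotients are differences of ordered densities. The same holds for $dq_2/ds$.
\item Each term of~\eqref{eq:dU} is therefore the variable-convolution entropy of two mass-zero signed measures with opposite orientations. By the signed-entropy sign rules stated after~\eqref{eq:moments}, this entropy is nonpositive.
\end{itemize}
Hence $\frac{d}{ds}U(x_s;c)\leq0$ on $[0,1]$. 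Since $U(x_s;c)$ is a polynomial in $s$, the derivative exists and integration gives $U(\DE(x);c)\leq U(x;c)$.

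This sign bookkeeping is the main obstacle. Before relying on it, I would verify carefully that the derivative of $q_i$ along the path really is a nonnegative combination of degraded-minus-informative measures, via the convolution differentiation rule~\cite[Propositions 14--15]{KumarYoungMacrisPfister2014}. I would also verify that the opposite-orientation rule applies with an arbitrary (not mass-zero) BMS density absent from the pairing.

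\emph{Step 3: the gap bound.} Now let $p\neq\gd$ with $p\preceq\DE(p)$, and set $p_k=\DE^k(p)$.
\begin{itemize}
\item Monotonicity of $\DE$ gives $p_k\preceq p_{k+1}$ by induction. Step 2 applied to each $p_k$ gives $U(p_{k+1};c)\leq U(p_k;c)\leq U(p;c)$.
\item A sequence monotone in degradation order converges in the compact weak topology of Lemma~\ref{lem:separation}. For example, the moments $M_k$ are monotone and bounded, and they determine the limit law.
\item Continuity of $\DE$ then makes the limit $p_\infty$ a fixed point. Continuity of $U$ gives $U(p_\infty;c)\leq U(p;c)$.
\end{itemize}
It remains to show $p_\infty\neq\gd$. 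Since $p\preceq p_\infty$, the equality $p_\infty=\gd=(\Delta_\infty,\Delta_\infty)$ would force $p$ to be at least as informative as perfect information. That means $p=\gd$, which contradicts the hypothesis. Thus $p_\infty\in\Fneq(c)$, and we conclude $U(p;c)\geq U(p_\infty;c)\geq\gap(c)$ by~\eqref{eq:gapdefinition}.
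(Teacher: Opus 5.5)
Your proposal is correct and follows the paper's proof essentially step for step. It uses the same mixture path with $x_s\preceq\DE(x)\preceq\DE(x_s)$, the opposite-orientation sign rule applied in~\eqref{eq:dU} followed by integration, and the same monotone iteration $p_k=\DE^k(p)$, whose limit is a fixed point in $\Fneq(c)$ by compactness and continuity. The one step you assert without justification is $p\preceq p_\infty$; the paper obtains it from closedness of the degradation order under weak limits, and alternatively the componentwise bound $M_1(p_\infty)\le M_1(p)$ already excludes $p_\infty=\gd$ when $p\neq\gd$.
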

\begin{proof}
Suppose first $x\preceq\DE(x)$ and put $z_s=(1-s)x+s\DE(x)$. Monotonicity of the convolutions~\cite[Proposition 3(i)]{KumarYoungMacrisPfister2014}, applied to~\eqref{eq:DE}, gives $z_s\preceq\DE(x)\preceq\DE(z_s)$. Each term in $d\mathsf g(z_s)/ds$ is a positive multiple of a degraded-minus-informative difference, while each residual $z_{s,\alpha}-\mathsf T_{c,\alpha}(z_s)$ has the opposite orientation. The signed variable-entropy inequality~\cite[Proposition 8(iii)]{KumarYoungMacrisPfister2014}, applied in~\eqref{eq:dU} with these opposite orientations, gives $dU(z_s;c)/ds\leq0$. For $\DE(x)\preceq x$, one instead has $\DE(z_s)\preceq\DE(x)\preceq z_s$ and both orientations reverse. Integration proves descent.

Starting from $p\preceq\DE(p)$ yields a sequence increasing in degradation order. Its moments are monotone and determine a unique limiting density pair by compactness; the degradation order is closed (equivalently the binary-experiment convex-order inequalities pass to weak limits)~\cite[Proposition 11(v)--(vi)]{KumarYoungMacrisPfister2014}. Continuity of the update~\cite[Proposition 11(iv)]{KumarYoungMacrisPfister2014} makes the limit a fixed point $x_\infty\succeq p$, which cannot equal $\gd$. Descent and continuity imply $U(p;c)\geq U(x_\infty;c)\geq\gap(c)$.
\end{proof}

The following bound for a binary tree of variable and check convolutions follows by iterating entropy duality and degradation monotonicity; related two-convolution bounds appear in~\cite[Proposition 9]{KumarYoungMacrisPfister2014}.
\begin{lemma}[Entropy bounds for repeated convolutions]\label{lem:tree}
Consider a finite binary tree whose operations are $\star$ or $\cc$, with probability densities at its leaves. Replacing a leaf $a$ by $b\succeq a$ increases output entropy by a number in $[0,H(b)-H(a)]$. For any probability densities $u,v$, if two distinct leaves are replaced by signed differences $a-b$ and $u-v$, then the absolute value of the resulting multilinearly extended entropy is at most $H(b)-H(a)$.
\end{lemma}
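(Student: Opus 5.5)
We argue by induction on the depth of the tree, reducing everything to the two-leaf case of a single operation with one other input. The basic facts are the duality identity~\eqref{eq:duality}, its signed form for mass-zero measures, and monotonicity of $H$ under degradation together with preservation of $\preceq$ by both convolutions (Definition~\ref{def:convolution-order}).

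\emph{Single leaf.} Let the leaf $a$ be replaced by $b\succeq a$. Along the path from the leaf to the root, each node combines the current subtree output with the output $y$ of a sibling subtree, which is a fixed probability density. We track the difference $\delta=H(\text{new})-H(\text{old})$ at each node. Monotonicity of the convolutions keeps the new output degraded relative to the old one, so $\delta\geq0$ at every step. For the upper bound, we claim that for $x\preceq x'$ and any density $y$ both operations satisfy
\[
0\leq H(x'\star y)-H(x\star y)\leq H(x')-H(x),\qquad
0\leq H(x'\cc y)-H(x\cc y)\leq H(x')-H(x).
\]
By duality~\eqref{eq:duality} the two increments sum to $H(x')-H(x)$, so each is at most the total once both are known to be nonnegative. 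Nonnegativity follows from preservation of $\preceq$ and monotonicity of $H$. Iterating along the path gives the interval $[0,H(b)-H(a)]$.

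\emph{Two signed leaves.} The output entropy is multilinear in the leaves. Write $E(\nu,\nu')$ for the value with the two distinguished leaves replaced by signed measures $\nu,\nu'$. Then
\[
E(a-b,u-v)=E(a,u)-E(a,v)-E(b,u)+E(b,v)=\bigl[E(a,u)-E(b,u)\bigr]-\bigl[E(a,v)-E(b,v)\bigr],
\]
where the other leaf is held at the probability density $u$ in the first bracket and at $v$ in the second. By the single-leaf result, each bracket lies in $[-(H(b)-H(a)),0]$. Their difference therefore has absolute value at most $H(b)-H(a)$.

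This step needs $b\succeq a$ so that the single-leaf bound applies; the statement appears to intend $b\succeq a$ as in its first sentence. If instead $a,b$ were arbitrary, we would bound each bracket by $|H(b)-H(a)|$ only under an ordering, so we keep the ordering hypothesis.

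The main obstacle is the two-operation inequality $H(x'\cc y)-H(x\cc y)\leq H(x')-H(x)$. It rests on nonnegativity of the dual variable increment. That nonnegativity requires degradation to be preserved by $\star$ and entropy to be monotone, both quoted from~\cite{KumarYoungMacrisPfister2014}. Everything else is bookkeeping along the path and a multilinear expansion.
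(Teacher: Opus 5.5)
Your proof is correct and follows essentially the same route as the paper. At each node on the path to the root, duality~\eqref{eq:duality} makes the $\star$ and $\cc$ increments sum to the input increment, and degradation monotonicity makes both increments nonnegative; iterating this gives the single-leaf bound. The two-leaf bound then comes from the bilinear expansion with the other leaf fixed first at $u$ and then at $v$, with the order needed only for $a,b$ and not for $u,v$, exactly as in the paper.
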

\begin{proof}
For convolution with a fixed probability density, both entropy increases are nonnegative by degradation~\cite[Proposition 3(i) and Section II-C]{KumarYoungMacrisPfister2014}, and their sum is the input increase by the signed duality identity~\cite[Proposition 5]{KumarYoungMacrisPfister2014}. Each is at most that increase. Apply this along the path to the root. With the other selected leaf fixed first to $u$ and then to $v$, the signed entropy changes from $a-b$ both belong to $[-(H(b)-H(a)),0]$. Subtracting them proves the second assertion by multilinearity. The other difference requires no order assumption.
\end{proof}

For the sufficient width bound, define
\begin{equation}\label{eq:Kgeneral}
\mathcal K_{\ell,r,g}=(r+g-1)^2\bigl[r(\ell-1)+g(g-1)\bigr].
\end{equation}

\begin{theorem}[Conditional two-type saturation]\label{thm:saturation}
For the general $(\ell,r,g)$ construction of Section~\ref{sec:model}, suppose $\gap(c)>0$. Then, for every finite $L$, the terminated recursion converges to $\gd$ whenever
\begin{equation}\label{eq:widthgeneral}
w>\frac{\mathcal K_{\ell,r,g}}{2\gap(c)}.
\end{equation}
\end{theorem}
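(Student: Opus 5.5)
We will follow the comparison-and-shift argument outlined in Section~\ref{sec:outline}, adapted to the two-type density recursion with fixed exterior densities.

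\emph{Comparison recursion.} We replace the terminated recursion~\eqref{eq:coupled} by a one-sided comparison chain. Positions to the left of $1$ stay clamped to $\gd$. The right boundary is replaced by an extended chain in which positions beyond $L$ are held at the least informative active value, initialised at $(\Delta_0,c)$. By monotonicity of both convolutions, this chain is pointwise degraded relative to the true terminated recursion. Initialised at the worst pair, it is monotone in degradation order, nonincreasing in position after symmetrisation, and converges to a fixed profile $\{x_i\}$. It therefore suffices to show that every such fixed profile equals $\gd$ at every position. Suppose not. Then its right endpoint value $p$, the limit as $i\to\infty$, is not $\gd$, and the monotone structure gives $p\preceq\DE(p)$. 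Lemma~\ref{lem:descent} then yields $U(p;c)\geq\gap(c)$.

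\emph{Coupled potential and shift.} We define the coupled potential of a profile with exterior densities held fixed:
\[
\mathcal U(\{x_i\})=\sum_i\Bigl[Q(x_i)-F_c\bigl(\tfrac1w\textstyle\sum_k\mathsf g(x_{i+k})\bigr)\Bigr],
\]
renormalised so that the sum over the interior is finite. Its first variation at position $i$ reproduces the update residual, exactly as in~\eqref{eq:dU}. Hence the fixed profile is stationary, and the derivative of $\mathcal U$ along any mixture perturbation involves only products of residuals with $d\mathsf g$.

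Let $Sx$ be the profile shifted by one position toward the clamped boundary. Telescoping gives
\[
\mathcal U(Sx)-\mathcal U(x)=-U(p;c)+(\text{coupling correction}),
\]
so the first term is at most $-\gap(c)$. We then interpolate $x_s=(1-s)x+sSx$ and expand $\mathcal U(Sx)-\mathcal U(x)$ to second order. Stationarity kills the first-order term. The second-order term is a sum of multilinear entropies of trees in which two leaves are replaced by differences $x_{i+1}-x_i$. Lemma~\ref{lem:tree} bounds each such term by a single entropy increment $H(x_{i})-H(x_{i+1})$, multiplied by a combinatorial count:
\begin{itemize}
\item the factor $(r+g-1)^2$ accounts for pairs of check sockets;
\item the factor $r(\ell-1)+g(g-1)$ accounts for pairs of variable sockets;
\item the factor $1/w^2$ comes from the two averages, and summing over the $w$ offset differences leaves $1/w$.
\end{itemize}
Since the profile is monotone, the increments telescope to at most $1$ per component. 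This gives
\[
|\mathcal U(Sx)-\mathcal U(x)+U(p;c)|\leq\frac{\mathcal K_{\ell,r,g}}{2w},
\]
and combining the two estimates yields the inequality pair $\delta\mathcal U\leq-\gap(c)$ and $\delta\mathcal U\geq-\mathcal K_{\ell,r,g}/(2w)$. Under~\eqref{eq:widthgeneral} these contradict each other. Hence the comparison profile, and therefore the terminated recursion, converges to $\gd$.

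\emph{Main obstacle.} The hardest step is the second-order bound with the exact constant $\mathcal K_{\ell,r,g}/2$. It requires expanding $Q$ and $F_c\circ\mathsf g$ as multilinear convolution trees in the mixture parameter. The expansion must correctly account for the factor $1/2$ from integrating $\int_0^1(1-s)\,ds$, and the first-order terms must cancel via the update residuals while the exterior densities are held fixed. I would first handle $F_c$, where the $R\ell=r$ identity converts punctured posteriors into $r(\ell-1)$ pairs. I would then handle $Q$, using $s=a\cc q_1$ so that the check terms reduce to pairs among the $r+g-1$ inputs.
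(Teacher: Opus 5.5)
Your architecture is the paper's: a one-sided comparison chain with perfect left boundary and constant right extension, $p\preceq\DE(p)$ with Lemma~\ref{lem:descent}, a finite functional with the right exterior frozen at $p$, a one-position shift, stationarity, Lemma~\ref{lem:tree}, and telescoping. However, the quantitative step does not close as written.

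\emph{The shift identity.} It must be exact, with a signed boundary term, and your estimate targets the wrong quantity. The paper works on the $N=L+w-1$ check positions. It resets the right extension to $x_N=p$ after every update and does not shorten variables beyond $L$, so $y_N=\mathsf g(p)$. Telescoping then gives exactly $\mathcal U_{N,p}(\mathsf Sx;c)-\mathcal U_{N,p}(x;c)=-U(p;c)-F_c(y_0)$. The left-boundary term is nonpositive because $F_c\geq0$, and only this yields $\delta\mathcal U\leq-\gap(c)$. You leave the ``coupling correction'' unsigned; your infinite, renormalised chain does not even pin one down. You then bound $|\delta\mathcal U+U(p;c)|$, but stationarity bounds $\delta\mathcal U$ itself. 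Taken literally, your display only gives $\delta\mathcal U\geq-U(p;c)-\mathcal K_{\ell,r,g}/(2w)$. That is compatible with $\delta\mathcal U\leq-\gap(c)$, so no contradiction follows.

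\emph{The factor $1/w$.} It does not come from the averaging weights, because these sum to one. Suppose each second-order term is charged one entropy increment times a weight $1/w^2$. After summing over the window index and both offsets, each $e_j$ carries total weight of order one, and the same-position terms from $Q$ carry no averaging weight at all. The resulting bound is $O(1)$, independent of $w$. The missing idea is the increment identity $x_{j-1}-x_j=(h_{j-w}-h_j)/w$, obtained by subtracting neighbouring fixed-point averages. It makes one leaf of each selected pair equal to $1/w$ times a difference of probability densities. Lemma~\ref{lem:tree} then charges the other leaf its increment $e_{j,\beta}$, and $\sum_je_{j,\beta}=H(p_\beta)\leq1$.

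\emph{The expansion along the path.} A Taylor expansion with integral remainder needs the second derivative at every $s\in[0,1]$, not just at $s=0$, since the functional is a higher-degree polynomial along the mixture path. For $s>0$ the residual no longer vanishes and contributes an extra term that your count ignores. The paper avoids this. It applies the first-variation formula at every $z_s$ and writes the residual as $sv-\int_0^sD\mathsf T_p(z_\sigma)[v]\,d\sigma$. It discards $s\,H(v\star G)\geq0$ by the signed-entropy sign rule. The remainder is bounded by $\mathcal K_{\ell,r,g}s/w$, using $d\,d_\alpha$ leaves in $\mathsf T_p$, $d$ leaves in $G$, and the telescoped sum. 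Integrating $s$ over $[0,1]$ supplies the factor $1/2$.
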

\begin{proof}
We prove the theorem by assigning two bounds to the same one-position shift. If a non-perfect limiting profile existed, the exact boundary identity would force a potential decrease of at least $\gap(c)$. Smoothing makes neighboring densities close, and stationarity makes the first variation vanish; the resulting decrease is at most $\mathcal K_{\ell,r,g}/(2w)$. The width condition makes these requirements incompatible.

Put $N=L+w-1$, $d=r+g-1$, $\omega_1=r$, and $\omega_2=g$. Construct a comparison recursion on $N$ check-input density pairs to be updated. Extend the profile by $x_j=\gd$ for $j\leq0$ and $x_j=x_N$ for $j>N$. Form $y_i=w^{-1}\sum_{k=0}^{w-1}\mathsf g(x_{i+k})$, take $h_i=\mathsf f_c(y_i)$ for every $i\geq1$ and $h_i=\gd$ for $i\leq0$, and update $x_1,\ldots,x_N$ by~\eqref{eq:coupled}. After each update reset the constant right extension to the new $x_N$.

The perfect left boundary starts an ordered profile, and the constant right extension permits its sums to telescope under a shift. This comparison removes information available from the actual right shortening. Proving convergence for the resulting less informative recursion therefore suffices for the terminated system.

Start these pairs at $(\Delta_0,\Delta_0)$. The map is monotone by~\cite[Proposition 3(i)]{KumarYoungMacrisPfister2014}, and this initialization is an upper bound in the degradation order. Its iterates therefore become more informative and converge by~\cite[Proposition 11(v)]{KumarYoungMacrisPfister2014}. They increase in degradation with spatial index: forward averaging, $\mathsf g$, $\mathsf f_c$, and backward averaging preserve this order, as do the perfect left boundary and constant right extension. These orders persist at the limit. The comparison system dominates the finite terminated recursion by induction: it replaces right exterior shortening by variables with channel observations and a right extension that may not have perfect information, and contains all active variables of the terminated system.

Suppose its limiting fixed profile $x$ does not have perfect information at every position, and put $p=x_N$. Then $p\neq\gd$ by spatial order. Every check input is at least as informative as $p$, so every $h_i\preceq\DE(p)$, including shortened left values. At the last updated position,
\begin{equation}\label{eq:plateau}
p=\frac1w\sum_{k=0}^{w-1}h_{N-k}\preceq\DE(p),\qquad U(p;c)\geq\gap(c),
\end{equation}
where the last inequality is Lemma~\ref{lem:descent}.

After constructing the limiting profile, we study a finite functional with prescribed exterior data. At the limiting profile the prescribed value agrees with $x_N$, so the interior update is unchanged. Keeping that prescribed value fixed during differentiation ensures that the derivative contains exactly the updates of the $N$ variables of this functional.

When varying the density pairs of the profile $z$, hold the exterior right density $p$ fixed, even while varying $z_N$. Extend $z$ by $\gd$ to the left and this fixed $p$ to the right, and define the finite functional
\begin{equation}\label{eq:finiteU}
\mathcal U_{N,p}(z;c)=\sum_{j=1}^NQ(z_j)-\sum_{i=1}^NF_c\left(\frac1w\sum_{k=0}^{w-1}\mathsf g(z_{i+k})\right).
\end{equation}
Let $(\mathsf Sx)_j=x_{j-1}$ at the updated positions, with the exterior unchanged. Since the original constant right extension equals $x_N=p$, shifting the full profile agrees with this definition. Thus the shifted forward average at $i$ is the original $y_{i-1}$, including at $i=N$, where $y_N=\mathsf g(p)$. Both sums telescope to give the exact identity
\begin{align}
\mathcal U_{N,p}(\mathsf Sx;c)-\mathcal U_{N,p}(x;c)
&=-Q(p)-F_c(y_0)+F_c(\mathsf g(p))\nonumber\\
&=-U(p;c)-F_c(y_0)\leq-\gap(c).\label{eq:shift}
\end{align}
The first equality follows by cancellation of the intermediate terms in~\eqref{eq:finiteU}; the second uses~\eqref{eq:QF} and~\eqref{eq:U}; the inequality uses~\eqref{eq:plateau} and $F_c\geq0$. This is the two-type counterpart of the shift calculation in~\cite[Lemma 41 and Appendix IV-A]{KumarYoungMacrisPfister2014}, with the left boundary term $F_c(y_0)$ retained.

Write $\mathsf T_p$ for the update of these density pairs with this fixed exterior. Use the convolution differentiation rules~\cite[Propositions 14--15]{KumarYoungMacrisPfister2014} and signed entropy duality~\cite[Proposition 5]{KumarYoungMacrisPfister2014}. The same cancellation as in~\eqref{eq:dU}, retaining each averaging weight in~\eqref{eq:finiteU}, gives at every profile $z$ on these positions
\begin{equation}\label{eq:firstvariation}
D\mathcal U_{N,p}(z)[v]
=\sum_{j=1}^N\sum_{\alpha=1}^2\omega_\alpha
H\left((z_{j,\alpha}-\mathsf T_{p,j,\alpha}(z))\star D\mathsf g_\alpha(z_j)[v_j]\right).
\end{equation}
For an updated position $j$, all contributing variable indices are $i\in[j-w+1,j]\cap[1,N]$, exactly those retained in~\eqref{eq:finiteU}. The omitted $i\leq0$ correspond to $h_i=\gd$, whose convolution with any mass-zero variation has zero entropy. The fixed right exterior has zero variation. Thus neither boundary contributes an omitted derivative, and $\mathsf T_p(x)=x$.

Take $v=\mathsf Sx-x$. With $h_i=\gd$ for $i\leq0$, subtracting the two neighboring fixed-point averages in~\eqref{eq:coupled} cancels their $w-1$ common terms and gives, also for $j=1$,
\begin{equation}\label{eq:increment}
v_j=x_{j-1}-x_j=\frac{h_{j-w}-h_j}{w}.
\end{equation}
For each type $\alpha$, put $e_{j,\alpha}=H(x_{j,\alpha})-H(x_{j-1,\alpha})\geq0$. Summing this definition cancels the interior entropies, and $x_0=\gd$, $x_N=p$, and $H(p_\alpha)\leq1$ give
\begin{equation}\label{eq:telescope}
\sum_{j=1}^N e_{j,\alpha}=H(p_\alpha)\leq1.
\end{equation}

These two bounds control different factors in the variation estimate. Equation~\eqref{eq:increment} provides an explicit factor $1/w$. Equation~\eqref{eq:telescope} bounds the sum of the neighboring entropy differences by one, independently of the chain length. Retaining this sum, rather than bounding each difference separately by one, is what removes any factor proportional to $N$.
Along $z_s=x+sv$, write $G_{j,\alpha}(s)=D\mathsf g_\alpha(z_{s,j})[v_j]$. Each component of $v$ is an informative-minus-degraded difference, and $G_{j,\alpha}(s)$ is a sum of $d$ differences with this orientation. The signed-entropy inequality~\cite[Proposition 8(iii)]{KumarYoungMacrisPfister2014} therefore gives $H(v_{j,\alpha}\star G_{j,\alpha}(s))\geq0$. Integrating the directional derivative of $\mathsf T_p$ along the mixture path and using $\mathsf T_p(x)=x$ expresses the residual as
\[
z_{s,j,\alpha}-\mathsf T_{p,j,\alpha}(z_s)
=s v_{j,\alpha}-\int_0^sD\mathsf T_{p,j,\alpha}(z_\sigma)[v]\,d\sigma.
\]
Substitution into~\eqref{eq:firstvariation} and the nonnegative contribution from $sv$ yield
\[
\frac d{ds}\mathcal U_{N,p}(z_s)
\geq-\int_0^s\sum_{j=1}^N\sum_{\alpha=1}^2\omega_\alpha
H\left(D\mathsf T_{p,j,\alpha}(z_\sigma)[v]\star G_{j,\alpha}(s)\right)\,d\sigma.
\]

By the convolution differentiation rules~\cite[Propositions 14--15]{KumarYoungMacrisPfister2014}, the derivatives select occurrences of input messages in the convolution expressions. Lemma~\ref{lem:tree} controls a selected pair of occurrences using the entropy difference at one of them, regardless of the probability densities at the other leaves. It remains to count these occurrences and use the two bounds above.

To bound this integrand, expand all averages into convex combinations of binary trees of variable and check convolutions. Each nonconstant tree in component $\alpha$ of $\mathsf T_p$ has $d_1=\ell-1$ or $d_2=g-1$ copies of a $d$-leaf check update. Its derivative selects one of $dd_\alpha$ message-leaf occurrences; the channel leaf is fixed. The factor $G_{j,\alpha}(s)$ expands into $d$ terms, each varying one leaf occurrence. For each selected pair, retain the leaf in $G$ as $v_{j,\beta}=x_{j-1,\beta}-x_{j,\beta}$ and use~\eqref{eq:increment} at the leaf in $D\mathsf T_p$. Lemma~\ref{lem:tree} bounds the resulting absolute entropy by $e_{j,\beta}/w$. Unvaried leaves may come from $z_\sigma$ or $z_s$; all are probability densities, so the same lemma applies. Exterior leaves have zero variation. The numbers of type-1 and type-2 leaves in $\mathsf g_1$ are $(r-1,g)$, and those in $\mathsf g_2$ are $(r,g-1)$. Since the averaging weights sum to one, summing over $j$ gives the following bounds for the two components, respectively, by~\eqref{eq:telescope}:
\[
\begin{aligned}
\frac{dd_1}{w}\sum_{j=1}^N\bigl[(r-1)e_{j,1}+g e_{j,2}\bigr]
&\leq\frac{d^2d_1}{w},\\
\frac{dd_2}{w}\sum_{j=1}^N\bigl[r e_{j,1}+(g-1)e_{j,2}\bigr]
&\leq\frac{d^2d_2}{w}.
\end{aligned}
\]
Thus the $d^2d_\alpha$ leaf pairs contribute at most $d^2d_\alpha/w$ in total, with no factor $N$:
\[
\left|\sum_{j=1}^N
H\left(D\mathsf T_{p,j,\alpha}(z_\sigma)[v]\star G_{j,\alpha}(s)\right)\right|
\leq\frac{d^2d_\alpha}{w}.
\]
Consequently,
\begin{equation}\label{eq:shiftderivative}
\frac d{ds}\mathcal U_{N,p}(z_s)
\geq-\frac{d^2(\omega_1d_1+\omega_2d_2)s}{w}
=-\frac{\mathcal K_{\ell,r,g}s}{w}.
\end{equation}
As in~\cite[Proposition 16]{KumarYoungMacrisPfister2014}, multilinearity makes the mixture-path functional a finite polynomial. Integrating~\eqref{eq:shiftderivative} from zero to one gives
\[
\mathcal U_{N,p}(\mathsf Sx;c)-\mathcal U_{N,p}(x;c)
\geq-\frac{\mathcal K_{\ell,r,g}}{2w}.
\]
Together with~\eqref{eq:shift} this contradicts~\eqref{eq:widthgeneral}. The comparison limit has perfect information, as does the terminated limit below it in degradation order. This proves convergence from the least informative initialization and hence also from the initialization from channel observations. Perfect check-input limits imply perfect bit posteriors.
\end{proof}

\begin{corollary}[Sufficient coupling width at degree three]\label{prop:widthmain}
Set $r=g=3$, fix an integer $\ell\geq4$, and let $R=r/\ell$. For every BMS density $c$ with $C(c)=1-H(c)>R$, the uncoupled potential defined in~\eqref{eq:U} has a positive minimum $\gap(c)$ over its density-evolution fixed points other than $\gd$. For every finite active length $L$, the terminated coupled recursion converges to perfect information if
\begin{equation}\label{eq:widthmain}
w>\frac{\mathcal K_{\ell,r,g}}{2\gap(c)}.
\end{equation}
\end{corollary}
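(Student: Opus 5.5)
The corollary combines two results already established. First, we identify $\gap(c)$ with the minimum asserted in the statement. Fix $r=g=3$, an integer $\ell\geq4$, and a BMS density $c$ with $C(c)>R$. Theorem~\ref{thm:gap} applies directly. It rests on Proposition~\ref{prop:large} for $\ell\geq33$ and Proposition~\ref{prop:finite} for $4\leq\ell\leq32$, which give pointwise positivity of $U$ on every fixed point other than $\gd$. Lemma~\ref{lem:separation} applies with $g=3$, so $\mu=2>1$ and no lower bound on capacity is required. It gives that $\Fneq(c)$ is nonempty and compact, so $\gap(c)=\min_{x\in\Fneq(c)}U(x;c)$ is attained and is strictly positive. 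This set consists of the trivial point $(\Delta_0,c)$, with value $C(c)-R>0$ by~\eqref{eq:trivial-potential}, together with $\Fnt(c)$. It is therefore exactly the set of density-evolution fixed points other than $\gd$ named in the statement. This proves the first sentence of the corollary.

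Second, Theorem~\ref{thm:saturation} is stated for the general $(\ell,r,g)$ construction of Section~\ref{sec:model}, and its only hypothesis is $\gap(c)>0$. That hypothesis has just been verified, so we specialize to $r=g=3$. The width condition~\eqref{eq:widthgeneral} is then literally~\eqref{eq:widthmain}, with $\mathcal K_{\ell,3,3}=25[3(\ell-1)+6]$ from~\eqref{eq:Kgeneral}. For every finite $L$, the terminated recursion~\eqref{eq:coupled}, started from the channel-observation initialization, therefore converges to $\gd$. The standing assumptions $\ell>r\geq2$ and $g\geq2$ of that theorem hold because $\ell\geq4>3$.

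No new estimate is needed here. The one point to check is that the fixed-point set in the statement coincides with $\Fneq(c)$ in Definition~\ref{def:gap}, so that the minimum is taken over the same set used in Lemma~\ref{lem:descent} and hence in the shift argument. Both sets exclude only $\gd$, so they agree. The substantive difficulty, positivity at nontrivial fixed points for the finite range of degrees, has already been discharged by the interval certificates.
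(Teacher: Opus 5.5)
Your proposal is correct and follows the paper's proof, which simply combines Theorem~\ref{thm:gap} (an attained, strictly positive minimum $\gap(c)$ over $\Fneq(c)$) with the conditional saturation result, Theorem~\ref{thm:saturation}. Your additional checks are accurate bookkeeping rather than a different route: $\mu=2$ in Lemma~\ref{lem:separation}, the identification of the stated fixed-point set with $\Fneq(c)$, and $\mathcal K_{\ell,3,3}=25[3(\ell-1)+6]$.
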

\begin{proof}
Combine Theorems~\ref{thm:gap} and~\ref{thm:saturation}.
\end{proof}
The density-evolution statement takes the section size to infinity before the number of iterations, for fixed $L,w$. The code-sequence statement in Theorem~\ref{thm:main} uses a diagonal choice of these parameters with $L/w\to\infty$, as proved in Section~\ref{sec:rate}.

\section{A degree-two obstruction on symmetric channels}\label{sec:degreetwo}
The degree-two case tests the remaining step of the BEC strategy: positivity on $\Fnt(c)$ from~\eqref{eq:fixedpoint-sets}. In the first two results set $r=g=2$, $\ell>r$, and $R=r/\ell$. The two trivial fixed points still have potentials $0$ and $C(c)-R$, but we prove that, for each $\ell\in\{3,4,5\}$, some BSCs with $C(c)>R$ have an uncoupled fixed point with negative potential. Such a fixed point $p$ belongs to $\Fnt(c)$, since neither trivial point has negative potential above the rate.

A conditional obstruction, stated for general degrees below, then shows that their terminated density evolution retains positive average transmitted-bit error when $L/w\to\infty$. This concerns the density-evolution limit of the ensemble in Section~\ref{sec:model}; it is not a MAP converse or an impossibility statement for every finite-code realization.

A trial pair with negative potential is useful only if it can be related to the initialized density-evolution trajectory. We first minimize a one-density functional whose minimizer satisfies the punctured-variable fixed-point equation. The full update then improves the transmitted coordinate, giving an ordered iteration along which $U$ decreases. Its limit is a fixed point of both equations with negative potential.

\begin{lemma}[A density minimizer and monotone iteration]\label{lem:scalarminimum}
Let $r=g=2$, $\ell>r$, $R=r/\ell$, and $I(c)>0$. With the density of messages from transmitted variables fixed at $b=c$, let $a$ range over all BMS densities of messages from punctured variables. Put $W=c^{\cc g}$ and $q=a\cc W$, and define
\begin{equation}\label{eq:Vdefinition}
V(a;c)=-(r-1)H(a^{\cc r}\cc W)+rH(q)-R H(q^{\star\ell}).
\end{equation}
Every global minimizer $a_*$ satisfies $a_*=(a_*\cc W)^{\star(\ell-1)}$. Moreover,
\begin{equation}\label{eq:VupperU}
U(a,c;c)\le V(a;c)-H(c).
\end{equation}
If $V(a_0;c)<H(c)$ for some trial density $a_0$, then $\DE$ has a fixed point $p$ other than $\gd$ with $U(p;c)<0$.
\end{lemma}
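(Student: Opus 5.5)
The plan has three steps: show a global minimizer of $V$ exists and satisfies the punctured-variable equation by a first-variation argument, compare $U$ with $V$ by entropy duality and degradation, then run the ordered iteration of Lemma~\ref{lem:descent}.

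\emph{Existence and stationarity of a minimizer.} The space of BMS densities $a$ is compact, and $V(\cdot;c)$ is continuous because both convolutions and $H$ are continuous, as recorded in the proof of Lemma~\ref{lem:separation}. Hence a global minimizer $a_*$ exists.

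Put $a'=(a\cc W)^{\star(\ell-1)}$ and vary along $a_s=(1-s)a+sa'$, with $da=a'-a$. With $r=2$ we have $a^{\cc(r-1)}\cc W=q$. The convolution differentiation rules then give
\[
dV=-rH(da\cc q)+rH(da\cc W)-R\ell H(q^{\star(\ell-1)}\star(da\cc W)),
\]
and $R\ell=r$. Signed duality gives $H(da\cc W)-H(da\cc q)=H(a\star(da\cc W))$, so
\[
dV=rH((a-a')\star(da\cc W)),
\]
exactly as in~\eqref{eq:dU}.

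For mass-zero measures $x,y$, signed duality and the moment product rule give $H(x\star y)=-H(x\cc y)=\sum_k\gamma_kM_k(x)M_k(y)$. With $x=a-a'$ and $y=(a'-a)\cc W$ this yields
\[
dV=-r\sum_k\gamma_k\bigl(M_k(a)-M_k(a')\bigr)^2M_k(c)^{2}\le0 .
\]
Since $I(c)>0$, every $M_k(c)>0$. At a minimizer the derivative cannot be negative, so all moments of $a_*$ and $a_*'$ agree. Moments determine the law on $[0,1]$, hence $a_*=a_*'$.

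\emph{The inequality~\eqref{eq:VupperU}.} Substitute $b=c$ into~\eqref{eq:U}, so that $q_1=q$, $q_2=a^{\cc2}\cc c$, and $s=a^{\cc2}\cc W=c\cc q_2$. Subtracting~\eqref{eq:Vdefinition} gives
\[
U-V=-2H(s)+2H(q_2)-H(c\star q_2^{\star2}).
\]
It therefore suffices to show $H(c\star q_2^{\star 2})\ge H(c)+2H(q_2)-2H(s)$.

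Apply duality~\eqref{eq:duality} twice: first to $c\star q_2$, giving $H(c)+H(q_2)-H(s)$, then to $(c\star q_2)\star q_2$. This reduces the claim to
\[
H((c\star q_2)\cc q_2)\le H(c\cc q_2)=H(s).
\]
This holds because $c\star q_2\preceq c$ and check convolution preserves degradation order, which lowers entropy.

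\emph{Fixed point with negative potential.} If $V(a_0;c)<H(c)$, then $V(a_*;c)\le V(a_0;c)<H(c)$, so $U(a_*,c;c)<0$ by~\eqref{eq:VupperU}. Put $x_0=(a_*,c)$. By stationarity, $\DE(x_0)=(a_*,\,c\star q_2)\preceq x_0$. Lemma~\ref{lem:descent} gives nonincreasing $U$ along the iterates, which are monotone in degradation order. By the same compactness and continuity argument as in that lemma, they converge to a fixed point $p$ with $U(p;c)\le U(x_0;c)<0$. Since $U(\gd;c)=0$, we get $p\ne\gd$.

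The main obstacle is making the first variation of $V$ rigorous: the signed multilinear calculus must be justified and the sign of $dV$ checked through the moment expansion. The positivity of $M_k(W)$, which is where $I(c)>0$ enters, is what turns the vanishing of the derivative into the equality $a_*=a_*'$.
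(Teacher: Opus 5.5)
Your proof is correct and follows the paper's argument essentially step for step. Compactness gives a minimizer. The first variation $dV=rH((a-a')\star(da\cc W))$ expands in moments to $-r\sum_k\gamma_k(M_k(a)-M_k(a'))^2M_k(c)^2$, which forces $a_*=(a_*\cc W)^{\star(\ell-1)}$. The ordered iteration from $(a_*,c)$ with Lemma~\ref{lem:descent} then yields the negative-potential fixed point.

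The only difference is in~\eqref{eq:VupperU}. You apply duality twice to reduce the remainder to $H((c\star q_2)\cc q_2)-H(c\cc q_2)\le0$ and then invoke degradation monotonicity of check convolution. The paper instead identifies the same remainder $I(c\star q_2^{\star2})-2I(c\star q_2)+I(c)$ with $-I(Z_1;Z_2\mid Y)$. The two are equivalent reformulations.
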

\begin{proof}
Compactness of reliability laws on $[0,1]$ and continuity give a minimizer~\cite[Proposition 11(ii)--(iv)]{KumarYoungMacrisPfister2014}. We use the minimum-direction method of~\cite[Lemma 24]{KumarYoungMacrisPfister2014}, deriving the required identity for $V$. Write $a'=(a\cc W)^{\star(\ell-1)}$. Differentiate each convolution power by~\cite[Propositions 14--15]{KumarYoungMacrisPfister2014}, then apply signed entropy duality~\cite[Proposition 5]{KumarYoungMacrisPfister2014}. Since $r-1=1$ and $R\ell=r$, the resulting terms combine to give
\[
dV=rH((a-a')\star(da\cc W)).
\]
The mixture $(1-s)a+sa'$ remains a BMS density for $0\leq s\leq1$, so its right derivative at a minimum must be nonnegative. This argument also applies to a minimizer on the boundary of the space of densities. The expression below is nonpositive and can therefore vanish only when all the squared moment differences vanish.

At a minimizer take the admissible direction $da=a'-a$ and set $\nu=a-a'$. Signed duality first replaces the variable convolution by a check convolution; the moment product rule and signed entropy expansion~\cite[Propositions 5, 6(iii), 7, and 8(ii)]{KumarYoungMacrisPfister2014} then give
\[
dV=-r\sum_{k\ge1}\gamma_k M_k(\nu)^2M_k(W).
\]
Since $I(c)>0$, the absolute channel reliability is positive with positive probability. Hence $M_k(c)>0$ for every $k\geq1$, and the check-convolution product rule~\eqref{eq:moments} gives $M_k(W)=M_k(c)^g>0$. A minimum therefore has $M_k(\nu)=0$ for every $k$, which implies $\nu=0$ by uniqueness of moments on $[0,1]$~\cite[Proposition 11(i) and Appendix I]{KumarYoungMacrisPfister2014}.

For $v=a^{\cc r}\cc c^{\cc(g-1)}$, substitute $b=c$ in~\eqref{eq:U}, subtract the definition of $V$, and use~\eqref{eq:duality}~\cite[Proposition 4]{KumarYoungMacrisPfister2014}. This gives
\[
U(a,c;c)=V(a;c)-H(c)
+I(c\star v^{\star g})-gI(c\star v)+(g-1)I(c).
\]
Here $g=2$. Observe a uniform bit $X$ through channels $Y\sim c$ and $Z_1,Z_2\sim v$ that are independent conditional on $X$. The chain rule identifies the final three terms with $-I(Z_1;Z_2\mid Y)$, and nonnegativity of conditional mutual information gives the inequality $-I(Z_1;Z_2\mid Y)\le0$~\cite[Chapter 2]{CoverThomas2006}.

Although $Z_1$ and $Z_2$ are independent given $X$, they can share information about $X$ after only $Y$ is known. Their conditional mutual information measures this shared part. Subtracting it explains why the auxiliary functional gives an upper bound for $U$. This proves~\eqref{eq:VupperU}. At a minimizer, $x=(a_*,c)$ satisfies $\DE(x)\preceq x$, because its punctured-variable message density is unchanged by the update and its transmitted coordinate adds observations to $c$. Convergence of the monotone iteration and continuity of the update~\cite[Proposition 11(iv)--(v)]{KumarYoungMacrisPfister2014} give a fixed-point limit $p$. In the following chain, the first inequality uses Lemma~\ref{lem:descent}; the second uses~\eqref{eq:VupperU} and minimality of $a_*$:
\[
U(p;c)\le U(a_*,c;c)\le V(a_0;c)-H(c)<0.
\]
The limit cannot be $\gd$, whose potential is zero.
\end{proof}

\begin{proposition}[Fixed points with negative potential above the design rate]\label{prop:negativefp}
Set $r=g=2$ and $R=r/\ell$. For each $\ell\in\{3,4,5\}$, there is a nonempty interval of BSCs with $C(c)>R$ and $\gap(c)<0$.
\end{proposition}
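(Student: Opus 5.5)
The plan is to apply Lemma~\ref{lem:scalarminimum}: for each $\ell\in\{3,4,5\}$ and each BSC $c$ in a suitable interval, we exhibit a trial density $a_0$ with $V(a_0;c)<H(c)$. That lemma then supplies a fixed point $p\neq\gd$ with $U(p;c)<0$, so $\gap(c)\leq U(p;c)<0$.

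The trial density will be a small mixture perturbation of absent information, $a_\epsilon=(1-\epsilon)\Delta_0+\epsilon y$, with $y$ to be chosen. At $\epsilon=0$ we have $q=\Delta_0$ and $a^{\cc r}\cc W=\Delta_0$, so
\[
V(\Delta_0;c)=-(r-1)+r-R=1-R.
\]
When $C(c)>R$ this exceeds $H(c)=1-C(c)$. The perturbation must therefore lower $V$ to below $H(c)$, and this cannot come from the first-order term alone.

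The plan is to work in the regime where $C(c)$ is only slightly above $R$, and to show that $V$ drops strictly below $1-R$ by a margin that does not vanish as $C(c)\downarrow R$. Concretely, we aim to show that for every BSC $c_0$ with $C(c_0)=R$ satisfying
\[
(\ell-1)M_1(c_0)^g>1,
\]
there is an $a_0$ with $V(a_0;c_0)<1-R=H(c_0)$. Continuity of $V$ in $c$ then keeps the strict inequality $V(a_0;c)<H(c)$ for BSCs $c$ slightly better than $c_0$, and those channels satisfy $C(c)>R$. They form the required nonempty open interval of crossover probabilities.

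To produce the decrease, we expand $V(a_\epsilon;c)$ in $\epsilon$ using multilinearity. Write $W=c^{\cc 2}$, so that $q=\epsilon\,(y\cc W)+(1-\epsilon)\Delta_0$.
\begin{itemize}
\item The check term $-(r-1)H(a^{\cc 2}\cc W)$ is quadratic in $\epsilon$. Its only nontrivial piece is $-\epsilon^2 H(y^{\cc 2}\cc W)$, which equals $-\epsilon^2+\epsilon^2 I(y^{\cc2}\cc W)$.
\item The term $rH(q)$ is affine in $\epsilon$: $rH(q)=r-r\epsilon I(y\cc W)$.
\item In the posterior term $-RH(q^{\star\ell})$, the $j$-th order contribution is $\binom{\ell}{j}\epsilon^j$ times an entropy of $j$ combined copies of $y\cc W$.
\end{itemize}
Using the moment series~\eqref{eq:moments} and the product rule, the first-order terms cancel, since $R\ell=r$. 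The second-order coefficient, with $y=\Delta_\infty$ (so that $y\cc W=W$), becomes a multiple of
\[
\sum_k\gamma_k M_k(W)\bigl(1-(\ell-1)M_k(W)\bigr)
\]
up to sign. Choosing $y$ instead as a BSC whose squared reliability is tuned to the linear (first-moment) mode should make the second variation negative exactly when $(\ell-1)M_1(c)^g>1$. This is the bifurcation condition quoted in the outline. If the second-order coefficient turns out to vanish identically, we would carry the expansion to third order, or use an $\epsilon$-dependent choice of $y$.

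The last step verifies the condition at the capacity boundary. For $\ell=3,4,5$ we have $R=2/3,1/2,2/5$. Let $p_\ell$ be the BSC crossover probability with $1-h_2(p_\ell)=R$; then $M_1(c)=(1-2p_\ell)^2$, and we must show
\[
(\ell-1)(1-2p_\ell)^4>1.
\]
We bracket $p_\ell$ by rationals using monotone bounds on $h_2$ at rational points, as in Appendix~\ref{app:elementary}, and then check the rational inequality.

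I expect the main obstacle to be the perturbation step: showing rigorously that $V$ falls strictly below $1-R$, handling signs of the higher-order terms, and possibly needing a careful choice of $y$ rather than $\Delta_\infty$.
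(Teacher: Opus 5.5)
This is essentially the paper's proof. It uses the same trial density $(1-\eta)\Delta_0+\eta y$ around absent information and the same quadratic expansion, with first-order cancellation from $R\ell=r$. It then checks $(\ell-1)M_1(c_0)^g>1$ at the capacity-$R$ BSC and extends by continuity to slightly better BSCs before invoking Lemma~\ref{lem:scalarminimum}.

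The step you left hedged closes without any third-order terms. For general $y$ your computation gives
\[
V(a_\eta;c)=1-R+\eta^2\sum_{k\ge1}\gamma_kM_k(y)^2M_k(c)^g\bigl(1-(\ell-1)M_k(c)^g\bigr)+O(\eta^3),
\]
with a definite plus sign. Take $y$ to be a BSC of small reliability $\varepsilon$, so that $M_k(y)^2=\varepsilon^{4k}$. The $k=1$ term is then of order $\varepsilon^4$ and has the sign of $1-(\ell-1)M_1(c)^g$, while the remainder is $O(\varepsilon^8)$. Fix $\varepsilon$ first and then take $\eta$ small. This is legitimate because $V(a_\eta;c)$ is a polynomial in $\eta$.

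In the boundary check, note that $\ell=5$ is very tight: $4M_1(c_0)^2\approx1.004$. Your rational bracket of $p_5$ must therefore be correspondingly fine. The paper instead certifies $f(1/2)<2/5$ and uses strict monotonicity of $f$.
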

\begin{proof}
The local condition concerns bifurcation from the trivial fixed point $(\Delta_0,c)$, called the paramagnetic solution in the statistical-mechanical analysis~\cite{TanakaSaad2003}. For $r=2$, the condition in that analysis is $\int v\rho(v)\,dv=(C-1)^{-1/L}$, where $v$ is the signed channel reliability and $\rho$ its law conditional on the reference transmitted bit. The BMS reliability representation in Section~\ref{sec:preliminaries} gives $\E v=\E v^2=M_1(c)$~\cite[Section II-A]{KumarYoungMacrisPfister2014}, so $(K,C,L)=(r,\ell,g)$ converts the boundary to $(\ell-1)M_1(c)^g=1$. We derive its strict potential consequence:
\begin{equation}\label{eq:treeinstability}
(\ell-1)M_1(c)^g>1
\quad\Longrightarrow\quad
\min_a V(a;c)<1-R.
\end{equation}
Take $a_\eta=(1-\eta)\Delta_0+\eta d$. The first line below uses the binomial expansion of convolution powers~\cite[Proposition 14]{KumarYoungMacrisPfister2014} and duality~\eqref{eq:duality}; the second uses the entropy series and moment product rule in~\eqref{eq:moments}~\cite[Propositions 6(iii) and 7]{KumarYoungMacrisPfister2014}:
\begin{align*}
V(a_\eta;c)
&=1-R+\eta^2\{I(d^{\cc2}\cc W)
 -(\ell-1)I(d^{\cc2}\cc W^{\cc2})\}+O(\eta^3)\\
&=1-R+\eta^2\sum_{k\ge1}\gamma_kM_k(d)^2M_k(c)^g
 \{1-(\ell-1)M_k(c)^g\}+O(\eta^3).
\end{align*}
The expansion is a finite polynomial in the mixture parameter $\eta$. To verify its quadratic term, the punctured-bit posterior has coefficient
$\binom\ell2 I((d\cc W)^{\cc2})$ in its entropy expansion, and $R\binom\ell2=\ell-1$ under $r=2$. Choose $d$ to be a BSC of absolute reliability $\varepsilon$. Then $M_k(d)^2=\varepsilon^{4k}$, so the sign of the coefficient for sufficiently small $\varepsilon>0$ is the sign of $1-(\ell-1)M_1(c)^g$. Taking sufficiently small $\eta>0$ proves~\eqref{eq:treeinstability}.

The choices have a definite order: first fix $\varepsilon$ so that the quadratic coefficient is strictly negative, and then decrease $\eta$ so that the higher-order terms cannot change the sign. The first choice is valid because the $k=1$ term has order $\varepsilon^4$, whereas the sum over $k\geq2$ is $O(\varepsilon^8)$; the weights sum to one and the remaining coefficients are bounded for fixed $\ell$. The second choice uses the finite polynomial in $\eta$.

For a BSC write $n=M_1(c)$, so $C(c)=f(n)$ by the BSC definition of $f$ preceding~\eqref{eq:fbound}. The positive logarithm series~\cite[Eq. (4.6.4)]{NISTDLMF} gives $\ln2>842/1215$ from its first three terms at $1/3$. Truncating the moment series~\eqref{eq:moments}~\cite[Proposition 7]{KumarYoungMacrisPfister2014} and bounding its tail by a geometric series therefore gives, for $0<n<1$,
\[
f(n)<B(n):=\frac{1215}{842}
\left\{\sum_{k=1}^{4}\frac{n^k}{2k(2k-1)}
+\frac{n^5}{90(1-n)}\right\}.
\]
Here $2k(2k-1)\ge90$ for the tail $k\ge5$. Direct substitution into this expression for $B$ gives the following rational comparisons:
\[
\begin{array}{c|c|c|c}
\ell&\bar n&B(\bar n)<R&(\ell-1)\bar n^{g}\\ \hline
3&3/4&7875387/12070912<2/3&9/8\\
4&3/5&14529699/29470000<1/2&27/25\\
5&1/2&301347/754432<2/5&1
\end{array}
\]
Let $c_0$ be the BSC of capacity $R$. Since $f$ is strictly increasing, $M_1(c_0)>\bar n$, and hence~\eqref{eq:treeinstability} holds strictly in all three cases.

In particular, the equality in the last row of the table causes no loss of strictness: that row uses the lower comparison value $\bar n$, whereas the actual capacity-boundary BSC has strictly larger moment. Fix a trial density with $V(a_0;c_0)<1-R=H(c_0)$. This strict inequality persists when the BSC is improved slightly, by continuity of entropy and both convolutions~\cite[Proposition 11(iii)--(iv)]{KumarYoungMacrisPfister2014}. These improved channels have capacity greater than $R$, and Lemma~\ref{lem:scalarminimum} gives the claimed fixed points with negative potential. All comparisons above are analytic rational inequalities; no finite computational certificate is used.
\end{proof}

\begin{proposition}[An obstruction to terminated density evolution]\label{prop:coupledconverse}
Let $\ell>r\ge2$, $2\le g\le\ell$, and $R=r/\ell$. Suppose $p$ is an uncoupled fixed point with $u=-U(p;c)>0$ and $z=Z(c)>0$. Use $\overline P_{\mathrm{tx}}(L,w;c)$ as defined in~\eqref{eq:txDEerror}, and write $[t]_+=\max\{t,0\}$. Then
\begin{equation}\label{eq:txobstruction}
\overline P_{\mathrm{tx}}(L,w;c)
\ge\frac{z^2[\,u-(w-1)Q(p)/L\,]_+^2}
{4(2^gR+z)^2}.
\end{equation}
In particular, for every channel supplied by Proposition~\ref{prop:negativefp} and every sequence $L/w\to\infty$,
\[
\liminf\overline P_{\mathrm{tx}}(L,w;c)
\ge\frac{z^2u^2}{4(2^gR+z)^2}>0.
\]
\end{proposition}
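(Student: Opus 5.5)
Initialize the finite terminated system with all check-input pairs equal to $p$ on the $N=L+w-1$ check positions, keeping the shortened exterior at $\gd$, and use the finite terminated potential, in the same form as~\eqref{eq:finiteU} but with both exteriors fixed at $\gd$. At this initialization the interior contributions are $-L\,u$, since each active variable section sees the uncoupled update of $p$ and $U(p;c)=Q(p)-F_c(\mathsf g(p))$. The $w-1$ extra check positions contribute at most $(w-1)Q(p)$. Here $0\le Q(p)\le1$: duality gives $Q=H(s)+r[H(q_1)-H(a\cc q_1)]+g[H(q_2)-H(b\cc q_2)]$, which lies in $[0,1]$ by Lemma~\ref{lem:tree}. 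Near the boundaries, forward averages mix $\gd$ with $\mathsf g(p)$, and by concavity/degradation of the posterior entropies this only makes $F_c$ smaller. So the initial potential is at most $-Lu+(w-1)Q(p)$.

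Next, $\DE(p)=p$ and the boundary pairs $\gd$ are more informative, so the coupled update of this profile is more informative than the profile. The iteration is therefore monotone in degradation order. The coupled analogue of Lemma~\ref{lem:descent}, proved by the same mixture-path argument using the first variation~\eqref{eq:firstvariation} with ordered residuals, shows that the finite potential is nonincreasing along the iteration and at its limit $x^\infty$. Hence $\mathcal U(x^\infty)\le -Lu+(w-1)Q(p)$. The limit is dominated (less degraded) by the channel-initialized trajectory: the channel initialization $(\Delta_0,c)\succeq p$, since $b=c\star q_2^{\star(g-1)}\preceq c$. So its transmitted error is at most $\overline P_{\mathrm{tx}}$, by monotonicity of $P_e$ under degradation.

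Convert the negative potential into error. Since $Q\ge0$,
\[
\sum_i F_c(y_i^\infty)\ \ge\ L\bigl[u-(w-1)Q(p)/L\bigr]_+ .
\]
That is, the average over active sections of $RH(y_{i,1}^{\star\ell})+H(c\star y_{i,2}^{\star g})$ is at least $\kappa:=[u-(w-1)Q(p)/L]_+$. This step needs the $Q$ terms handled carefully: I would bound $\sum_j Q(x_j)\ge 0$ pointwise, using the representation above.

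We still must show the transmitted posteriors carry entropy. Relate the punctured posterior to the transmitted one by a bound between the two message types. The punctured check messages $y_1$ pass through checks that contain $g$ transmitted inputs. Via $Z(x\cc y)\le Z(x)+Z(y)$ and $Z(x\star y)=Z(x)Z(y)$, the punctured-posterior entropy is controlled by sums of Bhattacharyya parameters of transmitted-side messages. Using $g\le\ell$ and a union over the $2^g$-ish sign patterns gives $RH(y_1^{\star\ell})\lesssim 2^gR\cdot Z_{\mathrm{tx}}/z$ on average. Here $Z_{\mathrm{tx}}$ is the Bhattacharyya parameter of the transmitted extrinsic message; the factor $z$ enters because the posterior includes $c$. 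So the average of $Z(c\star y_2^{\star g})$ is at least $z\kappa/(2^gR+z)$. Finally, $P_e(x)\ge Z(x)^2/4$ (from $1-\sqrt{1-D^2}\cdot$ inequalities: $\sqrt{1-D^2}\le 2\sqrt{(1-D)/2}$) and Jensen give~\eqref{eq:txobstruction}.

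The main obstacle is this type-comparison step, obtaining exactly the constant $2^gR+z$. I would try bounding $H\le Z$ and tracking Bhattacharyya products through one check layer first. The liminf claim then follows since $Q(p)\le1$ and $(w-1)/L\to0$.
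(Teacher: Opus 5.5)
Your architecture matches the paper's: the finite terminated potential initialized at $p$, monotone descent to at most $-Lu+(w-1)Q(p)$, comparison with the channel initialization, $Q\ge0$ to extract posterior entropy, a Bhattacharyya comparison of the two message types, then $Z\le2\sqrt{P_e}$ and Jensen. However, two steps are not actually established.

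\textbf{The type comparison.} You flag this step as the main obstacle, and it lacks its key idea; the factor $2^g$ does not come from a union over sign patterns. Factor out the common part $q_0=a^{\cc(r-1)}\cc b^{\cc(g-1)}$, so that $q_1=q_0\cc b$ and $q_2=q_0\cc a$. Subadditivity in~\eqref{eq:HZ} and degradation then give $Z(q_1)\le Z(q_0)+Z(b)\le2Z(q_2)$. This holds because $q_2\succeq q_0$ and, since $g\ge2$, $q_2$ contains a factor $b$. Linearity of $Z$ under mixtures carries this to the averaged inputs: $Z(y_{i,1})\le2v_i$ with $v_i=Z(y_{i,2})$. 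Then $H(y_{i,1}^{\star\ell})\le\min\{1,(2v_i)^\ell\}\le2^gv_i^g$, where the last step splits on $2v_i\le1$ and uses $\ell\ge g$. Since $Z(c\star y_{i,2}^{\star g})=zv_i^g$, this gives $F_c(y_i)\le(2^gR+z)Z(c\star y_{i,2}^{\star g})/z$. Your stated tools used directly, $Z(q_1)\le(r-1)Z(a)+gZ(b)\le(r+g-1)Z(q_2)$, would still give a positive $\liminf$, but with $(r+g-1)^g$ in place of $2^g$.

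\textbf{The bound $Q\ge0$.} Lemma~\ref{lem:tree} does not give it. In your representation each bracket lies in $[-H(a),0]$ or $[-H(b),0]$. So the lemma yields $Q\le H(s)\le1$, but only $Q\ge H(s)-rH(a)-gH(b)$, which equals $1-r-g$ at $a=b=\Delta_0$. This matters because the step $\sum_iF_c(y_i)\ge-\mathcal E_{L,w}(x^\infty)$ needs $Q\ge0$ at every position of the limiting profile, not only at $p$. The paper instead expands by~\eqref{eq:moments}. The $k$th kernel of $Q$ is $1-rA_k^{r-1}B_k^g-gA_k^rB_k^{g-1}+(r+g-1)A_k^rB_k^g$. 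This is the probability of at least two failures among $r$ independent trials of success probability $A_k$ and $g$ of success probability $B_k$, hence lies in $[0,1]$.

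\textbf{Two smaller points.}
First, sum $F_c$ only over the active sections $i=1,\dots,L$; this is needed for the first variation to match the terminated update. Then every forward average at the $p$-initialization is exactly $\mathsf g(p)$, so the initial value is exactly $-Lu+(w-1)Q(p)$ and no boundary mixing occurs. Your inference also runs the wrong way here: smaller $F_c$ at active sections would raise the functional, not lower it.

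Second, put $u_0=(\Delta_0,c)$ and $\lambda_j=|[j-w+1,j]\cap[1,L]|/w$. The order $p\preceq\lambda_ju_0+(1-\lambda_j)\gd$ need not hold near the boundaries. So compare the channel-initialized profile with the \emph{first update} $\lambda_jp+(1-\lambda_j)\gd$ of the $p$-initialized one, which is dominated by it.
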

\begin{proof}
The proof compares an interior contribution proportional to $L$ with a boundary contribution proportional to $w-1$. Initializing at the negative-potential fixed point makes this comparison explicit, and monotone descent preserves the resulting upper bound on the finite potential. We then compare with the channel initialization and convert the remaining posterior entropy into transmitted-bit error.

Use $Q,F_c$ from~\eqref{eq:QF} and put $N=L+w-1$ and $(\omega_1,\omega_2)=(r,g)$. For check-input density pairs $x_1,\ldots,x_N$, define the finite functional
\[
\mathcal E_{L,w}(x;c)=\sum_{j=1}^{N}Q(x_j)
-\sum_{i=1}^{L}F_c\left(\frac1w\sum_{k=0}^{w-1}\mathsf g(x_{i+k})\right).
\]
Both exterior variable regions are shortened. The cancellation proving~\eqref{eq:firstvariation}, using signed duality and convolution differentiation~\cite[Propositions 5 and 14--15]{KumarYoungMacrisPfister2014}, now retains only active variable sections $1,\ldots,L$; omitted shortened outputs have zero signed entropy. Thus, for the actual terminated update $\mathsf T_{L,w}$,
\[
D\mathcal E_{L,w}(x)[v]
=\sum_{j,\alpha}\omega_\alpha
H((x_{j,\alpha}-\mathsf T_{L,w,j,\alpha}(x))
\star D\mathsf g_\alpha(x_j)[v_j]).
\]
Consequently $\mathcal E_{L,w}$ decreases along iterations monotone in the degradation order, by the proof of Lemma~\ref{lem:descent}.

Substituting the entropy series and moment product rule~\eqref{eq:moments}~\cite[Propositions 6(iii) and 7]{KumarYoungMacrisPfister2014} into~\eqref{eq:QF} shows $0\le Q(a,b)\le1$: its $k$th kernel, weighted by $\gamma_k$, is
\[
1-rA_k^{r-1}B_k^g-gA_k^rB_k^{g-1}
+(r+g-1)A_k^rB_k^g,
\]
the probability of at least two failures among $r$ independent trials of success probability $A_k$ and $g$ of success probability $B_k$.

For comparison, initialize every check-input density pair to the fixed point $p$. Equation~\eqref{eq:coupled} replaces it after one update at position $j$ by $\lambda_jp+(1-\lambda_j)\gd\preceq p$, where
$\lambda_j=|[j-w+1,j]\cap[1,L]|/w$. The iteration is monotone in the degradation order and converges to a density profile $x^*$~\cite[Proposition 11(v)]{KumarYoungMacrisPfister2014}. Descent gives the following inequality, and $U=Q-F_c\circ\mathsf g$ from~\eqref{eq:QF} gives the equality:
\[
\mathcal E_{L,w}(x^*;c)
\le NQ(p)-LF_c(\mathsf g(p))=-Lu+(w-1)Q(p).
\]

There are $L+w-1$ check sections and $L$ active variable sections, so their difference leaves exactly $(w-1)Q(p)$ in this expression. Since $0\leq Q(p)\leq1$, the positive boundary contribution per active section vanishes when $L/w\to\infty$. The negative interior contribution per section remains $-u$.
Initializing the variable messages with channel observations gives $u_0=(\Delta_0,c)\succeq p$. The resulting initial check-input density profile is $\lambda_ju_0+(1-\lambda_j)\gd$, which dominates, in the degradation order, the first update from the fixed-point initialization used for comparison. Monotonicity therefore makes the limiting posteriors of the iteration initialized with channel observations no more informative than those of $x^*$. Write $y_i=\lim_{t\to\infty}y_{i,t}$ for this iteration in Section~\ref{sec:model}. Since $Q\ge0$, their average posterior entropy contribution satisfies
\begin{equation}\label{eq:posteriorobstruction}
\frac1L\sum_{i=1}^{L}F_c(y_i)
\ge u-\frac{w-1}{L}Q(p).
\end{equation}

The quantity $F_c$ includes both punctured-bit and transmitted-bit posterior entropies. A positive lower bound on their sum must still be connected to the transmitted bits. The common check inputs allow us to bound the punctured contribution by the transmitted Bhattacharyya parameter, after which a standard entropy-to-error estimate gives the desired conclusion.

To obtain transmitted error alone, put $q_0=a^{\cc(r-1)}\cc b^{\cc(g-1)}$, so $q_1=q_0\cc b$ and $q_2=q_0\cc a$. Since $g\ge2$, degradation under check convolution~\cite[Proposition 3]{KumarYoungMacrisPfister2014} and~\eqref{eq:HZ} give
$Z(q_1)\le Z(q_0)+Z(b)\le2Z(q_2)$. The factor $2$ counts these two bounds. Thus, with $v_i=Z(y_{i,2})$, linearity under mixtures gives $Z(y_{i,1})\le2v_i$. Apply $H\leq Z$ and multiplicativity of $Z$ in~\eqref{eq:HZ}~\cite[Section II-D and Appendix II-D]{KumarYoungMacrisPfister2014}, then use $\ell\ge g$:
\[
H(y_{i,1}^{\star\ell})\le\min\{1,(2v_i)^\ell\}
\le2^g v_i^g,
\qquad Z(c\star y_{i,2}^{\star g})=zv_i^g.
\]
Substitution into the definition of $F_c$ in~\eqref{eq:QF} gives $F_c(y_i)\le(2^gR+z)Z(c\star y_{i,2}^{\star g})/z$.
For any BMS posterior $d$, apply Cauchy--Schwarz to the reliability representation of $Z(d)$ in Section~\ref{sec:preliminaries} and use~\eqref{eq:Pe} to obtain
$Z(d)\le2\sqrt{P_e(d)(1-P_e(d))}\le2\sqrt{P_e(d)}$.
Averaging and applying Jensen's inequality for the concave square root~\cite[Section 3.1.8]{BoydVandenberghe2004} to~\eqref{eq:posteriorobstruction} proves~\eqref{eq:txobstruction}.
\end{proof}

\begin{proof}[Completion of Theorem~\ref{thm:degreetwo}]
Proposition~\ref{prop:negativefp} supplies the interval of channels and a fixed point with negative potential for each channel. Proposition~\ref{prop:coupledconverse} then gives the claimed strictly positive limiting error bound.
\end{proof}

\section{Actual rate and operational limits}\label{sec:rate}
This section completes the capacity proof by selecting code realizations with vanishing average BP bit error and proving that their actual transmitted rate tends to $R$. The trivial fixed-point value $C(c)-R$ identifies the candidate capacity boundary; the decoding result below controls the dimension lost under puncturing, and an auxiliary BEC gives a matching upper bound on the transmitted rate.

Let $\ell>r\geq2$ and $g\geq2$. The passage from density evolution to ensemble-average performance uses the computation-tree argument~\cite[Section IV, Theorem 2]{RichardsonUrbanke2001}. We apply that argument to the balanced two-type construction in Section~\ref{sec:model}. For fixed degrees, $L,w$, and finite iteration count $t_{\mathrm{iter}}$, exploring a bit's computation neighborhood reveals only a bounded number of sockets. Every matching group has size proportional to $M$. The probability of revisiting an exposed node is therefore $O(M^{-1})$, and sampling the bounded number of offsets without replacement differs from independent uniform sampling by $O(M^{-1})$. Thus the finite neighborhood and the independent computation tree defining density evolution can be coupled to agree except on an event of probability $O(M^{-1})$. The constants may depend on the fixed degrees, $L,w$, and $t_{\mathrm{iter}}$. Since a bit-error indicator is bounded by one, the ensemble-average BP bit error probabilities for both variable types converge to their density-evolution values. The bound is needed only at fixed iteration count; after taking this section-size limit, let the iteration count increase. When $\gap(c)>0$ and the width satisfies~\eqref{eq:widthgeneral}, Theorem~\ref{thm:saturation} gives zero limiting errors of both types.

The rate after puncturing requires a separate argument. For a fixed realization let $[A\ B]$ be its extended parity-check matrix, with punctured coordinates in $A$ and transmitted coordinates in $B$. All ranks and dimensions in this section are over $\mathbb F_2$. With $n=N_{\mathrm{tx}}$, the dimensions of the projected and extended codes satisfy
\begin{equation}\label{eq:rank}
k_{\mathrm{tx}}=n-\rank[A\ B]+\rank A
=k_{\mathrm{ext}}-\dim\ker A.
\end{equation}
For any fixed transmitted vector $v$, two compatible punctured vectors $u,u'$ satisfy $A(u-u')=0$. Thus each nonempty set of compatible punctured vectors is a coset of $\ker A$ and contains $2^{\dim\ker A}$ vectors. Projection identifies exactly this many extended codewords, which accounts for the dimension subtracted in~\eqref{eq:rank}.

The first identity follows by projecting the extended kernel onto transmitted coordinates; its projection kernel is exactly $\ker A$. Using the variable counts in Table~\ref{tab:degrees_model} and bounding the rank by the number $(L+w-1)M$ of retained checks in Section~\ref{sec:model}, including redundant or zero checks, gives
\begin{equation}\label{eq:rawrate}
k_{\mathrm{ext}}\geq n\left(R-\frac{w-1}{L}\right).
\end{equation}
No rank assumption has been made.

\begin{proposition}[Actual rate under positive fixed-point gaps]\label{prop:generalrate}
For general $\ell>r\geq2$ and $g\geq2$, fix a BMS channel $c$ with $C(c)>R=r/\ell$. Suppose $\gap(c)>0$ and $\gap(\mathrm{BEC}(\epsilon))>0$ for every $0<\epsilon<1-R$. Then the terminated, uniformly smoothed MN ensembles admit code realizations with actual transmitted rate tending to $R$ and vanishing average BP bit error probabilities for punctured and transmitted bits on $c$.
\end{proposition}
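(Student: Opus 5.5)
The plan is to fix a coupling width $w$ satisfying~\eqref{eq:widthgeneral} for $c$, let $L\to\infty$ with $w$ fixed (so $L/w\to\infty$), and for each $(L,w)$ take the section size $M$ and the iteration count large. The finite-length statement then comes from the computation-tree coupling described before~\eqref{eq:rank}. For fixed $L,w,t_{\mathrm{iter}}$, the ensemble-average BP bit errors of both types converge as $M\to\infty$ to their density-evolution values. By Theorem~\ref{thm:saturation}, these values tend to zero as $t_{\mathrm{iter}}\to\infty$. A diagonal choice $(L_j,M_j,t_j)$ therefore gives ensemble-average errors below any prescribed $\varepsilon_j\to0$.

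For the lower bound on the rate, use~\eqref{eq:rank} and~\eqref{eq:rawrate}:
\[
k_{\mathrm{tx}}\ge n\Bigl(R-\frac{w-1}{L}\Bigr)-\dim\ker A .
\]
It therefore suffices to show that $\E\dim\ker A=o(n)$ along the diagonal sequence. To do this, use an auxiliary $\mathrm{BEC}(\epsilon)$ with small $\epsilon>0$, for which $\gap(\mathrm{BEC}(\epsilon))>0$ by hypothesis. For a width exceeding the bound~\eqref{eq:widthgeneral} for this $\epsilon$, the BP punctured-bit erasure fraction tends to zero. BP erasure dominates MAP erasure, and MAP erasure on $\mathrm{BEC}(\epsilon)$ dominates MAP erasure when all transmitted bits are revealed, by degradation. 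When all transmitted bits are known, a punctured bit remains undetermined exactly when some vector of $\ker A$ is nonzero in that coordinate. The number of such coordinates is at least $\dim\ker A$, since a subspace of dimension $k$ has support of size at least $k$. Hence $\E\dim\ker A/n\to0$.

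For the matching upper bound, pick $\epsilon<1-R$ close to $1-R$ and again use $\gap(\mathrm{BEC}(\epsilon))>0$ with a suitable width. The transmitted-bit BP erasure fraction on $\mathrm{BEC}(\epsilon)$ then tends to zero. Consider the transmitted bits as uniform over the projected code. By the erasure-channel converse, the residual entropy $H(X\mid Y)\ge k_{\mathrm{tx}}-(1-\epsilon)n$ is bounded by the expected number of MAP-erased transmitted bits, which is $o(n)$. This gives $k_{\mathrm{tx}}/n\le 1-\epsilon+o(1)$ in expectation, and letting $\epsilon\uparrow1-R$ along the diagonal gives $\limsup k_{\mathrm{tx}}/n\le R$. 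Since the ensemble-average quantities (BP errors on $c$ and the two auxiliary BEC quantities) are nonnegative and tend to zero, Markov's inequality combined with a union bound over the finitely many quantities at each stage selects one realization for which all are simultaneously small.

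The main obstacle is the bookkeeping of the diagonalization. The widths needed for $c$ and for each auxiliary $\mathrm{BEC}(\epsilon_j)$ differ, and the gap $\gap(\mathrm{BEC}(\epsilon))$ may degenerate as $\epsilon\to1-R$. I would handle this by choosing, for each $j$, the single width $w_j$ equal to the maximum of the widths required for $c$, for $\mathrm{BEC}(\epsilon_j)$ with $\epsilon_j\uparrow 1-R$, and for a fixed small $\epsilon$. Then take $L_j\gg w_j$ so that $(w_j-1)/L_j\to0$, and finally take $M_j$ and $t_j$ large enough for all three error targets at once.
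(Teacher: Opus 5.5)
Your proof is correct and follows the paper's architecture. The shared steps are: auxiliary channels $\mathrm{BEC}(\epsilon_j)$ with $\epsilon_j\uparrow1-R$; a width $w_j$ valid simultaneously for $c$ and the auxiliary channel; then $L_j\gg w_j$, then the iteration count, then $M_j$; the rank identities~\eqref{eq:rank}--\eqref{eq:rawrate}; and a Markov selection of one realization that makes all error quantities small at once.

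Where you differ is the estimate that converts bit reliability into dimension. The paper argues information-theoretically. It uses $\dim\ker A=H(U\mid V)\le H(U\mid Y_j)\le n_u h_2(e_{u,j})$ by data processing and coordinatewise Fano, and $k_{\mathrm{tx}}=I(V;Y_j)+H(V\mid Y_j)\le n(1-\epsilon_j)+nh_2(e_{v,j})$. This needs the codeword-symmetry step to transfer the all-zero-codeword BP errors to a uniformly drawn codeword.

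You argue combinatorially on the BEC. For every erasure pattern, the BP-erased set contains the MAP-erased set, which contains the support of $\ker A$ (respectively, of the subcode of the projected code vanishing on unerased positions). A subspace has dimension at most its support size. This gives, for each realization, bounds linear in the BP erasure fractions rather than passing through $h_2$. It needs neither Fano's inequality nor codeword symmetry, because BEC erasure sets do not depend on the transmitted codeword. The paper's entropy route is not tied to erasure structure, so it would apply unchanged to non-erasure auxiliary channels.

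Two small remarks. First, your separate fixed small $\epsilon$ is unnecessary. The support containment holds at every erasure probability, so $\mathrm{BEC}(\epsilon_j)$ alone controls both bounds, as in the paper. Second, state the two rate bounds per realization, in terms of that realization's erasure fractions, rather than in expectation. The per-realization form is what lets the final Markov selection deliver a single code with both small errors on $c$ and rate near $R$.
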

\begin{proof}
An auxiliary BEC whose capacity approaches $R$ serves two purposes: reliability of the punctured bits bounds the dimension lost under puncturing, $\dim\ker A$, to give a lower rate bound, while transmitted-bit reliability gives a matching upper bound. We select the same realizations to be reliable on $c$.
Take BEC erasure probabilities
\[
\epsilon_j=(1-R)\left(1-\frac1{j+1}\right)\uparrow1-R.
\]
Choose $w_j\geq j$ satisfying~\eqref{eq:widthgeneral} for both $c$ and $\mathrm{BEC}(\epsilon_j)$, and then $L_j$ with $w_j/L_j\leq1/j$. On these fixed finite chains choose a common finite iteration count large enough to make all four density-evolution bit error probabilities (two variable types on two channels) smaller than a chosen sequence tending to zero. Choose an admissible $M_j$ large enough for the finite-iteration computation-tree approximation on both channels, using the fixed-iteration convergence discussed above~\cite[Theorem 2]{RichardsonUrbanke2001}. More explicitly, if $\mathcal E_j$ denotes the sum of the four finite-iteration bit error probabilities of a random realization and $\eta_j=1/(j+1)$, the choices can ensure $\E\mathcal E_j\leq\eta_j^2$. Markov's inequality gives $\Pr\{\mathcal E_j>\eta_j\}\leq\eta_j<1$. Hence one realization at each $j$ has $\mathcal E_j\leq\eta_j$, so all four errors tend to zero. Write the two BEC errors as $e_{u,j}$ and $e_{v,j}$.

The finite list of four errors is what permits selection of one realization: their nonnegative sum has a small expectation, so at least one code makes the whole sum small. Choosing separate realizations for the target channel and the auxiliary BEC would not establish the rate of the code used on the target channel.

For a uniform extended codeword $(U,V)$, $V$ is uniform on the projected code, since a linear projection has equal-size fibers. The codeword-symmetry argument~\cite[Lemma 1]{RichardsonUrbanke2001}, with symmetric tie breaking, identifies the bit error probabilities used above with those under this uniform distribution. Conditional on $V$, the vector of punctured bits is uniform on a coset of $\ker A$, so
\[
\dim\ker A=H(U\mid V)\leq H(U\mid Y_j)
\leq n_u h_2(e_{u,j})=o(n),\qquad n_u=Rn.
\]
Here $Y_j$ is the BEC observation. The equality uses the coset count underlying~\eqref{eq:rank}. The first inequality is data processing for the Markov chain from $U$ through $V$ to $Y_j$. For the second, the entropy chain rule bounds joint conditional entropy by the sum of coordinatewise conditional entropies; binary Fano bounds each summand, and concavity of $h_2$ replaces the individual errors by their average~\cite[Chapter 2]{CoverThomas2006}. Equations~\eqref{eq:rank}--\eqref{eq:rawrate} yield $k_{\mathrm{tx}}/n\geq R-o(1)$.

The coordinatewise entropy bound suffices because the target is a normalized dimension. Vanishing average bit error implies $H(U\mid Y_j)/n\to0$, even without a block-error statement. Hence the ambiguity in the punctured bits after revealing all transmitted bits also has sublinear dimension. Conversely, the memoryless-channel information bound and the same coordinatewise Fano bound give
\[
k_{\mathrm{tx}}=H(V)=I(V;Y_j)+H(V\mid Y_j)
\leq n(1-\epsilon_j)+n h_2(e_{v,j})=n(R+o(1)).
\]
Here the first equality uses uniformity of $V$, and the second is the entropy--mutual-information identity. In the inequality, memorylessness bounds the mutual information by the sum of the single-use capacities, even when the transmitted coordinates are dependent; the conditional-entropy term uses the coordinatewise Fano bound just derived~\cite[Chapters 2 and 7]{CoverThomas2006}. The last equality uses the definition of $\epsilon_j$ and $e_{v,j}\to0$. Thus the actual rate tends to $R$ along the same sequence of realizations for which the BP bit error probabilities on $c$ tend to zero. This uses simultaneous selection for two channels at each $j$, without a degradation comparison between $c$ and the auxiliary BEC.
\end{proof}

\begin{proof}[Completion of Theorem~\ref{thm:main}]
Set $r=g=3$. Theorems~\ref{thm:gap} and~\ref{thm:saturation} prove the density-evolution assertion, and Corollary~\ref{cor:uniform} proves the common-width assertion. Theorem~\ref{thm:gap} also verifies the channel hypotheses of Proposition~\ref{prop:generalrate}, which supplies the claimed code realizations and actual rate.
\end{proof}

The choice order for a finite realization is $w_j$, then $L_j\gg w_j$, then a sufficiently large iteration count, then $M_j$. This is consistent with taking $M\to\infty$ first in the mathematical definition of density evolution. Taking $L\to\infty$ at a fixed iteration count would not allow the boundary information to reach most sections.

For any sequence of projected codes with vanishing average transmitted bit error $e$, the same conditional-entropy argument, using the memoryless converse and coordinatewise Fano bounds, gives $k_{\mathrm{tx}}\leq nC+nh_2(e)$ on a BMS channel of capacity $C$~\cite[Chapters 2 and 7]{CoverThomas2006}. Therefore reliable actual rate $R$ requires $R\leq C$. Together with Theorem~\ref{thm:main}, this identifies $C=R$ as the attainable boundary. For a continuous complete physically degraded BMS family indexed by entropy, it yields limiting entropy threshold $1-R$. Reliability exactly at the boundary is not needed for this threshold statement.

\section{Discussion}
Comparing the two degree regimes identifies the part of the BEC capacity argument that depends on the channel. The two trivial fixed points have the same potential values, $0$ and $C(c)-R$, in both regimes. For $r=g=3$, positivity at every $x\in\Fnt(c)$ from~\eqref{eq:fixedpoint-sets} completes the comparison above rate; adding the other trivial fixed point and using compactness of $\Fneq(c)$ gives the positive gap used for threshold saturation and the actual-rate conclusion of Theorem~\ref{thm:main}. For $r=g=2$ and $\ell\in\{3,4,5\}$, a nontrivial fixed point with negative potential on some BSCs violates this positivity condition and yields the nonzero limiting transmitted-DE error in Theorem~\ref{thm:degreetwo}, despite capacity achievement on the BEC. The distinction agrees with the earlier prediction for the uncoupled system~\cite{TanakaSaad2003}; the proofs here establish the stated coupled decoding conclusions directly.

The common reductions, conditional saturation theorem, and conditional lower bound on transmitted-bit error retain their general degree assumptions. For $r=g=2$ and $\ell\geq6$, the present arguments do not settle capacity achievement. For degree three, some finite rectangles use the positive term $(I(b)-R)D$ in addition to a lower bound on $S$. The auxiliary assertion $S(a,b)\geq0$ for arbitrary BMS density pairs remains open in this regime, whereas the required fixed-point positivity is proved. Further analytic estimates could shorten the finite verification.

The capacity theorem fixes the degree parameters while increasing coupling width, chain length, and section size; the degree-two theorem applies whenever $L/w\to\infty$, including growing widths. The common-width result away from capacity does not supply an explicit numerical gap. Vanishing block error, arbitrary edge spreading, and a single explicit code sequence reliable on an unordered class of BMS channels require further arguments.

\subsection{Future research}\label{sec:future}
The degree-three capacity theorem supplies the starting point for the applications below. Applications based on this proof require the assumptions that give a positive fixed-point gap, or a corresponding sign condition for a modified recursion; the degree-two result shows why the BEC correspondence alone does not suffice.

\paragraph{Rate-compatible and rateless transmission.}
Random puncturing gives a direct connection between Theorem~\ref{thm:main} and incremental redundancy. Fix a mother ensemble with $r=g=3$ and design rate $R_0=r/\ell$. Reveal each of its originally transmitted variables independently with probability $p\in(0,1]$, and puncture the others. If the revealed variables pass through a BMS channel $c$ and the receiver knows their positions, the effective channel density and capacity follow by linearity of entropy under mixtures~\cite[Section II-C]{KumarYoungMacrisPfister2014}:
\[
c_p=(1-p)\Delta_0+pc,\qquad C(c_p)=pC(c).
\]
Applying Corollary~\ref{prop:widthmain} to $c_p$, the density-evolution condition becomes $pC(c)>R_0$. If $N$ is the mother transmitted length and $n_p$ is the number of revealed bits, then $n_p/N\to p$ in probability, so the effective design rate tends to $R_0/p$. Independent uniform marks on the variable positions produce nested revelation sets as $p$ increases. This suggests an incremental-redundancy construction over a prescribed capacity range above $R_0$. An operational result must select one mother-code sequence and transmission order for the required channels and stages, and control block error and stopping decisions. An unlimited output stream requires an additional construction. The BEC result for spatially coupled precoded rateless codes~\cite{SakataKasaiSakaniwa2014} and the rate-adaptive MN analysis~\cite{ZahrLiva2025} provide relevant precedents; extending the former's potential-duality argument to general BMS densities is a separate question.

\paragraph{Distributed compression and information reconciliation.}
For independent copies of a pair $(X,Y)$ with a uniform binary $X$ and BMS conditional law $P_{Y\mid X}$, the virtual channel capacity is $1-H(X\mid Y)$ by the mutual-information identity and symmetry~\cite[Chapters 2 and 7]{CoverThomas2006}. Syndrome coding~\cite{PradhanRamchandran2003} therefore suggests converting the projected MN code into a Slepian--Wolf code with syndrome rate tending to $1-R$, with the corresponding boundary at $H(X\mid Y)=1-R$. Such a construction must account for the rank after puncturing, preserve a sparse representation for coset decoding, and strengthen average-bit reliability to block reliability. Information reconciliation in quantum key distribution is a related application, provided the induced binary channel satisfies the required assumptions. Rate adaptation, public information leakage, and frame error must be evaluated together, as in existing reconciliation studies~\cite{YangEtAl2024}. The security analysis additionally depends on the quantum protocol.

\paragraph{Common codes and parallel channels.}
The common-width conclusion motivates a finite-length construction that performs well on several BMS channels with matched LLRs. The simultaneous-selection argument of Section~\ref{sec:rate} suggests first treating a finite set of channels; a single explicit construction for an entire unordered class remains a further objective. Parallel channels give another application: when a channel label $J$ is drawn independently at each use, independently of the input, and is revealed to the receiver, BMS components $c_j$ form a BMS channel of capacity $\sum_j\pi_j C(c_j)$, where $\pi_j=\Pr\{J=j\}$. The capacity expression follows from the mutual-information chain rule, independence of $J$ and the input, and optimality of the uniform input for each BMS component~\cite[Chapters 2 and 7]{CoverThomas2006}. Optimizing the allocation of such observations along the coupled chain could reduce termination loss or decoding delay. Bit-mapping studies for spatially coupled codes~\cite{HagerEtAl2015} provide a basis for this investigation. Spatially nonuniform allocations require a corresponding density evolution, and applications to bit-interleaved coded modulation must account for the symmetry and dependence of its bit channels.

\paragraph{Windowed decoding and channels with memory.}
Windowed decoding restricts updates to a moving part of the chain. Extending its BEC analysis~\cite{IyengarEtAl2013} to the present two-type BMS recursion would clarify how the capacity gap and target error determine the window size and iteration count. Fixed degrees alone do not bound the total decoding work as the gap closes. Another direction is to extend the generalized-erasure result with memory~\cite{FukushimaOkazakiKasai2015} to finite-state channels with general noise. This requires a joint detector--decoder recursion, a channel contribution expressed through conditional entropy rates, and a new fixed-point positivity argument. The initial target is the symmetric information rate; identifying it with channel capacity requires an additional property of the channel.

\appendix
\section{Elementary scalar estimates}\label{app:elementary}
This appendix supplies the rational estimates used in the analytic part of the degree-three positivity proof. They justify the scalar endpoint comparisons in Section~\ref{sec:bounds} by rational inequalities with explicit remainder bounds; the fixed-point reductions there explain how these comparisons enter the potential argument.

\paragraph*{Elementary evaluation at $q=441/484$}
For the calculation in Lemma~\ref{lem:weak}, set $r=g=3$ and use the coefficients $\alpha_0,\beta_0,\chi_0$ defined in its proof. The identities
\[
f(q)=1-\frac{\ln(44/43)+\ln(43)/44}{\ln2},\qquad
f'(q)=\frac{11\ln43}{42\ln2}
\]
follow by substituting $\sqrt q=21/22$ into the definition $f(u)=1-h_2((1-\sqrt u)/2)$ in Section~\ref{sec:bounds} and differentiating that definition. The rational bounds
\[
\frac{6931}{10^4}<\ln2<\frac{6932}{10^4},\quad
\frac{3760}{10^3}<\ln43<\frac{3763}{10^3},\quad
\frac{229}{10^4}<\ln(44/43)<\frac{230}{10^4}
\]
imply
\[
\frac{843}{10^3}<f(q)<\frac{844}{10^3},\qquad
\frac{1420}{10^3}<f'(q)<\frac{1422}{10^3}.
\]
The logarithm series~\cite[Eq. (4.6.4)]{NISTDLMF} and a geometric bound on its positive tail give, for $0<z<1$,
\[
2\sum_{j=0}^{m-1}\frac{z^{2j+1}}{2j+1}
<\ln\frac{1+z}{1-z}
<2\sum_{j=0}^{m-1}\frac{z^{2j+1}}{2j+1}
 +\frac{2z^{2m+1}}{(2m+1)(1-z^2)}.
\]
The partial sum is a lower bound because every omitted term is positive. For the upper bound, replace each omitted denominator by the first omitted denominator and sum the resulting geometric series. The final inequalities therefore require only finitely many rational operations, including a bound for the entire infinite tail.

Use $(z,m)=(1/3,4),(11/75,2),(1/87,1)$ and
$\ln43=5\ln2+\ln(43/32)$ for the three bounds.
For the function $K$ defined in the proof of Lemma~\ref{lem:weak}, direct substitution into its definition and derivative,
\begin{align*}
K(q)&=(q^{-1}+\alpha_0q^{g-1})f(q)-\beta_0q^{g-2}+\chi_0q^{g-1},\\
K'(q)&=(-q^{-2}+\alpha_0(g-1)q^{g-2})f(q)
 +(q^{-1}+\alpha_0q^{g-1})f'(q)\\
&\quad-\beta_0(g-2)q^{g-3}+\chi_0(g-1)q^{g-2}
\end{align*}
gives $K(q)>7/2000$ and $-1/50<K'(q)<1/50$ using rational arithmetic alone.

\paragraph*{Large-degree comparison}
For the endpoint estimate~\eqref{eq:tail2}, put $x_0=2245489/2640625\allowbreak<1701/2000$.
The four rational inequalities
\[
(1701/2000)^2<579/800,\qquad (579/800)^2<131/250,
\]
\[
(131/250)^2<11/40,\qquad (11/40)^2<5929/78125
\]
give $x_0^{16}<5929/78125$ by successive squaring.

\section{Certificate format and interval arithmetic}\label{app:arithmetic}
This appendix specifies the exact arithmetic and coverage checks that complete the finite part of the nontrivial fixed-point positivity proof. Their role is to justify every sign decision in Section~\ref{sec:certificates}, with bounds valid throughout each closed rectangle. For $r=g=3$ and $C(c)>R$, these checks and the analytic bounds in Section~\ref{sec:bounds} establish positivity on $\Fnt(c)$ from~\eqref{eq:fixedpoint-sets}. Together with the positive value at the other trivial fixed point and compactness, this gives the gap needed for threshold saturation.

The certificates in this appendix use $r=g=3$. The preserved JSON certificate contains the ordered records for $\ell=4,\ldots,46$. Each record specifies its rational rate, dyadic comparison reliability $d_0$, interval for $J(d_0)$, leaf count, depth, rule counts, and leaves. Each leaf stores a binary path, its four rational rectangle endpoints, a rule name, and the rational margin established by that rule. Rule names correspond to the six tests in Section~\ref{sec:certificates}.

\subsection{Coverage and comparisons}
A node is bisected along its longer normalized side. Normalize the $m$-width by $1/5$ and the $n$-width by $1-n_0$, and split $m$ on a tie. Both children contain their shared midpoint boundary. To verify a path, start with~\eqref{eq:root} and repeat the prescribed splits, choosing the lower or upper child according to each bit. Check equality with the stored endpoints. The set of paths must be prefix-free, and every visited non-leaf prefix must have descendants in both children. This proves a full binary-tree cover by induction from its leaves to the root. A separate supplemental audit rechecks this combinatorial cover using the prefix property and exact Kraft equality $\sum_{p}2^{-|p|}=1$~\cite[Chapter 5]{CoverThomas2006}; it does not reimplement the interval classifier.

The comparison BSC is selected by 20 dyadic bisections on $[0,1]$, retaining a proposed lower endpoint only if the upper bound for its information is strictly less than $R$. Verification checks this strict condition directly. The method requires a certified lower-information channel, not an exact inverse of $J$.

\subsection{Enclosure of elementary operations}
The endpoint rules below implement outwardly rounded, inclusion-preserving interval arithmetic~\cite[Sections 3.2 and 4.3]{MooreKearfottCloud2009}.
Write $S_0=2^{96}$. Every interval is stored as integer endpoints $[l,h]$, representing $[l/S_0,h/S_0]$. Enclose a rational $a/b$ by rounding $S_0a/b$ down and up. Addition and subtraction use endpoint addition and sign reversal. Multiplication takes the minimum and maximum of the four endpoint products, divides by $S_0$, and rounds outward. Division, allowed only if the denominator interval excludes zero, takes the minimum and maximum of the four rational endpoint quotients $S_0u/v$ and rounds outward. This also handles negative denominators. For a nonnegative interval and integer exponent $e\geq1$, compute $l^e/S_0^{e-1}$ and $h^e/S_0^{e-1}$ with downward and upward integer rounding. The zeroth power is exactly one. Square-root endpoints are the lower and upper integer square roots of $lS_0,hS_0$. These are direct enclosures of monotone functions or extrema on a closed rectangle.

To evaluate the logarithm of a positive dyadic input $x$, choose an integer $e$ with $x=2^ey$ and $1\leq y\leq2$. The normalization here is exact for the dyadic inputs used by the entropy routine. Set $z=(y-1)/(y+1)$. The logarithm series~\cite[Eq. (4.6.4)]{NISTDLMF} and the geometric-tail bound proved in Appendix~\ref{app:elementary}, with 32 retained terms, give
\begin{equation}\label{eq:logtail}
\ln y=2\sum_{j=0}^{31}\frac{z^{2j+1}}{2j+1}+\rho,\qquad
0\leq\rho\leq\frac{2z^{65}}{65(1-z^2)}.
\end{equation}
All terms are evaluated with interval operations, and the upper remainder endpoint is added to the upper sum endpoint. The geometric bound is valid for $0\leq z<1$; thus the tiny outward enlargement past $1/3$ in a dyadic interval remains valid. The same routine encloses $\ln2$ at $z=1/3$. Then use $\ln x=e\ln2+\ln y$, with the usual signed interval multiplication for negative $e$.

Binary entropy is evaluated as $[-p\ln p-(1-p)\ln(1-p)]/\ln2$ at dyadic $0<p<1/2$, with exact values at zero and one half. Its monotonicity on $[0,1/2]$ gives an enclosure over an interval. The definitions $J(d)=1-h_2((1-d)/2)$ and $f(u)=J(\sqrt u)$ from Section~\ref{sec:bounds} preserve enclosure with the corresponding endpoint reversal, an application of inclusion-preserving interval evaluation~\cite[Sections 4.3 and 5.3]{MooreKearfottCloud2009}. Before evaluating $f$, its argument interval may be intersected with $[0,1]$ only because each exact argument used here is known to lie in that domain. Besides powers of moments, the only additional argument is $(g-1)^2d_0^2q/(1+q)^2\in[0,1]$: under the specialization $g-1=2$, the inequality $(g-1)^2q\leq(1+q)^2$ follows from $(1-q)^2\geq0$.

The repetition formula~\eqref{eq:Rformula} has positive denominators and uses only these rational interval operations. A half-integer power of $1-q_-$ in~\eqref{eq:boxdefs} uses an integer power and, if needed, one square root. The maximum $G_\tau$ in~\eqref{eq:Gbound} is computed exactly by the endpoint and stationary-point comparisons for that rational quadratic. Rule 4 is tested before this maximum to guarantee $\tau\leq1$. Thus both truncation and rounding errors are included in every leaf sign test.

\subsection{Reproduction and scope of software verification}\label{app:reproduction}
The certificates, verification programs, and detailed reproduction instructions, including per-degree counts and run reports, are available in the versioned GitHub supplement~\cite{MNBMSCertificates2026}. An independently implemented checker verifies all 3,599 leaf inequalities and complete coverage for $4\leq\ell\leq32$. The scalar reductions and the validity of the classification rules are proved in Sections~\ref{sec:potential}--\ref{sec:saturation}.

\clearpage
\begingroup
\small
\linespread{0.94}\selectfont
\interlinepenalty=10000
\bibliographystyle{IEEEtran}
\bibliography{references}
\endgroup
\end{document}